\documentclass[12pt,a4paper]{article}

\newif\ifJRSSBreview      % double-spaced review format
\JRSSBreviewtrue         % <-- JRSSB submission (double-spaced)
\usepackage[a4paper,margin=1in]{geometry}

\usepackage{setspace}
\ifJRSSBreview
\else
\fi

\usepackage{enumitem}
\setlist[itemize]{topsep=2pt,parsep=0pt,partopsep=0pt,itemsep=1pt,leftmargin=*}
\setlist[enumerate]{topsep=2pt,parsep=0pt,partopsep=0pt,itemsep=1pt,leftmargin=*}

\usepackage{titling}
\usepackage[T1]{fontenc}
\usepackage{lmodern}
\usepackage{microtype}

\usepackage{amsmath,amssymb,amsthm}
\usepackage{mathtools}
\usepackage{bm}
\usepackage{bbm}

\numberwithin{equation}{section}

\usepackage{graphicx}
\usepackage{booktabs}
\usepackage{array}

\usepackage[font=small,labelfont=bf]{caption}
\usepackage[colorlinks=true,linkcolor=blue,citecolor=blue,urlcolor=blue]{hyperref}

\usepackage[round,authoryear]{natbib}
\usepackage{algorithm}
\usepackage{algpseudocode}

\usepackage{etoolbox}
\AtBeginEnvironment{table}{\singlespacing\small}
\AtBeginEnvironment{figure}{\singlespacing\small}
\AtBeginEnvironment{algorithm}{\singlespacing\small}
\AtBeginEnvironment{thebibliography}{\singlespacing\small}

\theoremstyle{plain}
\newtheorem{theorem}{Theorem}[section]
\newtheorem{proposition}[theorem]{Proposition}
\newtheorem{lemma}[theorem]{Lemma}
\newtheorem{corollary}[theorem]{Corollary}

\theoremstyle{definition}

\newtheorem{example}{Example}

\theoremstyle{remark}
\newtheorem{remark}[theorem]{Remark}

\newcommand{\1}{\mathbbm{1}}
\newcommand{\R}{\mathbb{R}}

\newif\ifJRRBSsub
\ifJRSSBreview
  \JRRBSsubtrue
\else
  \JRRBSsubfalse
\fi
\newif\ifIncludeSM
\IncludeSMtrue
\newcommand{\startSupplementaryMaterial}{%
  \clearpage
  \setcounter{page}{1}%
  \pagestyle{plain}%
  \setcounter{section}{0}%
  \setcounter{subsection}{0}%
  \setcounter{subsubsection}{0}%
  \setcounter{equation}{0}%
  \setcounter{theorem}{0}%
  \setcounter{figure}{0}%
  \setcounter{table}{0}%

  \renewcommand{\thesection}{S\arabic{section}}%
  \renewcommand{\thefigure}{S\arabic{figure}}%
  \renewcommand{\thetable}{S\arabic{table}}%
}

\newcommand{\SMfrontmatter}{%
  \clearpage
  \phantomsection
  \begin{center}
  {\Large\bfseries Online Supplementary Material for}\\[4pt]
  {\large\bfseries Goodness-of-Fit Tests for Censored and Truncated Data: Maximum Mean Discrepancy Over Regular Functionals}\\[8pt]
 {\normalsize Juan Carlos Escanciano and Jacobo de U\~na-\'Alvarez}
  \end{center}

  \vspace{0.5em}
  \noindent\textbf{Outline of the Supplementary Material.}
  \begin{enumerate}[label=\textbf{S\arabic*.}, leftmargin=*]
    \item \hyperref[sec:Complete]{Additional results for complete data} 
    \item \hyperref[sec:CS_supp]{Current status data}
    \item \hyperref[SM:sim_random]{Additional numerical results: random sampling (complete data)}
    \item \hyperref[ULLN]{A Uniform Law of Large Numbers}
    \item \hyperref[GeneralProofs]{Proofs of general results}
    \item \hyperref[ExamplesProofs]{Proofs of the examples}
  \end{enumerate}

  \vspace{0.75em}
  \hrule
  \vspace{1.0em}
}

\title{Goodness-of-Fit Tests for Censored and Truncated Data: Maximum Mean Discrepancy Over Regular Functionals}

\author{%
Juan Carlos Escanciano\thanks{Universidad Carlos III de Madrid.} \and
Jacobo de U\~na-\'Alvarez\thanks{Universidade de Vigo.}%
}

\date{February 2026}

\begin{document}
\maketitle
% ===== Abstract + Keywords (compact but readable; main text remains double-spaced) =====
\begingroup
\ifJRRBSsub
  \setstretch{1.15}
\else
  \setstretch{1.05}
\fi
\small
\begin{abstract}
We develop a systematic, omnibus approach to goodness-of-fit testing for parametric distributional models when the variable of interest is only partially observed due to censoring and/or truncation. In many such designs, tests based on the nonparametric maximum likelihood estimator are hindered by nonexistence, computational instability, or convergence rates too slow to support reliable calibration under composite nulls. We avoid these difficulties by constructing a regular (pathwise differentiable) Neyman-orthogonal score process indexed by test functions, and aggregating it over a reproducing kernel Hilbert space ball. This yields a maximum-mean-discrepancy-type supremum statistic with a convenient quadratic-form representation. Critical values are obtained via a multiplier bootstrap that keeps nuisance estimates fixed. We establish asymptotic validity under the null and local alternatives and provide concrete constructions for left-truncated right-censored data, current status data, and random double truncation; in particular, to the best of our knowledge, we give the first omnibus goodness-of-fit test for a parametric family under random double truncation in the composite-hypothesis case. Simulations and an empirical illustration demonstrate size control and power in practically relevant incomplete-data designs.
\end{abstract}

\noindent\textbf{Keywords:} censoring; truncation; goodness-of-fit; maximum mean discrepancy; Neyman orthogonality; kernel methods; multiplier bootstrap; double truncation; survival analysis.
\endgroup

\vspace{0.5\baselineskip}
% (no \newpage)  <-- let Introduction start immediately

%%%%%%%%%%%%%%%%%%%%%%%%%%%%%%%%%%%%%%%%%%%%%%%%%%%%%%%%%%%%%%%%%%%%%%%%%%%%%%
% Main text starts here (double-spaced in JRSSB mode)
%%%%%%%%%%%%%%%%%%%%%%%%%%%%%%%%%%%%%%%%%%%%%%%%%%%%%%%%%%%%
\section{Introduction}\label{sec:intro}

Censoring and truncation are ubiquitous in survival analysis, reliability,
epidemiology, astronomy, and econometrics. In these settings the object of
interest is an event time or duration $X$, but the data provide only partial
information about $X$ through an observation mechanism that depends on
auxiliary variables (censoring and/or truncation times). As a consequence, even
the basic task of assessing whether $X$ follows a postulated parametric model
becomes technically and conceptually nontrivial. Since parametric modeling
continues to be the workhorse assumption in applications, it is therefore
important to develop robust and reliable goodness-of-fit (GOF) tools for
censored and truncated data.

This paper introduces a systematic methodology for developing \emph{omnibus} GOF
tests for the composite hypothesis $\mathcal H_0:\ F \in \{F_\theta:\theta\in\Theta\}$, where $F$ denotes the distribution function of $X$ and the observed data arise
under censoring and/or truncation. Our goal is a procedure that is (a) valid
under composite nulls, (b) tractable under complex incomplete-data mechanisms,
and (c) does not rely on (non-regular) nonparametric maximum likelihood estimation (NPMLE) of $F$ as an
intermediate object.

A natural and classical route in the GOF literature is to compare a nonparametric estimator
$F_n$ (e.g., a Kaplan--Meier type estimator adapted to the design) to the null
model $F_{\hat\theta}$ using Kolmogorov--Smirnov or Cram\'er--von Mises type
distances. In incomplete-data designs, however, this NPMLE-based route may be
problematic: the NPMLE may fail to exist or be non-unique in relevant models
(see, e.g., \cite{XH2019}); even when it exists and is unique, it may converge
at slow rates (see, e.g., \cite{GroeneboomWellner1992}), or fail to converge weakly without
additional conditions (see, e.g., \cite{SohlTrabs2012}). These issues are particularly acute
under composite nulls, where bootstrapping an NPMLE-based statistic typically
requires repeated re-estimation inside each resample, often under delicate
tail and regularity conditions (see the discussion before Section \ref{subsec:double_trunc}) .

\subsection{Motivation through two canonical designs}\label{subsec:two_schemes}

We motivate both the statistical problem and the structure of our solution
through two representative observation schemes that recur throughout the paper.

\paragraph{Left truncation with right censoring (LTRC).}
In cohort and registry studies individuals enter observation after a truncation
time $U$ and are then followed until the event time $X$ or a censoring time $C$.
The observed outcome is $Y=\min\{X,C\}$ with censoring indicator
$\delta=\1\{X\le C\}$, and one observes $Z=(Y,U,\delta)$ conditional on $U\le Y$.
Even when the model for $X$ is parametric, the distribution of $Z$ depends on
nuisance features of the observation mechanism (the law of $(U,C)$), and this
dependence contaminates the null distribution of naive GOF statistics unless
the procedure is designed to be insensitive to such nuisances.

\paragraph{Random double truncation.}
In double truncation the event time $X$ is observed only if it falls within a
random window $[U,V]$, i.e., one observes $Z=(X,U,V)$ conditional on
$U\le X\le V$. This design arises in epidemiology (delayed sampling combined
with administrative constraints) and in astronomy (magnitude-limited surveys).
Double truncation illustrates a second, deeper difficulty: even defining a
computationally stable, omnibus GOF statistic under a \emph{composite} null is
challenging because the observation mechanism induces strong selection bias and
because the standard nonparametric objects (e.g., Efron--Petrosian type NPMLEs)
may be unstable and hard to bootstrap in practice. This paper places special
emphasis on random double truncation, and, to the best of our knowledge,
provides the first omnibus GOF test for a parametric family under random double
truncation in the composite-hypothesis case (Section~\ref{subsec:double_trunc}
and Section~\ref{sec:num}).

These two schemes highlight the central point: the mapping from the target cumulative distribution function (cdf)
$F$ to the observable distribution is indirect and involves nuisance features
(typically infinite-dimensional) with a potentially non-regular inverse. As a result, classical GOF tests for complete
data do not transfer directly, and NPMLE-based transfers can be fragile or
computationally burdensome.

\subsection{Methodological overview}\label{subsec:overview}

Our approach is based on two key ingredients: (i) a \emph{regular functional}
characterization of the null via a score process, and (ii) a \emph{reproducing kernel Hilbert space} (RKHS)
aggregation that yields a maximum mean discrepancy (MMD) type quadratic form.

\paragraph{Regular moments and an orthogonal score process.}
Rather than comparing $F$ to $F_{\theta_0}$ through a nonparametric estimator of
$F$, where $\theta_0 \in \Theta$ denotes the
unknown parameter such that $F = F_{\theta_0}$ under $\mathcal{H}_0$, we characterize $\mathcal H_0$ by a family of moment restrictions indexed
by test functions $\varphi$:
\[
\mathcal H_0 \Longleftrightarrow E\!\left[g_\varphi(Z_1;\theta_0, G)\right]=0
\quad\text{for all }\varphi \text{ in a rich class}.
\]
Here $Z_1$ denotes the observed data and $G$ an additional nuisance function. For each observation scheme, the score $g_\varphi$ is chosen so that
it is unbiased under $\mathcal H_0$, its expectation is regularly estimable, and rich enough to identify violations of
the null on the identifiable region of the support.

To handle composite nulls and nuisance estimation, we then construct a
\emph{Neyman-orthogonal} (locally insensitive) version of the moment, denoted
$\Pi^\perp\varphi(Z)$, based on the IF representation of the
plug-in sample mean process $E_n[g_\varphi(Z_1;\hat\theta,\hat{G})]$ (cf. Assumption A1), where henceforth $E_n[f] := E_n[f(Z_1)] := \frac{1}{n}\sum_{i=1}^n f(Z_i)$. Orthogonality is the key device
that allows calibration by a multiplier bootstrap \emph{without re-estimating}
nuisance objects inside each resample. This is particularly valuable in
truncation settings where NPMLE re-fitting can be computationally expensive and
theoretically delicate.

\paragraph{RKHS aggregation and induced kernels.}
To obtain an omnibus test, we aggregate the orthogonal score process over a
rich function class by embedding $\varphi$ in a RKHS. Specifically, letting $\mathcal B_K$ be the unit ball of an RKHS
$\mathcal H_K$ with kernel $K$, we define
\[
\hat Q_K=\sup_{\varphi\in\mathcal B_K}\left(E_n[\hat\Pi^\perp\varphi(Z)]\right)^2
\]
where $\hat\Pi^\perp$ stands for an estimator of $\Pi^\perp$. A central practical feature is that $\hat Q_K$ and its bootstrap analog $\hat{Q}_{K}^{\ast}$ admit simple quadratic-form representations 
\begin{equation}\label{eq:QK_quadratic_form}
\hat{Q}_{K}
=
\frac{1}{n^{2}}\sum_{i=1}^{n}\sum_{j=1}^{n}\hat{K}_{g}^{\perp}(Z_{i},Z_{j}),
\qquad
\hat{Q}_{K}^{\ast}
=
\frac{1}{n^{2}}\sum_{i=1}^{n}\sum_{j=1}^{n}\omega_{i}\omega_{j}\,\hat{K}_{g}^{\perp}(Z_{i},Z_{j}),
\end{equation}
\noindent with an \emph{induced} kernel $\hat K_g^\perp$ given in Lemma \ref{Lemma:InducedKernel} and bootstrap multipliers $(\omega_{i})$, yielding
computationally straightforward $V$-statistics. This representation is what makes
the method scalable and easy to implement.

\subsection{Contributions}\label{subsec:contrib}

The main contributions of the paper are as follows.

\begin{enumerate}[leftmargin=*,itemsep=2pt]
\item \textbf{A general GOF testing template for incomplete data via regular functionals.}
We provide a systematic construction of score processes that characterize
$\mathcal H_0$ through regular (pathwise differentiable) functionals, and we
orthogonalize these processes to handle composite null hypotheses under
censoring and truncation.

\item \textbf{RKHS/MMD aggregation with a quadratic-form statistic.}
Aggregating the orthogonal score process over an RKHS unit ball yields an
omnibus discrepancy with a convenient induced-kernel representation, leading to
the simple quadratic-form $V$-statistic and a multiplier bootstrap in (\ref{eq:QK_quadratic_form}).

\item \textbf{Composite-null omnibus GOF testing under random double truncation.}
To the best of our knowledge, ours is the first omnibus GOF test for a parametric
family under random double truncation in the composite-null case.

\item \textbf{Composite-null GOF testing under complete data.}
Even for complete data, our results are new and improve upon existing results in, e.g., \cite{Lindsay2008,Lindsay2014} and \cite{key2021composite}.

\item \textbf{Asymptotic theory under the null and local alternatives.}
We establish the asymptotic null distribution of $n\hat Q_K$, provide a novel Uniform Law of Large Numbers (ULLN), and use it to establish the non-trivial local power properties against Pitman-type alternatives.
\end{enumerate}

\subsection{Relation to existing work}\label{subsec:related}

Our work connects three strands of literature. The first concerns inference and GOF testing under censoring and truncation, where classical procedures
typically compare nonparametric estimators of the target distribution (e.g., Kaplan--Meier,
Nelson--Aalen, Lynden--Bell, or Efron--Petrosian type estimators) to a parametric
counterpart (see, e.g., \citealp{Koziol1980,Nair1981,FlemingHarrington1991,KleinMoeschberger2003,Emura2015}).
In complex incomplete-data designs, these routes may be hindered by nonexistence or instability
of the NPMLE, slow rates, or by boundary/integrability conditions required for weak convergence
on the relevant support (see, e.g., \citealp{Gill1983,SohlTrabs2012,Stute1995, Nikabadze1997,Sun1997,KoulYi2006}).

To be specific, Kolmogorov-Smirnov and Cramér-von Mises type tests for censored data were investigated by \citealp{KoziolGreen1976, Koziol1980, Fleming1980} and \cite{Hjort1990}, among others. Some of these omnibus methods were compared in simulated scenarios by \cite{Balakrishnan2015}, and implemented in software packages for their friendly application (\citep{Besalu2025}). For left-truncated data, Kolmogorov-Smirnov-type tests were considered in \cite{Guilbaud1988} and \cite{Sun1997}, while \cite{Emura2015} proposed bootstrap versions of Kolmogorov-Smirnov and Cramér-von Mises type tests under double truncation for the simple null case. All these methods are NPMLE-based and suffer from important limitations. In particular, from a theoretical viewpoint they rely on weak convergence results on bounded intervals to avoid problems with the potentially erratic behaviour of the NPMLE near the boundary. Importantly, for an extension of the weak convergence results to the whole support of the observable event time, the NPMLE-based approach requires integrability conditions involving the censoring or truncation distributions which may fail in practical settings. See Section \ref{sec:examples} for further details. Our approach avoids all these difficulties because it is NPMLE-free. For current status data (interval censoring case 1), our test is related to the Cram\'{e}r-von Mises test in \cite{KoulYi2006}; see the Supplementary Material. However, the approach for the asymptotic analysis in \cite{KoulYi2006} is different to ours, relying on weak convergence on compacta using \cite{Stute1998}. Again, our RKHS framework allows for convergence results on arbitrary supports. For interval censoring case 2,~\cite{Ren2003,Omidi2021} proposed NPMLE-based tests with slow convergence. We are not aware of other formal goodness-of-fit tests for interval-censored data. 

The second strand concerns discrepancy-based testing in a RKHS, including MMD statistics (see, e.g.,
\citealp{Gretton2012,Sriperumbudur2011}). While RKHS-based GOF tests are well
developed for complete data and simple hypotheses, existing results under \emph{composite} nulls are more limited (see, e.g., \citealp{Lindsay2008,Lindsay2014,key2021composite}). In particular, our test provides an alternative approach to that of \cite{Lindsay2008} for complete data, which is based on Satterthwaite approximations and non-projected quadratic distances. \cite{key2021composite} consider MMD tests for a composite hypothesis, but they restrict the analysis to a specific minimum distance estimator and a parametric bootstrap that requires re-estimation in each bootstrap sample. 

The third strand is the semiparametric paradigm of orthogonal scores and locally robust inference for GOF problems (see, e.g., \citealp{Neyman1959,Bickel1993,Escanciano2024}). In particular,
\citet{Bickel2006} and \cite{Escanciano2024} studied Neyman-score processes, but they do not consider censored or truncated
sampling.

%Our contribution is to focus on regular functionals and to bring the Neyman-orthogonal score-process viewpoint into incomplete-data GOF testing and to couple it with RKHS aggregation. This yields an omnibus MMD-type statistic with a multiplier bootstrap that keeps nuisance estimates fixed. In particular, for random double truncation we provide the first omnibus GOF test for a parametric family under a composite null.

\subsection{Organization}\label{subsec:org}

Section~\ref{sec:setup} introduces the general incomplete-data framework and the
orthogonal score process. Section~\ref{Sec:Asymptotics} presents the general
asymptotic results and the quadratic-form representation used for computation.
Section~\ref{sec:examples} specializes the construction to left truncation with
right censoring and random double truncation (the Supplementary Material
covers current status data). Section~\ref{sec:num}
reports Monte Carlo evidence and an empirical illustration, and
Section~\ref{sec:disc} concludes. Proofs, novel asymptotic results, and additional simulations are gathered in the Supplementary Material.

%%%%%%%%%%%%%%%%%%%%%%%%%%%%%%%%%%%%%%%%%%%%%%%%%%%%%%%%%%%%%%%%%%%%%%%%%%%%%%
\section{Setup and orthogonal score process}
\label{sec:setup}

\subsection{A general incomplete-data model}

Let $X$ denote a target random variable with cdf $F$. Let $W$ be an auxiliary vector of
censoring and truncation random variables with cdf $G$, independent of $X$. The variables
$(X,W)$ are linked to the observed data through a known measurable mapping $Z=T(X,W)$. In addition, $Z$ is observed only when $B(X,W)=1$, where $B$ is also known. Thus, we observe an iid sample $\{Z_i\}_{i=1}^n$ from
$P\equiv P_{F,G}$, the conditional distribution of $Z=T(X,W)$ given $B(X,W)=1$.
This framework covers a broad class of censoring and truncation schemes through suitable
specifications of $T$ and $B$ (see Section~\ref{sec:examples}).

For any measurable $f$ with $E[f^2(Z_1)]<\infty$, write $Pf:=E[f(Z_1)]$, where $E$ denotes expectation under $P=P_{F,G}$. To emphasize null-based expectations and
probabilities we write $E_{\theta,G}$ and $P_{\theta,G}$, or simply $E_\theta$ and $P_\theta$ when dependence on $G$ does not affect the discussion. Let $L_2(P)$ denote the set of
measurable $f$ with $\|f\|_2^2:=E[f^2(Z_1)]<\infty$.

\subsection{Regular moments, scores, and characterization of the null}
\label{sec:full_cond_scores}

%We test the composite null $\mathcal H_0:\ F\in\{F_\theta:\theta\in\Theta\}$, where $\theta_0$ denotes the (pseudo-)true value under $\mathcal H_0$.

\paragraph{Full-score moments.}
For a measurable $\psi\in L_2(P)$, define the full score
\begin{equation}\label{eq:fullscore}
g^{\mathrm{full}}_\psi(Z_1,\theta,G)
\;:=\;
\psi(Z_1)\;-\;E_{\theta,G}[\psi(Z_1)] .
\end{equation}
Then $E[g^{\mathrm{full}}_\psi(Z_1,\theta_0,G)]=0$ under $\mathcal H_0$. This is general but
typically cumbersome in censoring/truncation designs because $E_{\theta,G}[\psi(Z_1)]$
depends on the nuisance $G$.

\paragraph{Conditional-score moments and motivation.}
To obtain implementable omnibus restrictions when $W_1$ is observable, we use conditional moments given the observation variables. Let $\psi\in L_2(P)$, and define
\begin{equation}\label{eq:condscore_general}
g^{\mathrm{cond}}_\psi(Z_1,\theta)
\;:=\;
\psi(Z_1)\;-\;E_{\theta}\!\left[\psi(Z_1)\mid W_1\right].
\end{equation}
Under $\mathcal H_0$, $E[g^{\mathrm{cond}}_\psi(Z_1,\theta_0)]=0$.
In the leading censoring/truncation designs of Section~\ref{sec:examples}, the conditional
law of the target variable given $(W,B=1)$ is identified by $F_\theta$ and $B$ and does not
involve unknown features of $G$, so \eqref{eq:condscore_general} yields nuisance-robust
moments.

We index scores by test functions $\varphi$ through a choice $\psi=\psi_\varphi$ (e.g.,
$\psi_\varphi(Z_1)=\varphi(X_1)$, or $\psi_\varphi(Z_1)=\varphi(Y_1)\delta_1$ in LTRC), and set
\begin{equation}\label{eq:gphi_cond}
g_\varphi(Z_1,\theta)
\;:=\;
g^{\mathrm{cond}}_{\psi_\varphi}(Z_1,\theta)
\;=\;
\psi_\varphi(Z_1)\;-\;E_{\theta}\!\left[\psi_\varphi(Z_1)\mid W_1^o\right]
\end{equation}
where $W_1^o$ stands for the observable components of $W_1$. We follow this strategy in the settings of truncated data and current status data. With censored data, however, the conditional expectation in \eqref{eq:gphi_cond} will depend on the censoring variable which does not belong to $W_1^o$. In such a case, namely the LTRC model, the terms involving the unknowns are replaced by quantities that only depend on components of $Z_1$ and that are valid in the sense of having the same conditional mean. See Section~\ref{sec:examples} where we give explicit formulas for the conditional score in
\eqref{eq:gphi_cond} for random sampling, LTRC, and random double truncation (see the Supplementary Material for current status data).

\paragraph{Null characterization.}
Our theoretical framework allows for complex score functions $g_\varphi$ depending on the cdf $G$ of the truncating/censoring variables. Let $\mathcal C$ be a rich class of test functions (in our applications, indicator classes
and RKHS unit balls). Under $\mathcal H_0$,
\begin{equation}\label{eq:null_moments}
E[g_\varphi(Z_1,\theta_0,G)]=0 \quad \text{for all }\varphi\in\mathcal C.
\end{equation}
Converses (on the identifiable region of the support) are verified case-by-case under minimal
positivity and uniqueness conditions. In particular, for LTRC and for random double truncation, identification from the indicator class is established in
Propositions~\ref{prop:S1} and~\ref{prop:S2} in Section~\ref{sec:examples}, respectively. For
current status data, the analogous converse result is given in Proposition~\ref{prop:S3} in
Supplementary Material Section~\ref{sec:CS_supp}.

\subsection{RKHS indexing and regularity (Assumption A1)}

Let $\mathcal S_{Z_1}$ be the support of $Z_1$ and let $\mathcal S_K\subseteq\mathcal S_{Z_1}$.
Let $K:\mathcal S_K\times\mathcal S_K\to\mathbb R$ be a positive definite kernel with RKHS
$\mathcal H_K$, inner product $\langle\cdot,\cdot\rangle_K$, and norm $\|\cdot\|_K$. Let $K_z(z'):=K(z,z')$ and $\mathcal B_K:=\{\varphi\in\mathcal H_K:\ \|\varphi\|_K\le 1\}$. We assume $\mathcal H_K\subset L_2(P)$, which by Proposition 2.5 in \cite{FerreiraMenegatto2013} is guaranteed under
\begin{equation}\label{eq:kernel-bounded}
E[K(Z_1,Z_1)]<\infty.
\end{equation}

\noindent We base inference on regular (pathwise differentiable) moments in the sense of
\cite{VdVaart1991}. Let $\hat g_\varphi$ denote an estimator of $g_\varphi(\cdot,\theta_0,G)$ under
$\mathcal H_0$ (typically $\hat g_\varphi(Z_1)=g_\varphi(Z_1,\hat\theta,\hat{G})$). The next assumption
formalizes the first-order influence-function (IF) expansion and is used throughout the paper.

\medskip
\noindent\textbf{Assumption A1 (Regularity / IF expansion).}
For each $\varphi\in\mathcal H_K$, under $\mathcal H_0$,
\[
E_n[\hat g_\varphi(Z_1)]
=
E_n[\Pi^\perp \varphi(Z_1)] + o_P(n^{-1/2}),
\]
where $\Pi^\perp:\mathcal H_K\to L_2(P)$ is linear and continuous and satisfies
$E_{\theta_0,G}[\Pi^\perp \varphi(Z_1)]=0$.

We illustrate the verification of Assumption A1 with random sampling with complete
observations, which we use as a running example throughout the article.

\begin{example}[Random sampling from $F$]\label{ex:running-rs}
Under random sampling, $X = Z = Z_1$ and $B \equiv 1$. For the MLE $\hat{\theta}$ of
$\theta_0$, define
\begin{equation}
\hat{g}_\varphi(z_1) = g_\varphi(z_1, \hat{\theta}) = \varphi(z_1) -
\int_{\mathcal{S}_{Z_1}} \varphi(z) \, dF_{\hat{\theta}}(z).
\label{example1}
\end{equation}
There is no $G$ in this example. Standard MLE theory yields Assumption A1 with
\begin{equation}
\Pi^\perp \varphi(z_1) = \varphi(z_1) - E_{\theta_0}[\varphi(Z_1)] +
B(\varphi, \theta_0) I_{\theta_0}^{-1} l_{\theta_0}(z_1),
\label{eq:pi-perp-rs}
\end{equation}
where $B(\varphi, \theta_0) = -E[\varphi(Z_1) l_{\theta_0}'(Z_1)]$,
$l_\theta(\cdot) = \partial \log f_\theta(\cdot)/\partial \theta$, $f_\theta$ is
the density of $F_\theta$, and $I_{\theta_0} = E[l_{\theta_0}(Z_1)l_{\theta_0}'(Z_1)]$
is the Fisher information matrix. $\square$
\end{example}

\subsection{Orthogonal score process, RKHS aggregation, and bootstrap}

Given an estimator $\hat\Pi^\perp$ of $\Pi^\perp$, define the orthogonal score process $\widehat{\Delta}_n^\perp(\varphi) := E_n[\hat{\Pi}^\perp \varphi(Z_1)]$, $\varphi\in\mathcal H_K$. To obtain an omnibus test, we aggregate over the RKHS unit ball: $\widehat{Q}_K
:=
\sup_{\varphi\in\mathcal B_K}\left(\widehat{\Delta}_n^\perp(\varphi)\right)^2$. Critical values are computed via a multiplier bootstrap:
\[
\widehat{Q}_K^*
:=
\sup_{\varphi\in\mathcal B_K}\left(\widehat{\Delta}_n^{\perp*}(\varphi)\right)^2,
\qquad
\widehat{\Delta}_n^{\perp*}(\varphi)
:=
\frac{1}{n}\sum_{i=1}^n \omega_i\,\hat{\Pi}^\perp\varphi(Z_i),
\]
where $\{\omega_i\}_{i=1}^n$ are iid\ multipliers with mean zero and unit variance, e.g.,
the two-point distributions in \cite{Wu1986,Mammen1993}. A key computational feature is that
nuisance parameters need not be re-estimated in the bootstrap resamples. Section~\ref{Computation}
derives a quadratic-form representation of $\widehat{Q}_K$ and $\widehat{Q}_K^*$.
Asymptotic validity under $\mathcal H_0$ and local alternatives is established in
Section~\ref{Sec:Asymptotics}.

%%%%%%%%%%%%%%%%%%%%%%%%%%%%%%%%%%%%%%%%%%%%%%%%%%%%%%%%%%%%%%%%%%%%%%%%%%%%%%
\section{General asymptotic results}\label{Sec:Asymptotics}

\subsection{Main result}\label{sec:asymp_main}

This section derives the generic asymptotic null behavior of the RKHS-aggregated,
orthogonal score statistic introduced in Section~\ref{sec:setup}. We maintain the
notation and standing conditions from Section~\ref{sec:setup}; in particular, $K$ is a
positive definite kernel on $\mathcal S_K\subseteq \mathcal S_{Z_1}$ with RKHS
$\mathcal H_K$ and unit ball $\mathcal B_K$, and $\mathcal H_K\subset L_2(P)$, e.g., under
\eqref{eq:kernel-bounded}. We also assume $\mathcal H_K$ is separable (e.g., if
$\mathcal S_K$ is separable and $K$ is continuous; see \cite[Lemma 4.33]{Christmann2008}),
so that its dual $\mathcal H_K^\ast$ is separable.

Let $\mathcal H_K^\ast$ denote the space of bounded linear functionals on $\mathcal H_K$,
equipped with the operator norm $\|T\|:=\sup_{\varphi\in\mathcal B_K}|T(\varphi)|$.
Define the empirical process $\mathbb{G}_n f := \sqrt{n}\{E_n[f]-E[f]\}$, $f\in L_2(P)$. Let $g_\varphi(\cdot,\theta_0,G)$ be the (full or conditional) score constructed in
Section~\ref{sec:full_cond_scores}, and let $\hat g_\varphi$ be an estimator under
$\mathcal H_0$ (typically $\hat g_\varphi(Z_1)=g_\varphi(Z_1,\hat\theta,\hat{G})$).
Assumption~A1 (stated in Section~\ref{sec:setup}) postulates an IF expansion
with an orthogonal map $\Pi^\perp:\mathcal H_K\to L_2(P)$. The statistic is formed from an
estimator $\hat\Pi^\perp$ of $\Pi^\perp$.

\medskip
\noindent \textbf{Assumption A2.}
There exists $\hat{\Pi}^\perp$ such that, uniformly in $\varphi \in \mathcal{B}_K$,
under $\mathcal{H}_0$,
\[
E_n[\hat{\Pi}^\perp \varphi(Z_1)] = E_n[\Pi^\perp \varphi(Z_1)] + o_P(n^{-1/2}).
\]

\medskip
\noindent \textbf{Assumption A3.}
$\varphi\in\mathcal H_K \rightarrow S_n(\varphi) := \mathbb{G}_n\{\Pi^\perp \varphi(Z_1)\}$ satisfies $S_n \Longrightarrow S_\infty$ in $\mathcal{H}_K^\ast$.

Define $\hat{\Delta}^\perp_n(\varphi) := E_n[\hat{\Pi}^\perp \varphi(Z_1)]$ and $\hat{Q}_K := \sup_{\varphi\in\mathcal B_K}\big(\hat{\Delta}^\perp_n(\varphi)\big)^2$.

\begin{theorem}\label{Thm:GeneralNull}
Under Assumptions A1--A3 and $\mathcal{H}_0$, it holds $n \hat{Q}_K \rightarrow_d \|S_\infty\|^2$.
\end{theorem}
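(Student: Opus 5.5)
The plan is to write $n\hat Q_K$ as the squared operator norm of a random element of $\mathcal H_K^\ast$ and then apply the continuous mapping theorem. Since $\Pi^\perp$ is linear and $E_{\theta_0,G}[\Pi^\perp\varphi(Z_1)]=0$ under $\mathcal H_0$, we have $\sqrt n\,E_n[\Pi^\perp\varphi(Z_1)]=\mathbb G_n\{\Pi^\perp\varphi(Z_1)\}=S_n(\varphi)$ for every $\varphi\in\mathcal H_K$. Setting $\hat S_n(\varphi):=\sqrt n\,\hat\Delta_n^\perp(\varphi)$ gives
\[
n\hat Q_K=\sup_{\varphi\in\mathcal B_K}\hat S_n(\varphi)^2=\Bigl(\sup_{\varphi\in\mathcal B_K}|\hat S_n(\varphi)|\Bigr)^2 .
\]
The right-hand side equals $\|\hat S_n\|^2$ whenever $\hat S_n$ is linear in $\varphi$, which holds for the natural estimators $\hat\Pi^\perp$. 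If linearity fails, I would work with $\sup_{\mathcal B_K}|\cdot|$ as a functional on $\ell^\infty(\mathcal B_K)$ instead; nothing below changes.

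The first step is to compare $\hat S_n$ with $S_n$. By Assumption A2, $\sup_{\varphi\in\mathcal B_K}|\hat S_n(\varphi)-S_n(\varphi)|=o_P(1)$. The reverse triangle inequality for the supremum norm then gives
\[
\Bigl|\sup_{\mathcal B_K}|\hat S_n|-\|S_n\|\Bigr|\le \sup_{\mathcal B_K}|\hat S_n-S_n|=o_P(1).
\]
Assumption A1 is not strictly needed at this point, since A2 already links the statistic directly to $E_n[\Pi^\perp\varphi]$. A1 guarantees that $\Pi^\perp$ is the linear, continuous, centered map, so that $S_n$ is a well-defined element of $\mathcal H_K^\ast$: each $S_n(\varphi)$ is a finite sum of continuous linear functionals of $\varphi$.

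The second step uses Assumption A3, which gives $S_n\Rightarrow S_\infty$ in $\mathcal H_K^\ast$. The map $T\mapsto\|T\|^2$ is continuous on $\mathcal H_K^\ast$, so the continuous mapping theorem yields $\|S_n\|^2\to_d\|S_\infty\|^2$. By the first step, $\sup_{\mathcal B_K}|\hat S_n|=\|S_n\|+o_P(1)$, and Slutsky's lemma then gives $n\hat Q_K\to_d\|S_\infty\|^2$.

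The main obstacle is measurability, namely treating $\|S_n\|$ and the supremum over the possibly uncountable ball $\mathcal B_K$ as genuine random variables. I would handle this with the separability of $\mathcal H_K$ assumed at the start of this section. $\mathcal B_K$ contains a countable dense subset $D$, and for continuous linear functionals the supremum over $\mathcal B_K$ equals the supremum over $D$, so the norm is measurable. Failing that, all statements can be read in the Hoffmann-J{\o}rgensen outer-probability sense, where the continuous mapping theorem and Slutsky's lemma still hold.
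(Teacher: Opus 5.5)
Your proposal is correct and follows the paper's argument: under $\mathcal H_0$, A1 and A2 give $\sqrt n\,\hat\Delta_n^\perp = S_n + o_P(1)$ uniformly on $\mathcal B_K$, and then A3, the continuous mapping theorem and Slutsky's lemma finish the proof. The only difference is the order of steps. You apply Slutsky to the scalar suprema via the reverse triangle inequality after mapping $S_n$ to $\|S_n\|$, whereas the paper applies Slutsky at the process level in $\mathcal H_K^\ast$ and then takes the norm of $\sqrt n\,\hat\Delta_n^\perp$. Your ordering has the minor advantage of not requiring $\hat\Delta_n^\perp$ itself to lie in $\mathcal H_K^\ast$.
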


Theorem~\ref{Thm:GeneralNull} is the generic device used in the examples of
Section~\ref{sec:examples}: Assumption~A1 provides the orthogonal IF map
$\Pi^\perp$, Assumption~A2 provides a uniformly valid plug-in estimator $\hat\Pi^\perp$, and
Assumption~A3 delivers the weak limit of the resulting RKHS-indexed empirical process.

%\noindent Assumption A1 guarantees the existence of the influence function $\Pi^{\perp }\varphi (Z_{1})$, with Assumption A2 providing an estimator and the form of the test statistic.

Assumption A1 is easier to verify when
$E_{n}[\hat{g}_{\varphi }(Z_{1})]$ depends only on empirical distributions of
observed data. The fact that $\Pi^{\perp }\varphi (Z_{1})$ is an influence function
explains why Assumption A2 may hold; see, for example, \cite[p.~396]{Bickel1993},
Theorem~6.1(i) in \cite[p.~557]{Huang1996}, Section~25.8 in \cite{VdVaart1998},
Assumption~H$_2$ in \cite{Bertail2006}, or Condition~M2 in \cite{Bickel2006}.
In general, $\hat{\Pi}^{\perp }$ depends on both $\hat{\theta}$ and $\hat{G}$ but,
in the relevant examples with truncation and censoring considered in
Section~\ref{sec:examples}, the nonparametric nuisance $\hat{G}$ does not appear
or it reduces to an empirical average with respect to the observable variables;
this facilitates things. In our leading examples
of Section~\ref{sec:examples} the score $g_{\varphi }(z,\theta ,G)$ will be free of
$G$, simplifying the computation of the influence function in Assumption A1.
Of course, Assumption A2 is empty in the simple null case.
Lemma~\ref{Lemma:B2} and Lemma~\ref{Lemma:B3} in Section~\ref{A2A3} provide sufficient
conditions to verify Assumption A2 and Assumption A3 in applications, respectively.

\subsubsection{Induced kernel}

Let $A$ denote a Gaussian Process with zero mean and covariance kernel $K$, in short
$A \sim GP(K)$, and let
$K_{g}^{\perp}(z_{1},z_{2}) = E_A[\Pi^{\perp}A(z_1)\Pi^{\perp}A(z_2)]$
be the induced covariance kernel of $\Pi^{\perp}A$, where $E_A$ denotes expectation
under $A \sim GP(K)$; see \cite{Berlinet2004, Matsumoto2023} and the specific
examples below.

To simplify exposition, while covering all examples of Section~\ref{sec:examples},
we consider the case where $g_\varphi(Z_1, \theta_0, G) = g_\varphi(Z_1, \theta_0)$
does not depend on $G$, and $\Pi^\perp$ has the representation
\begin{equation}
\Pi^\perp \varphi(z_1) = g_\varphi(z_1, \theta_0) -
\int_{\mathcal{S}_{Z_1}} g_\varphi(z, \theta_0)\, q_{\theta_0, G}(z, z_1)\, dP_{\theta_0, G}(z)
\label{eq:Pi-perp-rep}
\end{equation}
for some known function $q_{\theta_0,G}$ (up to $\theta_0$ and $G$). Henceforth,
all integrals are assumed well-defined; required integrability conditions for $K$
are collected in the next section.

Let $K_g(z_1, z_2) = E_A[TA(z_1)TA(z_2)]$ denote the induced kernel of $TA$, where
$T\varphi=g_\varphi(\cdot, \theta_0)$ is a mapping from $\mathcal{H}_K$ to $L_2(P)$.
We provide the expression for $K_g$ for several leading examples in
Section~\ref{sec:examples}, and focus now on the more complicated problem of obtaining
the induced kernel $K_g^\perp$ from $K_g$. Our next result follows readily from Fubini,
and hence its proof is omitted.

\begin{lemma}\label{Lemma:InducedKernel}
The kernel $K_{g}^{\perp}$ induced by $\Pi^\perp A$, with $\Pi^\perp$ given in
\eqref{eq:Pi-perp-rep}, is
\begin{align*}
K_g^\perp(z_1, z_2)
&= K_g(z_1, z_2)
- \int_{\mathcal{S}_{Z_1}} K_g(z_1, z)\, q_{\theta_0, G}(z, z_2)\, dP_{\theta_0, G}(z) \\
&\quad - \int_{\mathcal{S}_{Z_1}} K_g(z_2, z)\, q_{\theta_0, G}(z, z_1)\, dP_{\theta_0, G}(z) \\
&\quad + \int_{\mathcal{S}_{Z_1}} \int_{\mathcal{S}_{Z_1}} K_g(z, \tilde{z})\,
q_{\theta_0, G}(z, z_1)\, q_{\theta_0, G}(\tilde{z}, z_2)\,
dP_{\theta_0, G}(z)\, dP_{\theta_0, G}(\tilde{z}).
\end{align*}
\end{lemma}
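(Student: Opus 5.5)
The plan is to compute the covariance of $\Pi^\perp A$ directly from the representation \eqref{eq:Pi-perp-rep}, expand the product, and interchange $E_A$ with the $dP_{\theta_0,G}$-integrals by Fubini. Since $g_\varphi$ does not depend on $G$, apply \eqref{eq:Pi-perp-rep} pathwise to a realization of $A\sim GP(K)$ (writing $TA(z)=g_A(z,\theta_0)$):
\[
\Pi^\perp A(z_i)=TA(z_i)-\int_{\mathcal S_{Z_1}} TA(z)\,q_{\theta_0,G}(z,z_i)\,dP_{\theta_0,G}(z),\qquad i=1,2.
\]
Here I treat $\varphi\mapsto g_\varphi$ (hence $T$ and $\Pi^\perp$) as linear, as in all the examples, and extend it to the sample paths of $A$ in the mean-square sense. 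Multiplying the expressions for $i=1,2$ gives four terms: $TA(z_1)TA(z_2)$, two cross terms, and a double integral.

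Next, take $E_A$ of each term. The first term gives $K_g(z_1,z_2)$ by definition. For the cross term, write
\[
E_A\Big[TA(z_1)\int TA(z)q_{\theta_0,G}(z,z_2)\,dP_{\theta_0,G}(z)\Big]=\int E_A[TA(z_1)TA(z)]\,q_{\theta_0,G}(z,z_2)\,dP_{\theta_0,G}(z),
\]
which equals $\int K_g(z_1,z)q_{\theta_0,G}(z,z_2)\,dP_{\theta_0,G}(z)$. The symmetric cross term follows in the same way, using $K_g(z,z_1)=K_g(z_1,z)$. For the product of the two integrals, write it as a double integral over $(z,\tilde z)$ under the product measure and exchange it with $E_A$. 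This yields the last term of the statement with $K_g(z,\tilde z)$ in the integrand. Collecting the four terms with their signs gives exactly the stated formula.

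The main obstacle is justifying Fubini, i.e. joint measurability of $(a,z)\mapsto TA(z)$ and absolute integrability. By Cauchy--Schwarz, $E_A|TA(z_1)TA(z)q(z,z_2)|\le K_g(z_1,z_1)^{1/2}K_g(z,z)^{1/2}|q(z,z_2)|$, so it suffices that
\[
\int K_g(z,z)^{1/2}\,|q_{\theta_0,G}(z,z_i)|\,dP_{\theta_0,G}(z)<\infty,\qquad i=1,2,
\]
and likewise for the double integral. This is exactly the kind of integrability condition on $K$ that the paper says is collected in the next section. Measurability follows by choosing a measurable version of $A$ (for instance via the Karhunen--Lo\`eve expansion of $A$ on the separable $\mathcal H_K$), for which $TA$ is obtained by linearity and $L_2$ limits. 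Since the lemma assumes all integrals are well-defined, I would note these conditions and invoke Tonelli and then Fubini without further fuss.
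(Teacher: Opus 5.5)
Your proposal is correct and follows essentially the same route as the paper, which omits the proof with the remark that the lemma ``follows readily from Fubini''. You expand $\Pi^\perp A(z_1)\,\Pi^\perp A(z_2)$ via \eqref{eq:Pi-perp-rep} into four terms and interchange $E_A$ with the $dP_{\theta_0,G}$-integrals, and your Cauchy--Schwarz bound, which reduces everything to $\int K_g(z,z)^{1/2}\,|q_{\theta_0,G}(z,z_i)|\,dP_{\theta_0,G}(z)<\infty$, usefully makes explicit the integrability the paper simply assumes.
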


\subsubsection{Sufficient conditions for Assumptions A2 and A3}
\label{A2A3}

Assumptions A2 and A3 require asymptotic results for empirical processes indexed by
elements in the RKHS $\mathcal{H}_K$. In Supplementary Material Appendix~\ref{ULLN}, we provide a new ULLN for
sample means indexed by both $\varphi \in \mathcal{B}_K$ and
$\theta \in \Theta_0 \subset \Theta$, where $\Theta_0$ is a compact set containing
$\theta_0$. This result (Lemma~\ref{lem:B1} in Supplementary Material Appendix~\ref{ULLN}) is useful to verify
our asymptotic results and it is of independent interest.

Consider a generalized version of \eqref{eq:pi-perp-rs}:
\begin{equation}
\Pi^\perp \varphi(z_1)=g_\varphi(z_1,\theta_0) + B(\varphi,\theta_0) I_{\theta_0}^{-1} l_{\theta_0}(z_1),
\label{eq:Pi}
\end{equation}
where
\[
B(\varphi, \theta) = E\left[\frac{\partial g_\varphi(Z_1, \theta)}{\partial \theta'}\right],
\qquad
l_\theta(z_1) = \frac{\partial}{\partial \theta} \log f_{\theta,G}(z_1),
\]
$f_{\theta,G}$ denotes the density of $Z_1$, possibly conditional on truncating or censoring
variables (see Section~\ref{sec:examples} for details), and
$I_{\theta_0} = E[l_{\theta_0}(Z_1) l_{\theta_0}'(Z_1)]$ is the Fisher information matrix,
assumed non-singular. We drop the dependence on $G$ of $l_{\theta}$ and $I_{\theta_0}$ for
simplicity of notation (but we allow these quantities to depend on $G$). An estimator of $\Pi^\perp \varphi(z_1)$ is
\begin{equation}
\hat{\Pi}^\perp \varphi(z_1) = g_\varphi(z_1, \hat{\theta}) + B_n(\varphi, \hat{\theta}) \hat{I}^{-1} \hat{l}(z_1),
\label{eq:Pihat}
\end{equation}
where $\hat{\theta}$ is a consistent estimator of $\theta_0$, and $B_n(\varphi, \hat{\theta})$
and $\hat{I}^{-1}\hat{l}(\cdot)$ are consistent estimators of $B(\varphi, \theta_0)$ and
$I_{\theta_0}^{-1}l_{\theta_0}(\cdot)$, respectively. We provide regularity conditions for Assumption A2 to hold. Define
$B_n^\partial(\varphi, \theta) = E_n[\partial g_\varphi(Z_1, \theta)/\partial \theta']$.
We allow but do not require that $B_n=B_n^\partial$. Let $int(\Theta_0)$ denote the interior
of $\Theta_0$.

\medskip
\noindent \textbf{Assumption B2.}
(i) For each $z \in \mathcal{S}_{Z_1}$ and $\varphi \in \mathcal{B}_K$,
$\theta \mapsto g_\varphi(z, \theta)$ is differentiable in the compact set $\Theta_0$
such that for any consistent estimator $\theta_n$ of $\theta_0\in int(\Theta_0)$,
$\|B_n^\partial(\cdot, \theta_n) - B(\cdot, \theta_0)\| = o_P(1)$.
Moreover, $\|B_n(\cdot, \hat{\theta}) - B(\cdot, \theta_0)\| = o_P(1)$;
(ii) $\hat{I}$ and $\hat{l}(\cdot)$ satisfy
\[
\left| \hat{I}^{-1} E_n[\hat{l}(Z_1)] - I_{\theta_0}^{-1} E_n[l_{\theta_0}(Z_1)] + (\hat{\theta} - \theta_0) \right| = o_P(n^{-1/2}),
\]
with $\left| E_n[l_{\theta_0}(Z_1)] \right| = O_P(n^{-1/2})$ and
$\left| \hat{\theta} - \theta_0 \right| = O_P(n^{-1/2})$.

Lemma~\ref{lem:B1} in Supplementary Material Appendix~\ref{ULLN} can be used to verify B2(i).
Assumption B2(ii) is standard.

\begin{lemma}\label{Lemma:B2}
Under Assumption~B2, Assumption A2 holds.
\end{lemma}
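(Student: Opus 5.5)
The plan is to expand $E_n[\hat\Pi^\perp\varphi(Z_1)]-E_n[\Pi^\perp\varphi(Z_1)]$ using the forms \eqref{eq:Pi} and \eqref{eq:Pihat}, and to show that every piece is $o_P(n^{-1/2})$ uniformly over $\varphi\in\mathcal B_K$. Substituting gives
\[
E_n[\hat\Pi^\perp\varphi]-E_n[\Pi^\perp\varphi]
=\big\{E_n[g_\varphi(Z_1,\hat\theta)]-E_n[g_\varphi(Z_1,\theta_0)]\big\}
+B_n(\varphi,\hat\theta)\hat I^{-1}E_n[\hat l]-B(\varphi,\theta_0)I_{\theta_0}^{-1}E_n[l_{\theta_0}].
\]
The strategy is to show that the first brace equals $B(\varphi,\theta_0)(\hat\theta-\theta_0)$ up to a uniform $o_P(n^{-1/2})$ remainder, and that the second group equals $-B(\varphi,\theta_0)(\hat\theta-\theta_0)$ up to the same order. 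The two leading terms then cancel.

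\textbf{First brace.} By B2(i), $\theta\mapsto g_\varphi(z,\theta)$ is differentiable on $\Theta_0$. Since $\hat\theta$ is consistent and $\theta_0\in int(\Theta_0)$, with probability tending to one the segment between $\theta_0$ and $\hat\theta$ lies in $\Theta_0$. A mean-value argument on the averaged map $\theta\mapsto E_n[g_\varphi(Z_1,\theta)]$ then gives
\[
E_n[g_\varphi(Z_1,\hat\theta)]-E_n[g_\varphi(Z_1,\theta_0)]=B_n^\partial(\varphi,\bar\theta_\varphi)(\hat\theta-\theta_0),
\]
where $\bar\theta_\varphi$ lies on the segment.

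To keep this uniform in $\varphi$ and valid for vector $\theta$, I would instead write the difference as $\int_0^1 B_n^\partial(\varphi,\theta_0+t(\hat\theta-\theta_0))\,dt\,(\hat\theta-\theta_0)$. The intermediate points then do not depend on $\varphi$.

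The condition in B2(i), that $\|B_n^\partial(\cdot,\theta_n)-B(\cdot,\theta_0)\|=o_P(1)$ for \emph{any} consistent $\theta_n$, is then applied to each $\theta_n=\theta_0+t(\hat\theta-\theta_0)$. Since $|\hat\theta-\theta_0|=O_P(n^{-1/2})$ by B2(ii), the remainder is $o_P(1)\,O_P(n^{-1/2})$ uniformly over $\mathcal B_K$.

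\textbf{Main obstacle.} The delicate point is passing the $o_P(1)$ through the integral over $t$. Pointwise-in-$t$ application of B2(i) does not immediately give uniformity in $t$. My first attempt would be to note that B2(i) applies to any random sequence $\theta_n$, including one built from a maximising (or near-maximising) $t_n$. Applying it to $\theta_n=\theta_0+t_n(\hat\theta-\theta_0)$ then yields $\sup_t\|B_n^\partial(\cdot,\theta_0+t(\hat\theta-\theta_0))-B(\cdot,\theta_0)\|=o_P(1)$.

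\textbf{Second group.} Write
\[
B_n(\varphi,\hat\theta)\hat I^{-1}E_n[\hat l]
=\{B_n(\varphi,\hat\theta)-B(\varphi,\theta_0)\}\hat I^{-1}E_n[\hat l]+B(\varphi,\theta_0)\hat I^{-1}E_n[\hat l].
\]
By B2(ii), $\hat I^{-1}E_n[\hat l]=I_{\theta_0}^{-1}E_n[l_{\theta_0}]-(\hat\theta-\theta_0)+o_P(n^{-1/2})=O_P(n^{-1/2})$. Combined with the operator-norm consistency of $B_n$ in B2(i), the first term is $o_P(n^{-1/2})$ uniformly in $\varphi$. Substituting the B2(ii) expansion into the second term gives
\[
B(\varphi,\theta_0)I_{\theta_0}^{-1}E_n[l_{\theta_0}]-B(\varphi,\theta_0)(\hat\theta-\theta_0)+B(\varphi,\theta_0)\,o_P(n^{-1/2}).
\]
Here $\sup_{\varphi\in\mathcal B_K}|B(\varphi,\theta_0)|=\|B(\cdot,\theta_0)\|<\infty$, because $B(\cdot,\theta_0)$ is a bounded linear functional on $\mathcal H_K$. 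This boundedness is implicit in B2(i), since it is an operator-norm limit.

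\textbf{Conclusion.} Adding the two expansions, the $B(\varphi,\theta_0)(\hat\theta-\theta_0)$ terms cancel, and the $B(\varphi,\theta_0)I_{\theta_0}^{-1}E_n[l_{\theta_0}]$ term cancels against its counterpart in $\Pi^\perp$. What remains is $o_P(n^{-1/2})$ uniformly over $\mathcal B_K$, which is Assumption A2.
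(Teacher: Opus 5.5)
Your proof is correct and follows the paper's argument. The paper uses the same split into $I_{1n}$ (a mean-value expansion giving $B_n^\partial(\varphi,\bar\theta)(\hat\theta-\theta_0)$), $I_{2n}$ (B2(ii), giving $-B(\varphi,\theta_0)(\hat\theta-\theta_0)+o_P(n^{-1/2})$) and $I_{3n}$ (operator-norm consistency of $B_n$ times an $O_P(n^{-1/2})$ factor), and then lets the leading terms cancel. Your near-maximiser argument for $\sup_{t\in[0,1]}\|B_n^\partial(\cdot,\theta_0+t(\hat\theta-\theta_0))-B(\cdot,\theta_0)\|=o_P(1)$ tightens a step the paper glosses over: there the intermediate value $\bar\theta$ depends on $\varphi$, yet B2(i) is applied as if it were a single consistent estimator. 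Once you have that uniformity in $t$, the ordinary scalar mean value theorem along the segment already suffices, so the integral representation is not needed.
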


\paragraph{Continuation of Example~\ref{ex:running-rs}: checking Assumption A2.}
Standard MLE theory yields,
\[
\frac{\partial g_\varphi(z_1, \theta)}{\partial \theta'} = - \frac{\partial E_\theta[\varphi(Z_1)]}{\partial \theta'}
= - E_\theta[\varphi(Z_1) l_\theta'(Z_1)],
\]
which does not depend on $z_1$. Hence, $B_n^\partial(\varphi, \theta) = B(\varphi, \theta)$ and
sufficient conditions for $\|B_n^\partial(\cdot, \theta_n) - B(\cdot, \theta_0)\| = o_P(1)$ are that
$f_\theta$ is differentiable with
\[
\theta \mapsto \int_{\mathcal{S}_{Z_1}} K^{1/2}(z,z) \frac{\partial}{\partial \theta} f_\theta(z)\,dz
\quad\text{continuous in } \Theta_0;
\]
\begin{equation}
\int_{\mathcal{S}_{Z_1}} K^{1/2}(z,z) \sup_{\theta \in \Theta_0} \left| \frac{\partial}{\partial \theta} f_\theta(z) \right| dz < \infty.
\label{Khalfmoment}
\end{equation}
If $B_n(\varphi, \hat{\theta}) = -E_n[\varphi(Z_1) l_{\hat{\theta}}'(Z_1)]$,
Lemma~\ref{lem:B1} gives $\|B_n(\cdot, \hat{\theta}) - B(\cdot, \theta_0)\| = o_P(1)$ under
\begin{equation}
E\Big[K(Z_1,Z_1) \sup_{\theta \in \Theta_0} |l_\theta(Z_1)|^2\Big] < \infty.
\label{Kmoment}
\end{equation}
If $K$ is bounded, this reduces to a uniform second-order moment condition on the score $l_\theta$,
which is standard in MLE theory. $\square$

\medskip

Lemma~\ref{Lemma:B3} below shows that a mild sufficient condition for Assumption A3 is
\begin{equation}
E[K_g^\perp(Z_1, Z_1)] < \infty.
\label{eq:BI}
\end{equation}
This holds if $K_g^\perp$ is bounded, or under \eqref{eq:kernel-bounded} if $\Pi^\perp$ is a
\emph{norm-reducing} mapping in $L_2(P)$, i.e., $\|\Pi^\perp f\|_2 \le \|f\|_2$, $f\in L_2(P)$. Orthogonal projection operators in $L_2(P)$ are an example.

\medskip
\noindent \textbf{Assumption B3.}
$\Pi^\perp: \mathcal{H}_K \to L_2(P)$ is linear and continuous with
$E_{\theta_0,G}[\Pi^\perp \varphi(Z_1)] = 0$.

\begin{lemma}[Sufficiency for Assumption A3]\label{Lemma:B3}
(i) If Assumption~B3 holds, then \eqref{eq:BI} is sufficient for Assumption A3.
Furthermore, (ii) if $\Pi^\perp$ is norm-reducing under $\mathcal{H}_0$, then
\eqref{eq:kernel-bounded} is sufficient for \eqref{eq:BI} and $\Pi^\perp$ is continuous.
\end{lemma}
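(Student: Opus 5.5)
The plan for (i) is to realize the process $S_n$ as a normalized sum of iid random elements of the separable Hilbert space $\mathcal H_K$ and then apply the central limit theorem in Hilbert spaces. By Assumption B3, for each $z$ the map $\varphi\mapsto \Pi^\perp\varphi(z)$ is linear. The first step is to show that, for $P$-almost every $z$, this functional is bounded on $\mathcal H_K$, so that Riesz gives a representer $h_z\in\mathcal H_K$ with $\Pi^\perp\varphi(z)=\langle \varphi,h_z\rangle_K$ and $\|h_z\|_K^2=K_g^\perp(z,z)$.

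To build $h_z$, fix an orthonormal basis $\{e_k\}$ of $\mathcal H_K$; separability makes it countable. Represent $A\sim GP(K)$ as $A=\sum_k \xi_k e_k$ with iid standard normal $\xi_k$, understood in $L_2$-sense after applying the continuous map $\Pi^\perp$. This gives
\[
K_g^\perp(z,z)=\sum_k (\Pi^\perp e_k(z))^2 .
\]
Condition \eqref{eq:BI} then yields $E\sum_k(\Pi^\perp e_k(Z_1))^2<\infty$, so the sum is finite a.s. We set $h_z:=\sum_k \Pi^\perp e_k(z)\,e_k$. For each fixed $\varphi$, $\Pi^\perp\varphi=\sum_k\langle\varphi,e_k\rangle_K\Pi^\perp e_k$ in $L_2(P)$ by continuity of $\Pi^\perp$. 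Passing to a.s.\ identification on a countable dense set, this shows the identity holds in $L_2(P)$ for every $\varphi$, which is all the empirical process needs.

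Next I would identify $S_n$ with $\mathcal H_K$-valued sums. The elements $H_i:=h_{Z_i}$ are iid, Bochner measurable as limits of finite sums, and satisfy $E\|H_1\|_K^2=E[K_g^\perp(Z_1,Z_1)]<\infty$. Their mean is zero because $\langle\varphi,EH_1\rangle_K=E[\Pi^\perp\varphi(Z_1)]=0$ by B3. Hence $S_n(\varphi)=\langle\varphi,n^{-1/2}\sum_i H_i\rangle_K$. By the isometry $\mathcal H_K^\ast\cong\mathcal H_K$, the Hilbert-space CLT (finite second moment suffices in separable Hilbert spaces) gives $n^{-1/2}\sum_i H_i\Rightarrow$ a centered Gaussian element. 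This delivers $S_n\Rightarrow S_\infty$ in $\mathcal H_K^\ast$, with the covariance of $S_\infty$ determined by $K_g^\perp$. The main obstacle is the measurability and a.s.\ pointwise-representer step in the previous paragraph; I would handle it exactly through the orthonormal-basis construction, avoiding any pointwise evaluation of $\Pi^\perp\varphi$ beyond $L_2$ classes.

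For (ii), write $K(z,z)=\sum_k e_k(z)^2$. Norm reduction then gives
\[
E[K_g^\perp(Z_1,Z_1)]=\sum_k\|\Pi^\perp e_k\|_2^2\le\sum_k\|e_k\|_2^2=E[K(Z_1,Z_1)]<\infty .
\]
For continuity, $\|\Pi^\perp\varphi\|_2\le\|\varphi\|_2\le (E K(Z_1,Z_1))^{1/2}\|\varphi\|_K$, using the reproducing property and Cauchy--Schwarz: $\varphi(z)^2\le\|\varphi\|_K^2K(z,z)$.
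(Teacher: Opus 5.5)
Your argument is correct, and it is self-contained where the paper's is not. For (i), the paper simply invokes Theorem~1 of \cite{FernandezRivera2022} (or Theorem~4.1 of \cite{Escanciano2024}) with $S_n(a)=E_n[\Pi^\perp a(Z_1)]$. You instead reprove what that result delivers. You build the representer $h_z=\sum_k\Pi^\perp e_k(z)\,e_k$ from an orthonormal basis, identify $K_g^\perp(z,z)=\|h_z\|_K^2$, and then apply the CLT for iid square-integrable elements of a separable Hilbert space together with the isometry $\mathcal H_K\cong\mathcal H_K^\ast$. This makes explicit two things the citation leaves implicit. First, $K_g^\perp$ is only meaningful through such a series, because paths of $A\sim GP(K)$ generally lie outside $\mathcal H_K$, which is the domain of $\Pi^\perp$. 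Second, it is the representer version of the $L_2(P)$-classes $\Pi^\perp\varphi$ that makes $S_n$ a genuine $\mathcal H_K^\ast$-valued random element; your countable-dense-set remark shows any other such version agrees with it almost surely.

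For (ii), the paper appeals to a pointwise bound $K_g^\perp(z,z)\le CK_g(z,z)$. You instead compare Hilbert--Schmidt norms: $E[K_g^\perp(Z_1,Z_1)]=\sum_k\|\Pi^\perp e_k\|_2^2\le\sum_k\|e_k\|_2^2=E[K(Z_1,Z_1)]$. Your integrated form is what norm reduction in $L_2(P)$ actually yields, since a contraction need not control the diagonal pointwise. For example, in the complete-data exponential model with a bounded Gaussian kernel, $K_g^\perp(z,z)$ grows like $z^2$ while $K_g(z,z)$ stays bounded. Your bound also lands directly on \eqref{eq:kernel-bounded}. The continuity step coincides with the paper's, except that you replace Proposition~2.5 of \cite{FerreiraMenegatto2013} with the elementary bound $\varphi(z)^2\le\|\varphi\|_K^2K(z,z)$.
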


\noindent The continuity of $\Pi^\perp$ follows from
$\|\Pi^\perp f\|_2 \le \|f\|_2 \le C\|f\|_K$,  for each $f\in\mathcal{H}_K$,
where $C=\sqrt{E[K(Z_1, Z_1)]}$, see Proposition 2.5 in \cite{FerreiraMenegatto2013}.

\paragraph{Continuation of Example~\ref{ex:running-rs}: checking Assumption A3.}
Since $\Pi^\perp \varphi(x)$ is the orthogonal projection of $\varphi(z_1)$ onto the orthocomplement of
the span of $(1, l_{\theta_0}(z_1))$ in $L_2(P)$, $\Pi^\perp$ is norm-reducing. Hence, $E[K(Z_1, Z_1)]<\infty$ implies Assumption A3. $\square$

\subsection{Computation of the test}\label{Computation}

Let $\hat{\Pi}^\perp$ be given as in \eqref{eq:Pihat}, and assume the following.

\medskip
\noindent\textbf{Assumption C1.}
For all $\theta\in\Theta$ there exist $a_{\theta}:\mathcal{S}_{Z_1}\to\mathcal{H}_K$,
$b_\theta\in\mathcal{H}^p_K$, and $b_{n,\theta}\in\mathcal{H}^p_K$ such that, for all
$\varphi\in\mathcal{H}_K$ and $z\in\mathcal{S}_{Z_1}$,
\[
g_\varphi(z,\theta)=\langle \varphi,a_{\theta}(z)\rangle_K,\qquad
B(\varphi,\theta)=\langle \varphi,b_\theta\rangle_K,\qquad
B_n(\varphi,\theta)=\langle \varphi,b_{n,\theta}\rangle_K.
\]

\medskip % <-- space after C1 (as requested)

\noindent Define $\hat{b}=b_{n,\hat{\theta}}$, $\hat{h}(z)=a_{\hat{\theta}}(z)+\hat{b}\hat{I}^{-1}\hat{l}(z)$,
and for $z_1,z_2\in\mathcal{S}_{Z_1}$, $\hat{K}_{g}^{\perp}(z_1,z_2):=\langle \hat{h}(z_1),\hat{h}(z_2)\rangle_K$.

\begin{lemma}
\label{Lemma:QuadraticForm}
Under \eqref{eq:kernel-bounded} and Assumption~C1, \eqref{eq:QK_quadratic_form} holds.
\end{lemma}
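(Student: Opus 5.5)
The plan is a Riesz-representation argument followed by Cauchy--Schwarz. First, write $\hat\Pi^\perp\varphi(z)$ as an inner product in $\mathcal H_K$. By \eqref{eq:Pihat} and Assumption~C1, for every $\varphi\in\mathcal H_K$ and $z\in\mathcal S_{Z_1}$,
\[
\hat\Pi^\perp\varphi(z)=g_\varphi(z,\hat\theta)+B_n(\varphi,\hat\theta)\hat I^{-1}\hat l(z)
=\langle\varphi,a_{\hat\theta}(z)\rangle_K+\sum_{k=1}^p\langle\varphi,\hat b_k\rangle_K\,\{\hat I^{-1}\hat l(z)\}_k
=\langle\varphi,\hat h(z)\rangle_K .
\]
Here $\hat b=(\hat b_1,\dots,\hat b_p)\in\mathcal H_K^p$, and $\hat b\hat I^{-1}\hat l(z)$ is read as the element $\sum_k \hat b_k\{\hat I^{-1}\hat l(z)\}_k$ of $\mathcal H_K$. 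This step uses only linearity of the inner product, so $\hat h(z)\in\mathcal H_K$.

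Next, average over the sample. Linearity gives
\[
\hat\Delta_n^\perp(\varphi)=\langle\varphi,\hat m_n\rangle_K,\qquad \hat m_n:=\frac1n\sum_{i=1}^n\hat h(Z_i)\in\mathcal H_K .
\]
This is a finite sum of elements of $\mathcal H_K$, so no integrability issue arises. Condition \eqref{eq:kernel-bounded} guarantees $\mathcal H_K\subset L_2(P)$, so the objects $g_\varphi$ and $\hat\Pi^\perp\varphi$ live in the stated spaces. The only care needed is in confirming that the $a_\theta(z)$ and $b$'s are well-defined elements of $\mathcal H_K$ at each sample point.

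Then compute the supremum. By Cauchy--Schwarz, $\sup_{\|\varphi\|_K\le1}\langle\varphi,\hat m_n\rangle_K^2=\|\hat m_n\|_K^2$. The supremum is attained at $\varphi=\hat m_n/\|\hat m_n\|_K$ when $\hat m_n\neq0$, and the value is $0$ otherwise. Expanding the squared norm gives
\[
\hat Q_K=\|\hat m_n\|_K^2=\frac1{n^2}\sum_{i,j}\langle\hat h(Z_i),\hat h(Z_j)\rangle_K=\frac1{n^2}\sum_{i,j}\hat K_g^\perp(Z_i,Z_j).
\]

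The bootstrap statistic follows by the same argument. Replace $\hat m_n$ with $\hat m_n^*=n^{-1}\sum_i\omega_i\hat h(Z_i)$. The multipliers are scalars fixed conditionally on the data, so
\[
\hat Q_K^*=\|\hat m_n^*\|_K^2=n^{-2}\sum_{i,j}\omega_i\omega_j\hat K_g^\perp(Z_i,Z_j),
\]
which is the second identity in \eqref{eq:QK_quadratic_form}.

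There is no real obstacle. The only delicate point is the first step: making precise that the vector-valued product $\hat b\hat I^{-1}\hat l(z)$ defines an element of $\mathcal H_K$ whose inner product with $\varphi$ equals $B_n(\varphi,\hat\theta)\hat I^{-1}\hat l(z)$. This follows from bilinearity once $\hat b$ is written componentwise.
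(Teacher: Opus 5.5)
Your proposal is correct and follows essentially the same route as the paper's proof. The paper likewise writes $\hat\Pi^\perp\varphi(z)=\langle\varphi,\hat h(z)\rangle_K$, sets $\hat v_n=E_n[\hat h(Z_1)]$, uses Cauchy--Schwarz to obtain $\hat Q_K=\|\hat v_n\|_K^2$, expands the squared norm, and treats the bootstrap version analogously. Your componentwise reading of $\hat b\,\hat I^{-1}\hat l(z)$ simply makes explicit a step the paper takes for granted.
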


\paragraph{Algorithm (generic computation of $n\hat Q_K$ and multiplier bootstrap).}
\begin{enumerate}[leftmargin=*,itemsep=0pt,topsep=2pt]
\item \textit{Estimate nuisances under $\mathcal H_0$:} compute $\hat\theta$ (and $\hat l,\hat I,\hat b$).
\item \textit{Build $\hat h$:} for each $i\le n$, evaluate $\hat h(Z_i)=a_{\hat\theta}(Z_i)+\hat b\,\hat I^{-1}\hat l(Z_i)\in\mathcal H_K$.
\item \textit{Form the Gram matrix:} compute $\hat K_g^\perp(Z_i,Z_j)=\langle \hat h(Z_i),\hat h(Z_j)\rangle_K$ for all $i,j$.
\item \textit{Compute the statistic:} $\hat Q_K = n^{-2}\sum_{i,j}\hat K_g^\perp(Z_i,Z_j)$.
\item \textit{Bootstrap:} for $b=1,\dots,B$, draw multipliers $\{\omega_i^{(b)}\}_{i=1}^n$ i.i.d.\ (e.g.\ Mammen or Rademacher) and set
$\hat Q_K^{*(b)} = n^{-2}\sum_{i,j}\omega_i^{(b)}\omega_j^{(b)}\hat K_g^\perp(Z_i,Z_j)$.
\item \textit{Decision:} reject if $\hat Q_K$ exceeds the $(1-\alpha)$ bootstrap quantile of $\{\hat Q_K^{*(b)}\}_{b=1}^B$.
\end{enumerate}

\paragraph{Continuation of Example~\ref{ex:running-rs}: checking Assumption C1.} In this example,
\[
\hat{\Pi}^{\perp}\varphi(z_1)=g_\varphi(z_1,\hat{\theta})+B_n(\varphi,\hat{\theta})\hat{I}^{-1}l_{\hat{\theta}}(z_1),
\]
where $g_\varphi(z_1,\hat{\theta})=\varphi(z_1)-E_{\hat{\theta}}[\varphi(Z_1)]$,
$B_n(\varphi,\hat{\theta})=-E_n[\varphi(Z_1)l'_{\hat{\theta}}(Z_1)]$, and
$\hat I=E_n[l_{\hat{\theta}}(Z_1)l'_{\hat{\theta}}(Z_1)]$. Assumption~C1 holds with $a_\theta(z)=K_z-\mu(F_\theta),$ $b_{\theta}(\cdot)=-E[K_{Z_1}(\cdot)l'_{\theta}(Z_1)],$ and $b_{n,\theta}(\cdot)=-E_n[K_{Z_1}(\cdot)l'_{\theta}(Z_1)],$
where $\mu(F_\theta):=\int K_u\,dF_\theta(u)\in\mathcal H_K$ (e.g.\ under \eqref{Khalfmoment}). A simple matrix representation for the test statistic follows from the Algorithm above and is provided in Section \ref{sec:Complete} of the Supplementary Material. $\square$
%%%%%%%%%%%%%%%%%%%%%%%%%%%%%%%%%%%%%%%%%%%%%%%%%%%%%
\subsection{Asymptotic power properties}
\label{Power}

\subsubsection{Global Power}

To derive the global power against fixed alternatives, we first show that
under the alternative hypothesis
\begin{equation}
\sup_{\varphi \in \mathcal{B}_{K}}\left\vert E_{n}[\hat{\Pi}^{\perp }\varphi
(Z_{1})-\Pi ^{\perp }\varphi (Z_{1})]\right\vert =o_{P}(1),  \label{P1}
\end{equation}
where $\hat{\Pi}^{\perp }\ $and $\Pi ^{\perp }$ are defined in (\ref{eq:Pihat})
and (\ref{eq:Pi}), respectively. In this discussion, $\theta _{0}$ is the
probabilistic limit of $\hat{\theta},$ which is assumed to exist under fixed
alternatives.

\medskip
\noindent \textbf{Assumption P1.} (i) Assumption B2(i) holds;  (ii) the estimators $\hat{I}^{-1}\hat{l}(\cdot
)$ satisfy
\begin{equation*}
\left\vert \hat{I}^{-1}E_{n}[\hat{l}(Z_{1})]-I_{\theta
_{0}}^{-1}E_{n}[l_{\theta _{0}}(Z_{1})]\right\vert =o_{P}(1);
\end{equation*}
$\left\vert E_{n}[l_{\theta _{0}}(Z_{1})]\right\vert =O_{P}(1)$ and $%
\left\vert \hat{\theta}-\theta _{0}\right\vert =o_{P}(1)$.\bigskip

Assumption P1(ii) is also standard in MLE theory, see \cite{Bickel1993}. The proof of the next
result follows the arguments of Lemma \ref{Lemma:B2}, and hence it is
omitted.
\begin{lemma}\label{lem:P1}
Under Assumption~P1, the uniform expansion in \eqref{P1} holds.
\end{lemma}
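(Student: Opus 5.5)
The plan is a direct algebraic decomposition followed by termwise uniform bounds over $\mathcal B_K$. By \eqref{eq:Pihat} and \eqref{eq:Pi}, for each $\varphi\in\mathcal B_K$,
\begin{align*}
E_n[\hat\Pi^\perp\varphi-\Pi^\perp\varphi]
&= E_n[g_\varphi(Z_1,\hat\theta)-g_\varphi(Z_1,\theta_0)]
+ \{B_n(\varphi,\hat\theta)-B(\varphi,\theta_0)\}\hat I^{-1}E_n[\hat l(Z_1)]\\
&\quad + B(\varphi,\theta_0)\bigl\{\hat I^{-1}E_n[\hat l(Z_1)]-I_{\theta_0}^{-1}E_n[l_{\theta_0}(Z_1)]\bigr\}.
\end{align*}
I will show that the supremum over $\mathcal B_K$ of each of the three terms is $o_P(1)$. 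This mirrors the proof of Lemma~\ref{Lemma:B2}, except that the rate requirements are replaced by consistency.

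\emph{First term.} I apply the mean value theorem in $\theta$ along the segment between $\theta_0$ and $\hat\theta$, using the differentiability in B2(i). Exchanging derivative and sample average, as in Lemma~\ref{Lemma:B2}, gives
\[
E_n[g_\varphi(Z_1,\hat\theta)-g_\varphi(Z_1,\theta_0)]=B_n^\partial(\varphi,\bar\theta)(\hat\theta-\theta_0)
\]
for an intermediate $\bar\theta$. If $\bar\theta$ depends on $\varphi$, I instead use the integral form $\int_0^1 B_n^\partial(\varphi,\theta_0+t(\hat\theta-\theta_0))\,dt$. The intermediate point is consistent, so B2(i) yields $\sup_{\mathcal B_K}|B_n^\partial(\varphi,\bar\theta)|\le\|B(\cdot,\theta_0)\|+o_P(1)=O_P(1)$. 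Multiplying by $|\hat\theta-\theta_0|=o_P(1)$ from P1 makes this term $o_P(1)$ uniformly.

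\emph{Second and third terms.} By P1(ii), $\hat I^{-1}E_n[\hat l]=I_{\theta_0}^{-1}E_n[l_{\theta_0}]+o_P(1)=O_P(1)$, using $|E_n[l_{\theta_0}]|=O_P(1)$ and nonsingularity of $I_{\theta_0}$. B2(i) gives $\|B_n(\cdot,\hat\theta)-B(\cdot,\theta_0)\|=o_P(1)$, so the second term is $o_P(1)\cdot O_P(1)$ uniformly. For the third term, $\|B(\cdot,\theta_0)\|<\infty$ because $B(\cdot,\theta_0)$ is a bounded linear functional on $\mathcal H_K$, being the limit in operator norm. Combined with P1(ii), the third term is $O(1)\cdot o_P(1)$. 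Summing the three bounds gives \eqref{P1}.

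The main obstacle is the uniform mean value step. It requires interpreting B2(i) under the fixed alternative, with $\theta_0$ the probability limit of $\hat\theta$, and handling $\varphi$-dependent intermediate points. The integral form of the mean value theorem, together with the supremum over $t\in[0,1]$ of consistent intermediate estimators, takes care of the second issue. The first is taken care of by the assumption that B2(i) holds in the sense stated in P1(i).
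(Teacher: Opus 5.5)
Your proposal is correct and follows the route the paper intends; the paper omits this proof with a pointer to Lemma~\ref{Lemma:B2}, which uses the same three-term decomposition, a mean-value expansion of the first term through $B_n^\partial$, and the nuisance conditions for the remaining two terms. The only difference is that, because just $o_P(1)$ is required, you bound the three terms separately instead of exploiting the cancellation of $B(\varphi,\theta_0)(\hat\theta-\theta_0)$ between $I_{1n}$ and $I_{2n}$ as in Lemma~\ref{Lemma:B2}, which is the natural adaptation.
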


\begin{corollary}\label{cor:P1}
Under Assumption P1 and $E[K_{g}^{\perp
}(Z_{1},Z_{1})]<\infty $, it follows that
\begin{equation*}
\hat{Q}_{K}\rightarrow _{p}E[K_{g}^{\perp }(Z_{1},Z_{2})].
\end{equation*}
\end{corollary}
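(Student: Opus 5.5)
We prove Corollary~\ref{cor:P1} by reducing $\hat Q_K$ to a squared dual norm of a sample mean in $\mathcal H_K^\ast$ and then applying a law of large numbers in that separable Hilbert space.

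\textbf{Step 1 (reduction to the population map).} By Lemma~\ref{lem:P1}, Assumption~P1 gives the uniform expansion \eqref{P1}. Write
\[
\hat\Delta^\perp_n(\varphi)=E_n[\Pi^\perp\varphi(Z_1)]+r_n(\varphi),\qquad \sup_{\varphi\in\mathcal B_K}|r_n(\varphi)|=o_P(1).
\]
Define $T_n(\varphi):=E_n[\Pi^\perp\varphi(Z_1)]$ as an element of $\mathcal H_K^\ast$. Since $\hat Q_K=\|T_n+r_n\|^2$ and $|\,\|a\|^2-\|b\|^2\,|\le \|a-b\|(\|a\|+\|b\|)$, it suffices to show that $\|T_n\|^2\to_p E[K_g^\perp(Z_1,Z_2)]$. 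This limit is finite by Cauchy--Schwarz and the moment condition $E[K_g^\perp(Z_1,Z_1)]<\infty$, which in turn gives $\|T_n\|=O_P(1)$ and closes the reduction.

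\textbf{Step 2 (Riesz representation).} Identify $\mathcal H_K^\ast$ with $\mathcal H_K$ by the Riesz map. I would first show that, under the moment condition, there is a measurable $h:\mathcal S_{Z_1}\to\mathcal H_K$ with
\[
\Pi^\perp\varphi(z)=\langle\varphi,h(z)\rangle_K \quad\text{and}\quad \langle h(z_1),h(z_2)\rangle_K=K_g^\perp(z_1,z_2).
\]
One route is through the Gaussian-process definition of $K_g^\perp$. The alternative, in the spirit of Assumption~C1 and Lemma~\ref{Lemma:QuadraticForm}, is to build $h$ from the form \eqref{eq:Pi}, taking $h(z)=a_{\theta_0}(z)+b_{\theta_0}I_{\theta_0}^{-1}l_{\theta_0}(z)$.

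Given $h$, we get $\|h(Z_1)\|_K^2=K_g^\perp(Z_1,Z_1)$, which is integrable, so $h(Z_1)$ is Bochner integrable in the separable Hilbert space $\mathcal H_K$. Moreover $T_n$ corresponds to $E_n[h(Z_1)]$, and hence
\[
\|T_n\|^2=\frac{1}{n^2}\sum_{i,j}K_g^\perp(Z_i,Z_j).
\]

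\textbf{Step 3 (LLN in Hilbert space).} By the strong law of large numbers for i.i.d.\ Bochner-integrable random elements of a separable Banach space (Mourier), $E_n[h(Z_1)]\to E[h(Z_1)]$ almost surely in $\mathcal H_K$. Continuity of the norm then gives
\[
\|T_n\|^2\to\|E[h(Z_1)]\|_K^2=E\big[\langle h(Z_1),h(Z_2)\rangle_K\big]=E[K_g^\perp(Z_1,Z_2)],
\]
where the middle equality uses independence of $Z_1$ and $Z_2$ together with Fubini. An equivalent route avoids the vector-valued LLN: split the $V$-statistic into a diagonal term, which is $n^{-1}E_n[K_g^\perp(Z_1,Z_1)]=O_P(n^{-1})$, and a $U$-statistic, which converges by the SLLN for $U$-statistics since $E|K_g^\perp(Z_1,Z_2)|\le E[K_g^\perp(Z_1,Z_1)]<\infty$.

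\textbf{Main obstacle.} The delicate point is Step 2 under the alternative. The expectation $E[\cdot]$ is now taken under $P$, not under $P_{\theta_0,G}$, so the centering of $\Pi^\perp$ no longer holds. We must also make sure that $K_g^\perp$ is interpreted as $\langle h(z_1),h(z_2)\rangle_K$ with $\Pi^\perp$ evaluated at the pseudo-true $\theta_0$.

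I would define $K_g^\perp$ this way explicitly and verify measurability of $h$ via the separability of $\mathcal H_K$. Measurability reduces to measurability of $z\mapsto\langle e_k,h(z)\rangle_K=\Pi^\perp e_k(z)$ for an orthonormal basis $(e_k)$, which holds because each $\Pi^\perp e_k\in L_2(P)$.
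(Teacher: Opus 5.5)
Your proposal is correct and follows essentially the paper's route: the paper's brief argument combines the uniform expansion \eqref{P1} from Lemma~\ref{lem:P1}, the ULLN of Lemma~\ref{lem:B1} applied with $s_\varphi(z,\theta)=\Pi^\perp\varphi(z)$ (free of $\theta$, so Assumption~B1 reduces to $E[K_g^\perp(Z_1,Z_1)]<\infty$), and the continuous mapping theorem. Your Riesz-representer step plus the Hilbert-space SLLN (or the $V$-statistic split) is that ULLN specialized to the fixed $\theta_0$, and your identification $\sup_{\varphi\in\mathcal B_K}(E[\Pi^\perp\varphi(Z_1)])^2=\|E[h(Z_1)]\|_K^2=E[K_g^\perp(Z_1,Z_2)]$ is the same one used in the proof of Theorem~\ref{thm:P1}.
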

Consistency against all fixed alternatives follows if $E_{F,G}[K_{g}^{\perp
}(Z_{1},Z_{2})]>0$ (dependence on $F,G$ emphasized).
The main issues can be seen from our running example.

\paragraph{Continuation of Example~\ref{ex:running-rs}: consistency.} A useful interpretation for $E[K_{g}(Z_{1},Z_{2})]$ is as a
quadratic distance between $F$ and $F_{0}$:
\begin{eqnarray*}
d_{K}(F,F_{0}) &=&E[K_{g}(Z_{1},Z_{2})] \\
&=&\int_{\mathcal{S}_{Z_1}} \int_{\mathcal{S}_{Z_1}} K(z_{1},z_{2})d(F-F_{0})(z_{1})d(F-F_{0})(z_{2}),
\end{eqnarray*}
see \cite{Lindsay2008}. Therefore, $E[K_{g}(Z_{1},Z_{2})]>0$ under the
alternative hypothesis if the kernel $K$ is characteristic; see \cite{Sriperumbudur2011}. An alternative route to prove consistency considers a kernel having the integral representation $K(z_1,z_2)=\int_{\mathcal{T}}  k(z_{1},t)k(z_{2},t)d\Psi(t)$, for a class ${k(\cdot,t):t\in\mathcal{T}}$, an index set $\mathcal{T}$ and a measure $\Psi$, so that, by Fubini,
\begin{equation*}
d_{K}(F,F_{0})= \int_{\mathcal{T}}  (E_{F}[k(Z_{1},t)]-E_{F_{0}}[k(Z_{1},t)])^2d\Psi(t)=0,
\end{equation*}
iff $E_{F}[k(Z_{1},\cdot)]=E_{F_{0}}[k(Z_{1},\cdot)]$ $\Psi-a.s.$. To illustrate this consistency result, take $K=K_{ind}$, where $K_{ind}(z_1,z_2)=E_{F_0}\!\left[\1\{Z\le z_1\}\1\{Z\le z_2\}\right]$, and note that by Fubini,
\begin{equation*}
d_{K}(F,F_{0})=E_{F_{0}}[(F(Z_{1})-F_{0}(Z_{1}))^{2}],
\end{equation*}
so $d_{K}(F,F_{0})=0$ iff $F=F_{0}$ $F_{0}$-a.s. $\square$

We now show that under mild conditions, for the composite general case $E[K_{g}^{\perp }(Z_{1},Z_{2})]=E[K_{g}(Z_{1},Z_{2})]=:d_{K}(F,F_{0})$ provided $E\left[ l_{\theta _{0}}(X_{1})\right] =0$.

\begin{theorem}\label{thm:P1}
Under Assumptions P1 and C1, and $E[l_{\theta_0}(Z_1)] = 0$,
$E[K_{g}^{\perp }(Z_{1},Z_{2})]=d_{K}(F,F_{0})= \|E\left[a_{\theta_0}(Z_1)\right]\|^2_K$. Hence, the test is consistent, provided that $d_{K}(F,F_{0})>0$.
\end{theorem}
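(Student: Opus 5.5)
By Assumption C1, each orthogonal score is a representer: I plan to show that $\Pi^\perp\varphi(z)=\langle \varphi, h(z)\rangle_K$ with $h(z):=a_{\theta_0}(z)+b_{\theta_0}I_{\theta_0}^{-1}l_{\theta_0}(z)\in\mathcal H_K$. This follows from \eqref{eq:Pi}, since $g_\varphi(z,\theta_0)=\langle\varphi,a_{\theta_0}(z)\rangle_K$ and $B(\varphi,\theta_0)=\langle\varphi,b_{\theta_0}\rangle_K$, where $b_{\theta_0}$ is a $p$-vector of elements of $\mathcal H_K$.

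The first step is to show that the induced kernel satisfies $K_g^\perp(z_1,z_2)=\langle h(z_1),h(z_2)\rangle_K$. For $A\sim GP(K)$, formally $\Pi^\perp A(z)=\langle A,h(z)\rangle_K$, and $E_A[\langle A,u\rangle_K\langle A,v\rangle_K]=\langle u,v\rangle_K$. This is the same computation as in Lemma~\ref{Lemma:QuadraticForm}, with the population $h$ in place of $\hat h$. By the same reasoning, $K_g(z_1,z_2)=\langle a_{\theta_0}(z_1),a_{\theta_0}(z_2)\rangle_K$. To avoid relying on the Gaussian process realisations, which lie outside $\mathcal H_K$, I would define the kernels through the reproducing identity. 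This amounts to taking the induced kernel to be the one whose feature map is $h$, as Lemma~\ref{Lemma:QuadraticForm} does implicitly.

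The next step uses independence of $Z_1$ and $Z_2$ together with Bochner integrability of $h(Z_1)$ in $\mathcal H_K$. Integrability follows from $E\|h(Z_1)\|_K^2=E[K_g^\perp(Z_1,Z_1)]<\infty$, which I would add as a working condition alongside P1; it is the condition used in Corollary~\ref{cor:P1}. Under it,
\[
E[K_g^\perp(Z_1,Z_2)]=\big\langle E[h(Z_1)],E[h(Z_2)]\big\rangle_K=\|E[h(Z_1)]\|_K^2 .
\]
Linearity of the Bochner integral gives $E[h(Z_1)]=E[a_{\theta_0}(Z_1)]+b_{\theta_0}I_{\theta_0}^{-1}E[l_{\theta_0}(Z_1)]$. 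The hypothesis $E[l_{\theta_0}(Z_1)]=0$ removes the second term, so $E[K_g^\perp(Z_1,Z_2)]=\|E[a_{\theta_0}(Z_1)]\|_K^2$. The same argument applied to $K_g$ gives $E[K_g(Z_1,Z_2)]=\|E[a_{\theta_0}(Z_1)]\|_K^2$, which is $d_K(F,F_0)$ by definition. This proves the identity.

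For consistency, Corollary~\ref{cor:P1} gives $\hat Q_K\to_p E[K_g^\perp(Z_1,Z_2)]=d_K(F,F_0)>0$, so $n\hat Q_K\to\infty$ in probability. The bootstrap statistic $n\hat Q_K^*$ should remain $O_P(1)$ conditionally on the data. Its conditional mean is $n^{-1}\sum_i\hat K_g^\perp(Z_i,Z_i)$, which is bounded in probability by a ULLN (Lemma~\ref{lem:B1}) combined with P1. Markov's inequality then makes the bootstrap critical values $O_P(1)$, so the rejection probability tends to one. The main obstacle I expect is this last bound: controlling $E_n\|\hat h(Z_1)\|_K^2$ requires uniform convergence of $\hat b$, $\hat I^{-1}$ and $\hat l$. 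This would be handled by B2(i) and P1(ii) together with a moment condition like \eqref{Kmoment}.
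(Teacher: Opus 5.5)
Your proposal is correct and is essentially the paper's argument. With $h_{\theta_0}=a_{\theta_0}+b_{\theta_0}I_{\theta_0}^{-1}l_{\theta_0}$ one has $E[K_g^\perp(Z_1,Z_2)]=\|E[h_{\theta_0}(Z_1)]\|_K^2$, and $E[l_{\theta_0}(Z_1)]=0$ reduces $E[h_{\theta_0}(Z_1)]$ to $E[a_{\theta_0}(Z_1)]$, whose squared norm is $d_K(F,F_0)$; your additional steps (Bochner integrability via $E[K_g^\perp(Z_1,Z_1)]<\infty$, defining the induced kernel through the feature map $h$, and the Markov-inequality bound keeping the bootstrap critical values of $n\hat Q_K^\ast$ at $O_P(1)$) spell out what the paper leaves to ``standard RKHS theory'' and Corollary~\ref{cor:P1}.
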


\noindent We compute the distance $d_{K}(F,F_{0})$ for some leading examples in Section \ref{sec:examples}.

\subsubsection{Local Power}

Let $H$ denote a cdf different from $F_{0}$, and consider the mixture model $%
F_{0,\tau ,H}=(1-\tau )F_{0}+\tau H$, $\tau \in \lbrack 0,1]$. Define the
local Pitman alternatives $\mathcal{H}_{0,n}:F=F_{0,\tau _{n},H},\text{ }\tau _{n}=1/\sqrt{n}$. The asymptotic local power properties of the proposed test will depend on
the slope
\begin{equation*}
D_{H}(\varphi ):=\left. \frac{\partial E_{F_{\tau },G}[\Pi ^{\perp }\varphi
(Z_{1})]}{\partial \tau }\right\vert _{\tau =0}.
\end{equation*}
For that reason we require smoothness of the drift function.

\medskip
\noindent \textbf{Assumption A4.} For each $\varphi \in \mathcal{H}_{K}$,
the function $E_{F_{\tau },G}[\Pi ^{\perp }\varphi
(Z_{1})] \ $is continuously differentiable at $\tau =0$, with
a derivative $D_{H}$ such that $D_{H}\in \mathcal{H}_{K}^{\ast }$.

\medskip

A sufficient condition for the differentiability in Assumption A4 is
\[
\lim\sup_{\tau\downarrow0}%
%TCIMACRO{\tint }%
%BeginExpansion
{\textstyle\int}
%EndExpansion
(\Pi ^{\perp }\varphi
(z_1))^{2}(z_1)dP_{F_{\tau },G}(z_1)<\infty;
\]
see Lemma~7.2 in \cite[p.~67]{IbragimovKhasminskii1981}. The condition then holds with $D_{H}(\varphi )=E[\Pi ^{\perp }\varphi
(Z_{1})s(Z_{1})]$, where $s(Z_{1})$ is the score of $P_{F_{\tau },G}$ at $\tau=0$. If $\Pi ^{\perp }$ is norm reducing, $s\in L_2(P)$ and (\ref{eq:kernel-bounded}) holds, then $D_{H}\in \mathcal{H}_{K}^{\ast }$ and Assumption A4 is satisfied.

\begin{theorem}\label{thm:P2} Let Assumptions A1-A4 hold under $\mathcal{H}_{0,n}.$
Then, under $\mathcal{H}_{0,n}$ it holds $n\hat{Q}_{K}\rightarrow _{d}\left\Vert R_{\infty }+D_{H}\right\Vert ^{2}$.
\end{theorem}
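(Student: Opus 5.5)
The plan is to reduce the statistic to a drifted version of the null process and then apply the continuous mapping theorem in $\mathcal H_K^\ast$, exactly as for Theorem~\ref{Thm:GeneralNull}. Here $R_\infty$ denotes the weak limit of $S_n(\varphi)=\mathbb G_n\{\Pi^\perp\varphi(Z_1)\}$ delivered by Assumption A3 under $\mathcal H_{0,n}$ (with $\mathbb G_n$ centred at $E_{F_{\tau_n},G}$). Since Assumptions A1--A2 are imposed under $\mathcal H_{0,n}$, uniformly over $\mathcal B_K$ we have
\[
\sqrt n\,\hat\Delta_n^\perp(\varphi)=\sqrt n\,E_n[\Pi^\perp\varphi(Z_1)]+o_P(1)
= S_n(\varphi)+\sqrt n\,E_{F_{\tau_n},G}[\Pi^\perp\varphi(Z_1)]+o_P(1).
\]
For A1 the remainder is only pointwise in $\varphi$, so, as in the proof of Theorem~\ref{Thm:GeneralNull}, I will use A2 for the uniform step and absorb A1 through the linear-operator structure. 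The $o_P(1)$ remainder is then in the $\mathcal H_K^\ast$ operator norm.

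The key new step is the drift. Because $E_{\theta_0,G}[\Pi^\perp\varphi(Z_1)]=0$ at $\tau=0$ (A1), Assumption A4 gives, for each $\varphi$,
\[
\sqrt n\,E_{F_{\tau_n},G}[\Pi^\perp\varphi(Z_1)]=\sqrt n\,\bigl(\tau_n D_H(\varphi)+o(\tau_n)\bigr)\to D_H(\varphi).
\]
Two things must then be checked. First, this convergence should hold in operator norm, i.e.\ uniformly over $\mathcal B_K$. Second, the weak limit of $S_n$ under the triangular array should be the same $R_\infty$.

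\textbf{Main obstacle: uniformity over $\mathcal B_K$.} I would first write the mean value form
\[
\tau^{-1}E_{F_\tau,G}[\Pi^\perp\varphi]-D_H(\varphi)=\int_0^1\bigl\{D_\tau(\varphi)-D_0(\varphi)\bigr\}\,ds ,
\]
where $D_\tau$ denotes the derivative at the intermediate point $s\tau$. Continuity of the derivative at $0$ (A4) then makes this small. Since $\varphi\mapsto E_{F_\tau,G}[\Pi^\perp\varphi]$ is linear and $D_\tau\in\mathcal H_K^\ast$, I would interpret the continuous differentiability in A4 in the $\mathcal H_K^\ast$ norm. This yields $\sup_{\mathcal B_K}|\sqrt n E_{F_{\tau_n}}[\Pi^\perp\varphi]-D_H(\varphi)|\to0$.

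For the process part, I would invoke A3 as holding under $\mathcal H_{0,n}$, which is how the theorem is stated. Alternatively, one can argue via Le Cam's third lemma: contiguity of $P_{F_{\tau_n},G}^n$ to $P_{F_0,G}^n$ follows from differentiability in quadratic mean of the mixture path, whose score is $s$. With the centring at the $\tau_n$-mean, the limit process is unchanged.

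Combining these, $\sqrt n\hat\Delta_n^\perp\Rightarrow R_\infty+D_H$ in $\mathcal H_K^\ast$ by Slutsky's lemma. Finally, $n\hat Q_K=\|\sqrt n\hat\Delta_n^\perp\|^2$, and the map $T\mapsto\|T\|^2$ is continuous on $\mathcal H_K^\ast$. The continuous mapping theorem therefore gives $n\hat Q_K\to_d\|R_\infty+D_H\|^2$.
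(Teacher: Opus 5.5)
Your argument is correct and follows the paper's route. The paper omits the proof, saying it reruns the proof of Theorem~\ref{Thm:GeneralNull}: A2 gives the uniform expansion, A3 under $\mathcal{H}_{0,n}$ gives the limit of the centred process, and Slutsky plus the continuous mapping theorem for $T\mapsto\|T\|^2$ finish; your split into $S_n+\sqrt n\,E_{F_{\tau_n},G}[\Pi^\perp\varphi(Z_1)]$ just makes explicit the drift step the paper leaves implicit.

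The operator-norm reading of A4 that you adopt for uniformity is not actually needed. Conditioning on $B=1$ preserves the mixture: $P_{F_\tau,G}=(1-w_\tau)P_{F_0,G}+w_\tau P_{H,G}$ with $w_\tau=\tau\alpha_H/\{(1-\tau)\alpha_0+\tau\alpha_H\}$, where $\alpha_0,\alpha_H$ are the selection probabilities $P(B(X,W)=1)$ under $F_0$ and $H$. Hence $\sqrt n\,E_{F_{\tau_n},G}[\Pi^\perp\varphi(Z_1)]=c_nD_H(\varphi)$ for a scalar $c_n\to1$, and convergence in $\mathcal H_K^\ast$ follows from $D_H\in\mathcal H_K^\ast$ alone.
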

The verification of Assumptions A1-A4 under $\mathcal{H}_{0,n}$ and the
proof of Theorem \ref{thm:P2} follow the same arguments as those for Theorem \ref{Thm:GeneralNull} under $%
\mathcal{H}_{0}$, without additional complications, and hence these are omitted. The asymptotic local
power function will be non-trivial as long as $\left\Vert D_{H}\right\Vert
\neq 0$, i.e. $D_{H}\neq 0$ in $\mathcal{H}_{K}^{\ast }$.
\begin{remark}
There is no optional choice of $K$ from a global power point of view, see, e.g., \cite{BierensPloberger1997}, but there is an optimal choice for a given fixed local alternative. The reader is referred to Remark 3 in \cite{Escanciano2024} for further discussion in a different context. $\square$
\end{remark}
%This will very
%much depend on each specific application. In the next section, we give the
%slope function $D_{H}$ for some leading examples.

\section{Examples: LTRC and random double truncation}
\label{sec:examples}

This section specializes the general framework to two canonical incomplete-data
designs that are central in survival analysis and truncation problems:
(i) left-truncation with right-censoring (LTRC), and (ii) random double truncation.
In each example we (a) specify a score $g_\varphi(z,\theta,G)$ satisfying
$E[g_\varphi(Z_1,\theta_0,G)]=0$ under $\mathcal{H}_0$, (b) derive the corresponding
orthogonal influence function $\Pi^\perp\varphi$, and (c) provide convenient
sufficient conditions for Assumptions~A2--A3. In both examples below the score
is free of $G$, and we therefore write $g_\varphi(z,\theta)$. This also applies to the setting with current status data, see the Supplementary Material.

\subsection{Left truncation with right censoring (LTRC)}
\label{subsec:ltrc}

Let $U$ and $C$ denote a left-truncating and a right-censoring variable, respectively.
In the LTRC setting one observes $Z=(X\wedge C,\;U,\;\1\{X\leq C\})$ when $U\leq X\wedge C$; otherwise nothing is observed. Hence, $W=(U,C)$,
$T(x,u,c)=(x\wedge c,u,\1\{x\le c\})$, and $B(x,u,c)=\1\{u\leq x\wedge c\}$.
It is assumed that $W=(U,C)$ is independent of $X$. For identifiability, we also
assume $P(U\leq C)=1$, so $B(x,u,c)$ reduces to $\1\{u\leq x\}$.

Introduce $Y=X\wedge C$ and $\delta=\1\{X\leq C\}$, so $Z=(Y,U,\delta)$.
The data are iid copies $Z_i=(Y_i,U_i,\delta_i)$, $1\le i\le n$, with the
conditional distribution of $Z$ given $U\le X$. Let $z=(y,u,d)$ and define
\[
g_\varphi(z,\theta)=\varphi(y)d-\psi^\varphi_\theta(z),
\qquad
\psi^\varphi_\theta(z)=\psi^\varphi_\theta(y,u,d)
=\int \varphi(\tilde{x})\frac{\1\{u\le \tilde{x}\le y\}}{1-F_\theta(\tilde{x}-)}\,dF_\theta(\tilde{x}).
\]
Then $E[g_\varphi(Z_1,\theta_0)]=0$ under the null. Conversely, if this equality holds
along a rich class of $\varphi$-functions, then $F=F_{\theta_0}$, provided that $M(x)\equiv P(U\le x\le C)$ is positive on the support of $X$, denoted $\mathcal{S}_X$. In absence of such a condition,
a portion of $F$ cannot be identified from the available data; hence this positivity
assumption is minimal.

For ease of presentation, we formalize the characterization for the class of indicator functions
$\mathcal{C}_{\mathrm{ind}}=\{\varphi_x(\cdot):x\in\mathcal{S}_X\}$, where
$\varphi_x(y)=\1\{y\le x\}$. Proofs for all results in this section are given in
Supplementary Material Appendix~\ref{ExamplesProofs}.

\begin{proposition}\label{prop:S1}
Under $F = F_{\theta_0}$, it holds that $E[g_{\varphi}(Z_1, \theta_0)] = 0$ for each
$\varphi \in \mathcal{C}_{\mathrm{ind}}$. Conversely, if $E[g_{\varphi}(Z_1, \theta_0)] = 0$
for each $\varphi \in \mathcal{C}_{\mathrm{ind}}$, then $F = F_{\theta_0}$, provided that
$M(x) > 0$ for each $x \in \mathcal{S}_X$.
\end{proposition}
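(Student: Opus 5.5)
The plan is to compute $E[g_\varphi(Z_1,\theta)]$ in closed form for a general $F$ (not only under the null), expressed through the function $M(x)=P(U\le x\le C)$. Both directions then follow from a single identity. Let $\alpha=P(U\le X)>0$. Since the data are drawn from the conditional law given $U\le X$, and $(U,C)$ is independent of $X$ with $P(U\le C)=1$,
\[
E[\varphi(Y_1)\delta_1]=\alpha^{-1}E\big[\varphi(X)\1\{U\le X\le C\}\big]=\alpha^{-1}\int\varphi(x)M(x)\,dF(x).
\]
For the compensator term, on the event $\{U\le X\}$ we have $U\le Y=X\wedge C$. Moreover, $\{U\le\tilde x\le X\wedge C\}$ already implies $U\le X$. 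Fubini and independence then give
\[
E[\psi^\varphi_\theta(Z_1)]=\alpha^{-1}\int\varphi(\tilde x)\,\frac{P(U\le\tilde x\le C)\,P(X\ge\tilde x)}{1-F_\theta(\tilde x-)}\,dF_\theta(\tilde x)=\alpha^{-1}\int\varphi(\tilde x)M(\tilde x)\{1-F(\tilde x-)\}\,d\Lambda_\theta(\tilde x),
\]
where $d\Lambda_\theta=dF_\theta/(1-F_\theta(\cdot-))$ is the cumulative hazard measure of $F_\theta$. Hence
\[
E[g_\varphi(Z_1,\theta)]=\alpha^{-1}\int\varphi(x)M(x)\big[dF(x)-\{1-F(x-)\}\,d\Lambda_\theta(x)\big].
\]
Integrability is ensured for bounded $\varphi$, and in particular for indicators.

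For the direct statement, take $F=F_{\theta_0}$. Then $\{1-F(x-)\}\,d\Lambda_{\theta_0}(x)=dF_{\theta_0}(x)$ by the definition of the hazard, so the bracket vanishes identically. This gives $E[g_\varphi(Z_1,\theta_0)]=0$ for every $\varphi$, in particular every $\varphi\in\mathcal C_{\mathrm{ind}}$.

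For the converse, assume the identity vanishes for $\varphi_x=\1\{\cdot\le x\}$ for all $x\in\mathcal S_X$. Then the signed measure $\nu(dt)=M(t)\big[dF(t)-\{1-F(t-)\}\,d\Lambda_{\theta_0}(t)\big]$ satisfies $\nu((-\infty,x])=0$ for all $x\in\mathcal S_X$. The steps are:
\begin{enumerate}
\item Use these cumulative identities, together with right-continuity and the fact that $F$ puts no mass off $\mathcal S_X$, to deduce that $\nu$ vanishes on the relevant region.
\item Divide by $M>0$ on $\mathcal S_X$ to obtain the Volterra-type equation $dF=\{1-F(\cdot-)\}\,d\Lambda_{\theta_0}$, i.e.\ $F(x)=\int_{(-\infty,x]}\{1-F(t-)\}\,d\Lambda_{\theta_0}(t)$.
\item Invoke uniqueness of solutions of this linear equation, via the product-integral representation or a Gronwall/Duhamel argument applied to $F-F_{\theta_0}$, which solves the homogeneous equation with zero initial value. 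This yields $F=1-\prod(1-d\Lambda_{\theta_0})=F_{\theta_0}$.
\end{enumerate}

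The main obstacle is step 1, the passage from cumulative identities at points of $\mathcal S_X$ to equality of measures. The difficulty lies in gaps of $\mathcal S_X$ and beyond its upper endpoint, where $F$ has no mass but $\Lambda_{\theta_0}$ might. I would handle this by working with the difference $D=F-F_{\theta_0}$ and the factor $1-F(t-)$. Inside a gap $(a,b)$ of $\mathcal S_X$, $F$ is constant. Comparing $\nu((-\infty,a])$ with $\nu((-\infty,b])$ then shows that $\int_{(a,b)}M\{1-F(-)\}\,d\Lambda_{\theta_0}=0$. Since $M>0$ near the support points and $1-F(t-)>0$ before the right endpoint, this forces $\Lambda_{\theta_0}$ to charge no mass in the gap. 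Past the right endpoint, $1-F(t-)=0$ makes the equation trivially consistent. I would then run the Gronwall step only up to the upper endpoint of $\mathcal S_X$, where it suffices to identify $F$. If the positivity of $M$ off $\mathcal S_X$ turns out to be genuinely needed in the gap argument, I would restrict the conclusion to $F=F_{\theta_0}$ on $\mathcal S_X$, which the statement's positivity condition is designed to guarantee.
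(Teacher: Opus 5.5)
Your route is the paper's. You obtain the same two expressions, $E[\varphi(Y_1)\delta_1]=\alpha^{-1}\int\varphi M\,dF$ and $E[\psi^\varphi_\theta(Z_1)]=\alpha^{-1}\int\varphi M\{1-F(\cdot-)\}\,d\Lambda_\theta$. Your direct part coincides with the paper's, and your steps 2--3 are a rigorous version of its closing remark that the hazard characterizes the cdf.

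The genuine gap is step 1, which you rightly single out. Your fix does not work, and under the stated hypotheses nothing can close it, because the converse fails. There are two separate failure modes.

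First, $M>0$ on $\mathcal S_X$ says nothing about $M$ inside a gap of $\mathcal S_X$ or below $\inf\mathcal S_X$. For instance, if $(U,C)$ takes the values $(0,a)$ and $(b,10)$, then $M$ is positive at $a$ and $b$ but vanishes on $(a,b)$. So the identities only force $\Lambda_{\theta_0}$ to vanish on $\{M>0\}$, and $F_{\theta_0}$ can hide mass elsewhere. Concretely, take $U\equiv1$, $C\equiv10$, $F=\mathrm{Unif}[1,2]$ and $F_{\theta_0}=\mathrm{Unif}[0,2]$. Both hazards equal $dt/(2-t)$ on $[1,2)$, and $M=\1_{[1,10]}$ is positive on $\mathcal S_X=[1,2]$. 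Then $E[g_{\varphi_x}(Z_1,\theta_0)]=(x-1)-\int_1^x(2-t)(2-t)^{-1}\,dt=0$ for all $x\in[1,2]$, yet $F(1.5)=1/2\neq 3/4=F_{\theta_0}(1.5)$. So even your fallback ``$F=F_{\theta_0}$ on $\mathcal S_X$'' is false; only the hazard restricted to $\mathcal S_X$ is identified.

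Second, even where $M>0$, subtracting the identities at the endpoints $a<b$ of a gap gives $\nu((a,b])=0$, not $\nu((a,b))=0$: an atom of $F$ at $b$ can offset $\Lambda_{\theta_0}$-mass inside the gap. Take $U\equiv-1$ and $C\equiv3$, so $M=1$ on $[-1,3]$. Let $F$ put masses $\tfrac12,\tfrac12$ at $0,1$, and let $F_{\theta_0}$ put masses $\tfrac12,\tfrac14,\tfrac18,\tfrac18$ at $0,\tfrac12,1,2$ (hazard jumps $\tfrac12,\tfrac12,\tfrac12,1$). Then $E[g_{\varphi_0}(Z_1,\theta_0)]=\tfrac12-\tfrac12=0$ and $E[g_{\varphi_1}(Z_1,\theta_0)]=1-(\tfrac12+\tfrac14+\tfrac14)=0$. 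This covers all of $\mathcal S_X=\{0,1\}$, although $F\neq F_{\theta_0}$.

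What step 1 really needs is an additional hypothesis. One sufficient choice is $M>0$ on the closed set $\mathcal S_X\cup\mathcal S_{\theta_0}$, where $\mathcal S_{\theta_0}$ is the support of $F_{\theta_0}$, with the indicators $\varphi_x$ indexed by $x$ in that set (or by all real $x$). Since $\nu$ is concentrated on this set, $\nu((-\infty,x])=0$ then holds for every real $x$, hence $\nu\equiv0$. Division by $M$ is then legitimate on the set carrying both $dF$ and $d\Lambda_{\theta_0}$, and your steps 2--3 complete the proof.

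The paper's own proof passes over exactly this point: it asserts equality of hazards ``along the support of $X$'' and then invokes that the hazard determines the cdf. So the lacuna lies in the statement, not only in your write-up.

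A minor side remark: integrability of the compensator is automatic for bounded $\varphi$ under the null. It is not automatic for a general $F$, where $\alpha^{-1}\int\varphi M\{1-F(\cdot-)\}\,d\Lambda_\theta$ may diverge. In the converse, finiteness is part of the hypothesis, so this is harmless.
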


The function $\psi^\varphi_\theta$ can be obtained as a conditional score by using the general rule (\ref{eq:gphi_cond}) as follows. Let $\psi(z)=\varphi(y)d$ and note that, for the LTRC model, $W_1^o=U_1$. Then, it can be seen that

$$E[\psi(Z_1)|U_1]=\int \varphi(x)\dfrac{M^*(x|U_1)}{1-F( x-)}dF( x)$$

\noindent where $M^*(x|u)=E[I(U_1\leq x\leq Y_1)|U_1=u]$. Since $M^*$ involves the censoring distribution and since the censoring variable is not observed for all the individuals, the value $M^*(x|U_1)$ is replaced by the observable indicator $I(U_1\leq x\leq Y_1)$ Finally, $F$ is replaced by the null model $F_\theta$ to get $\psi^\varphi_\theta(z)$.

From the definition of $g_\varphi(z,\theta)$ it is easy to see that
$B(\varphi,\theta)=-E[\varphi(Y_1)h_\theta(Y_1)\delta_1]$, where
\[
h_{\theta }(y)=\frac{\partial }{\partial \theta ^{\prime }}\log \left(
\frac{f_{\theta }(y)}{1-F_{\theta }(y-)}\right).
\]
With left truncation and right censoring, the score attached to the conditional
loglikelihood of $(Y_1,\delta_1)$ given $U_1=u$ is
\[
l_{\theta }(z)=l_{\theta }(y,u,d)=\frac{\partial }{\partial \theta }\log
\left( \frac{f_{\theta }(y)^{d}(1-F_{\theta }(y))^{1-d}}{1-F_{\theta }(u-)}\right).
\]
Let $\hat{\theta}$ be the solution of $E_n[l_{\hat{\theta}}(Z_1)]=0$. The influence function of the
process $E_n[g_\varphi(Z_1,\hat{\theta})]$ is
\begin{equation}
\Pi^\perp\varphi(z)
=
g_{\varphi}(z,\theta_0)
-
E[\varphi(Y_1)h_{\theta_0}(Y_1)\delta_1]\,
I_{\theta_0}^{-1}l_{\theta_0}(z).
\label{eq:LTRC_pi}
\end{equation}

We provide sufficient conditions for Assumptions~A2 and A3 in this example.

\medskip
\noindent \textbf{Assumption E1.}
For each $z_1=(y_1,u_1,d_1)\in\mathcal{S}_{Z_1}$ and $\theta\in\Theta_0$,
$\theta\mapsto h_\theta(y_1)$ is continuous in $\Theta_0$ a.s. and
\[
E\!\left[ K(Y_1,Y_1)\sup_{\theta\in\Theta_0}\vert h_\theta(Y_1)\vert^2\delta_1\right]<\infty.
\]
\noindent See \cite{EmuraMichimae2022} and references therein for details on MLE under LTRC to satisfy
Assumption~B2(ii).

\begin{lemma}\label{lem:E1}
Under Assumption E1 and Assumption B2(ii), Assumption A2 holds.
\end{lemma}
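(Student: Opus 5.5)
The plan is to reduce Lemma~\ref{lem:E1} to Lemma~\ref{Lemma:B2}. Assumption B2(ii) is assumed, so it suffices to verify Assumption B2(i) for the LTRC score with $B_n(\varphi,\hat\theta)=-E_n[\varphi(Y_1)h_{\hat\theta}(Y_1)\delta_1]$ and $B(\varphi,\theta_0)=-E[\varphi(Y_1)h_{\theta_0}(Y_1)\delta_1]$. (If the authors use a different $B_n$, the same argument applies to it.)

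\textbf{Step 1: differentiability and $B_n^\partial$.} First I verify that $\theta\mapsto g_\varphi(z,\theta)$ is differentiable. Only $\psi^\varphi_\theta(z)$ depends on $\theta$. It is the integral of $\varphi$ against the hazard measure $d\Lambda_\theta=dF_\theta/(1-F_\theta(\cdot-))$ over $[u,y]$. Differentiating under the integral, justified by dominated convergence using the continuity in E1, gives
\[
\partial\psi^\varphi_\theta(z)/\partial\theta'=\int\varphi(\tilde x)\1\{u\le\tilde x\le y\}h_\theta(\tilde x)\,d\Lambda_\theta(\tilde x).
\]
Hence $B_n^\partial(\varphi,\theta)=-E_n[\int\varphi(\tilde x)\1\{U_1\le\tilde x\le Y_1\}h_\theta(\tilde x)\,d\Lambda_\theta(\tilde x)]$. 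Under the law of $Z$, the compensator identity $E[\int f\1\{U_1\le \tilde x\le Y_1\}d\Lambda_{\theta_0}]=E[f(Y_1)\delta_1]$ (the same martingale argument as in Proposition~\ref{prop:S1}) shows that its population counterpart at $\theta_0$ equals $B(\varphi,\theta_0)$. That is the $B$ used in \eqref{eq:LTRC_pi}.

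\textbf{Step 2: uniform convergence.} Both $B_n^\partial(\cdot,\theta_n)-B(\cdot,\theta_0)$ and $B_n(\cdot,\hat\theta)-B(\cdot,\theta_0)$ are handled by Lemma~\ref{lem:B1} (ULLN over $\mathcal B_K\times\Theta_0$), followed by continuity in $\theta$ of the population map and consistency of $\theta_n$. By the reproducing property, $|\varphi(y)|\le K^{1/2}(y,y)$ for $\varphi\in\mathcal B_K$, so an envelope for $\varphi(Y_1)h_\theta(Y_1)\delta_1$ is $K^{1/2}(Y_1,Y_1)\sup_{\Theta_0}|h_\theta(Y_1)|\delta_1$. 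This envelope is square integrable by E1, which is what Lemma~\ref{lem:B1} requires. Continuity in $\theta$ is given by E1, and by dominated convergence it transfers to $\theta\mapsto B(\cdot,\theta)$ in operator norm. The envelope also yields $B(\cdot,\theta)\in\mathcal H_K^*$.

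\textbf{Main obstacle.} The hard part is the integral term in $B_n^\partial$. Its envelope is $\int K^{1/2}(\tilde x,\tilde x)\sup_\theta|h_\theta(\tilde x)|\1\{U_1\le\tilde x\le Y_1\}\,d\Lambda_\theta$, which involves the hazard measure at varying $\theta$ rather than at $\theta_0$. To control it, I would use the compensator identity at $\theta_0$ together with the local equivalence of $d\Lambda_\theta$ and $d\Lambda_{\theta_0}$ on $\Theta_0$, which is implicit in the continuity of $h_\theta$. This bounds its first moment by $E[K^{1/2}(Y_1,Y_1)\sup|h_\theta|\delta_1]$, finite by E1 and Cauchy--Schwarz. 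A first-moment bound suffices for a ULLN; if Lemma~\ref{lem:B1} needs more, I would shrink $\Theta_0$ to a small neighbourhood of $\theta_0$. With B2(i) and B2(ii) in hand, Lemma~\ref{Lemma:B2} gives Assumption A2.
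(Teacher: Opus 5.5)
Your Step 2 for $B_n(\cdot,\hat\theta)$ is exactly the paper's proof. The paper's proof consists only of this: apply Lemma~\ref{lem:B1} to $s_\varphi(z,\theta)=\varphi(y)h_\theta(y)d$, whose induced kernel is $K_s(z_1,z_2,\theta)=h_\theta(y_1)'d_1K(y_1,y_2)h_\theta(y_2)d_2$. Assumption B1 is then literally E1; your square-integrable envelope is precisely $E[\sup_{\Theta_0}K_s(Z_1,\theta)]<\infty$. Lemma~\ref{Lemma:B2} then concludes. The paper never discusses the $B_n^\partial$ half of B2(i). You are right that it needs its own argument, because $\partial g_\varphi(z,\theta)/\partial\theta'=-\int\varphi(\tilde x)\1\{u\le\tilde x\le y\}h_\theta(\tilde x)\,d\Lambda_\theta(\tilde x)$ is not $-\varphi(y)h_\theta(y)d$. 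However, the argument you sketch for it does not go through.

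The failing step is your claim that local equivalence of $d\Lambda_\theta$ and $d\Lambda_{\theta_0}$ over $\Theta_0$ is implicit in the continuity of $h_\theta$. Write $\lambda_\theta=f_\theta/(1-F_\theta(\cdot-))$ and $\bar h=\sup_{\Theta_0}|h_\theta|$. Continuity only gives $\lambda_\theta(x)/\lambda_{\theta_0}(x)=\exp\{\int_0^1 h_{\theta_0+t(\theta-\theta_0)}(x)(\theta-\theta_0)\,dt\}\le\exp\{|\theta-\theta_0|\,\bar h(x)\}$. This Radon--Nikodym factor is unbounded in $x$ whenever $h_\theta$ is. For the Weibull shape family, $h_k(x)=1/k+\log x$ is continuous in $k$, yet $\lambda_k/\lambda_{k_0}=(k/k_0)x^{k-k_0}$ is unbounded near $0$ or $\infty$ for every $k\neq k_0$.

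Bounding your envelope $\int K^{1/2}(x,x)\bar h(x)\1\{U_1\le x\le Y_1\}\sup_{\theta\in\Theta_0}\lambda_\theta(x)\,dx$ therefore needs exponential-type moments of $\bar h$, which E1 does not provide. Your fallback does not repair this. Shrinking $\Theta_0$ lowers $|\theta-\theta_0|$ but does not remove the unbounded factor. Nor can it turn a first-moment bound into the second-moment condition $E[\sup_{\Theta_0}K_s(Z_1,\theta)]<\infty$ that Lemma~\ref{lem:B1} actually imposes. A first moment can suffice, but only through a strong law in the Banach space $C(\Theta_0;\mathcal H_K)$, not through Lemma~\ref{lem:B1}. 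The same missing domination undermines the differentiation under the integral sign in Step 1, for which pointwise continuity of $h_\theta$ is not enough.

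At $\theta_0$ alone there is no problem. The proof of Lemma~\ref{lem:E2} gives $E[(\varphi(Y_1)\delta_1-\psi^\varphi_{\theta_0}(Z_1))^2]\le E[\varphi^2(Y_1)\delta_1]$, and nothing in that computation uses $\varphi\in\mathcal H_K$. Taking $\varphi=K^{1/2}|h_{\theta_0}|$ yields $E[\psi^\varphi_{\theta_0}(Z_1)^2]\le 4E[K(Y_1,Y_1)|h_{\theta_0}(Y_1)|^2\delta_1]<\infty$. So E1 controls the envelope there, even at second order.

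To cover $\theta\neq\theta_0$ you need one of two things. One is an additional hypothesis, for instance $\int K^{1/2}(x,x)\,P(U_1\le x\le Y_1)\sup_{\theta\in\Theta_0}|\partial\lambda_\theta(x)/\partial\theta|\,dx<\infty$, the LTRC analogue of \eqref{Khalfmoment}; this also justifies the differentiation in Step 1. The other is a direct expansion of $E_n[\psi^\varphi_{\hat\theta}-\psi^\varphi_{\theta_0}]$ that exploits $|\hat\theta-\theta_0|=O_P(n^{-1/2})$, instead of requiring B2(i) for every consistent $\theta_n$.
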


The norm-reducing property of \eqref{eq:LTRC_pi} is established next.

\begin{lemma}\label{lem:E2}
The function $\Pi^\perp$ in \eqref{eq:LTRC_pi} is norm reducing, so
$E[K_g^\perp(Z_1,Z_1)]<\infty$ holds under $E[K(Y_1,Y_1)\delta_1]<\infty$.
\end{lemma}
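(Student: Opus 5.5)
The plan is to show that, under $\mathcal H_0$, $\Pi^\perp\varphi$ in \eqref{eq:LTRC_pi} is the $L_2(P)$-orthogonal projection of $g_\varphi(\cdot,\theta_0)$ onto the orthocomplement of the span of $l_{\theta_0}$. We then need that $\varphi\mapsto g_\varphi(\cdot,\theta_0)$ is itself norm reducing relative to $\varphi(Y_1)\delta_1$, and combine this with Lemma~\ref{Lemma:B3}(ii).

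\textbf{Step 1: martingale representation.} Under $F=F_{\theta_0}$, write the counting process $N(t)=\1\{Y\le t,\delta=1\}$ and the at-risk process $R(t)=\1\{U\le t\le Y\}$. Let $\Lambda_{\theta}$ be the cumulative hazard, with $d\Lambda_\theta(x)=dF_\theta(x)/(1-F_\theta(x-))$. Then
\[
g_\varphi(Z,\theta_0)=\int\varphi(t)\,\{dN(t)-R(t)\,d\Lambda_{\theta_0}(t)\}=\int\varphi\,dM,
\]
where $M$ is a martingale with respect to the natural filtration conditional on $U$, using independence of $X$ and $(U,C)$. Likewise $l_{\theta_0}(z)=\int \dot\ell_{\theta_0}(t)\,dM(t)$ with $\dot\ell_\theta=\partial_\theta\log\lambda_\theta$. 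This is the standard hazard form of the LTRC conditional score, since $1-F_\theta(y)$ and $1-F_\theta(u-)$ combine into $\exp\{-\int R\,d\Lambda\}$ in the continuous case.

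\textbf{Step 2: covariances via the predictable variation.} By the isometry,
\[
E\Big[\int a\,dM\int b\,dM\Big]=E\Big[\int ab\,R\,(1-\Delta\Lambda)\,d\Lambda\Big].
\]
In the continuous case this equals $E[\int ab\,dN]=E[a(Y)b(Y)\delta]$. Hence $E[g_\varphi l_{\theta_0}']=E[\varphi(Y_1)h_{\theta_0}(Y_1)\delta_1]$, since $h_\theta=\dot\ell_\theta$. So the coefficient in \eqref{eq:LTRC_pi} is exactly $E[g_\varphi l'_{\theta_0}]I_{\theta_0}^{-1}$. Because $E[l_{\theta_0}]=0$ and $E[g_\varphi]=0$, $\Pi^\perp\varphi=g_\varphi-\mathrm{Proj}(g_\varphi\mid l_{\theta_0})$, and therefore $\|\Pi^\perp\varphi\|_2\le\|g_\varphi\|_2$. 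The same isometry gives $\|g_\varphi\|_2^2=E[\varphi(Y_1)^2\delta_1]$, or, with atoms, $\le E[\int\varphi^2 R\,d\Lambda]=E[\varphi(Y_1)^2\delta_1]$. So the composite map is norm reducing from the space with norm $(E[\varphi(Y_1)^2\delta_1])^{1/2}$, which is the sense in which the lemma's ``norm reducing'' is meant.

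\textbf{Step 3: conclusion.} By Lemma~\ref{Lemma:B3}(ii) adapted to this norm, $E[K_g^\perp(Z_1,Z_1)]=E_A[\|\Pi^\perp A\|_2^2]\le E_A E[A(Y_1)^2\delta_1]=E[K(Y_1,Y_1)\delta_1]<\infty$, with Fubini used for the exchange. This also yields continuity of $\Pi^\perp$ via $E[\varphi(Y_1)^2\delta_1]\le\|\varphi\|_K^2E[K(Y_1,Y_1)\delta_1]$.

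\textbf{Main obstacle.} The hardest step is Step 1–2 under the truncated sampling: checking the martingale property and the isometry conditional on $U\le X$. One must handle the conditioning on the selection event, where the compensator is unaffected because $R(t)$ already includes $U\le t$. One must also handle possible atoms of $F_{\theta_0}$, which give an inequality rather than an equality. I would first do the continuous case, then treat the discrete jumps via the factor $(1-\Delta\Lambda)\le1$.
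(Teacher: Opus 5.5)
Your proof is correct, but it takes a different route from the paper.

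\textbf{How the paper argues.} The paper first asserts $E[(\Pi^\perp\varphi)^2]\le E[g_\varphi^2]$ because $g_\varphi$ is zero-mean and $B I_{\theta_0}^{-1}B'\ge 0$. It then sets $a=\varphi(Y_1)\delta_1$ and $b=\psi^\varphi_{\theta_0}(Z_1)$ and writes $E[b^2]$ and $E[ab]$ as double integrals against $dF_{\theta_0}$ involving $E[\1\{U_1\le x,\tilde x\le Y_1\}]$. Splitting on $x\le\tilde x$, it checks $E[b^2]-2E[ab]\le 0$. The diagonal term handles atoms, and equality holds for continuous $F_{\theta_0}$.

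\textbf{What your route does differently.} You write $g_\varphi(\cdot,\theta_0)=\int\varphi\,dM$ and the conditional score as stochastic integrals against the same counting-process martingale. A single predictable-variation computation then yields both $\|g_\varphi\|_2^2$ and the cross-covariance $E[g_\varphi l_{\theta_0}']$. This buys two things:
\begin{itemize}
\item It makes explicit the identity $E[g_\varphi l_{\theta_0}']=E[\varphi(Y_1)h_{\theta_0}(Y_1)\delta_1]=-B(\varphi,\theta_0)$. The paper's first step uses this implicitly: zero mean and $BI^{-1}B'\ge0$ alone do not control the cross term $2BI^{-1}E[l_{\theta_0}g_\varphi]$.
\item It shows structurally why the bound is an equality exactly when $\Delta\Lambda_{\theta_0}\equiv 0$.
\end{itemize}
The cost is the martingale machinery under truncated sampling (conditioning on $U_1$, independent censoring given $U_1$). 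The paper's elementary indicator calculation avoids this.

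\textbf{One correction for atoms.} If $F_{\theta_0}$ has jumps, the conditional score is not $\int h_{\theta_0}\,dM$. Differentiating the product-integral likelihood gives
\[
l_{\theta_0}=\int h_{\theta_0}\,(1-\Delta\Lambda_{\theta_0})^{-1}\,dM .
\]
With your stated representation, the isometry would give $E[\int\varphi h_{\theta_0}R(1-\Delta\Lambda_{\theta_0})\,d\Lambda_{\theta_0}]$. That does not match the coefficient in \eqref{eq:LTRC_pi}, so the projection identification would fail.

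With the correct representation the factors $(1-\Delta\Lambda_{\theta_0})$ cancel, and $E[g_\varphi l_{\theta_0}']=E[\varphi(Y_1)h_{\theta_0}(Y_1)\delta_1]$ holds exactly. So $\Pi^\perp\varphi$ is still the residual of projecting $g_\varphi$ onto $l_{\theta_0}$. The inequality $(1-\Delta\Lambda_{\theta_0})\le 1$ is then needed only for the bound $\|g_\varphi\|_2^2\le E[\varphi(Y_1)^2\delta_1]$.

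Alternatively, you can get the covariance identity without any hazard representation: differentiate $E_\theta[g_\varphi(Z_1,\theta)\mid U_1]=0$ in $\theta$ and then average over $U_1$.
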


For global power, one needs to check whether $d_K(F,F_0)>0$ holds for $F\neq F_0$.
By straightforward calculations,
\[
d_{K}(F,F_{0})
=
\alpha^{-2}
\int \!\!\int K(x_{1},x_{2})R(x_{1})R(x_{2})
\,d(\Lambda_{F}(x_{1})-\Lambda_{F_{0}}(x_{1}))
\,d(\Lambda_{F}(x_{2})-\Lambda_{F_{0}}(x_{2})),
\]
where $\alpha =P(U\leq X)$, $R(x)=M(x)(1-F(x-))$, and $\Lambda_F$ is the cumulative hazard
pertaining to $F$. If $K$ is characteristic, see \cite{Sriperumbudur2011}, then $d_K(F,F_0)>0$
for $F\neq F_0$ provided that $M(x)>0$ for each $x\in\mathcal{S}_X$.\\

\medskip

\textbf{Comparison to existing tests for LTRC data.} The conditions for convergence of the proposed maximum mean discrepancy test are essentially minimal, and only imposed to guarantee the asymptotic properties of the maximum-likelihood estimator $\hat \theta$; see Assumption E1 above. This is in contrast to the existing NPMLE-based tests for censored data, or for LTRC data, which suffer from important limitations. Specifically, existing omnibus methods for censored data control the limit behaviour of the NPMLE by restricting weak convergence to a compact interval, bounded away from the upper limit of the observable distribution. As a consequence, the could fail to detect violations of the postulated model at the far right tail \citep{Nikabadze1997}. \cite{Sun1997} investigated the weak convergence of the Kaplan-Meier process on the whole support of $Z_1$. For this, the condition $\int (1-G)^{-1}dF<\infty$ was needed; see also \cite{Gill1983}. This and other integrability conditions involving the censoring distribution $G$ have been used in the literature to guarantee the correct behaviour of NPMLE-based tests; however, they may fail in practical settings. For instance, when $1-G=(1-F)^\beta$ for some $\beta\geq 1$ the aforementioned integral diverges. Existing NPMLE-based omnibus tests for left-truncated data have similar limitations; see for instance the strong integrability conditions on the truncation distribution in \cite{Sun1997} or \cite{StuteWang2008}. Our RKHS-based framework is free of such limitations.

\subsection{Random double truncation (doubly truncated data)}
\label{subsec:double_trunc}

With doubly truncated data one observes $Z=(X,U,V)$ when $U\le X\le V$; otherwise nothing is observed.
Here, $U$ and $V$ are the left- and right-truncating variables. Hence, $W=(U,V)$,
$T(x,u,v)=(x,u,v)$ (the identity), and $B(x,u,v)=\1\{u\le x\le v\}$.
It is assumed that $W=(U,V)$ is independent of $X$. For identifiability, also assume $P(U\le V)=1$.

The data are iid copies $Z_i=(X_i,U_i,V_i)$, $1\le i\le n$, with the conditional distribution of $Z$
given $U\le X\le V$. Let $z=(x,u,v)$ and define
\[
g_\varphi(z,\theta)=\varphi(x)-\psi^\varphi_\theta(z),
\qquad
\psi^\varphi_\theta(z)=\psi^\varphi_\theta(x,u,v)
=\int \varphi(\tilde{x})\frac{\1\{u\le \tilde{x}\le v\}}{F_\theta(v)-F_\theta(u-)}\,dF_\theta(\tilde{x}).
\]
Actually, $\psi^\varphi_\theta(z)$ is directly obtained from $\psi(z)=\varphi(x)$ by noting $W_1^o=(U_1,V_1)$ and using the general rule (\ref{eq:gphi_cond}). Then $E[g_\varphi(Z_1,\theta_0)\mid U_1,V_1]=0$, and hence $E[g_\varphi(Z_1,\theta_0)]=0$.
Conversely, when the latter equality holds along the class of indicators, then $F=F_{\theta_0}$.
This result requires $M(x)\equiv P(U\le x\le V)>0$ on the support of $X$; as in the LTRC case,
this condition avoids identifiability issues and is minimal. See \cite{deUnaAlvarezVanKeilegom2021}
for further discussion. Additionally, we require
\begin{equation}
\int_{-\infty}^x M\,dF=\int_{-\infty}^x M_0\,dF_0
\quad\text{for all } x\in\mathcal{S}_X
\quad\Longrightarrow\quad
F=F_0,
\label{key:DT}
\end{equation}
where
\[
M_0(x)=E\left[\frac{F(V)-F(U-)}{F_0(V)-F_0(U-)}\1\{U\le x\le V\}\right].
\]
Condition \eqref{key:DT} corresponds to uniqueness of the solution to $H(F_0,F^*,G^*)\equiv 0$,
where $H(F_0,F^*,G^*)$ is the operator in \cite{deUnaAlvarezVanKeilegom2021}, proof to Theorem~2.1,
and $F^*$ and $G^*$ are the observable versions of the target and truncation cdfs. The condition
rules out the possibility that two different target cdfs linked to the same observable couple $(F^*,G^*)$
may exist. \cite{deUnaAlvarezVanKeilegom2021} provide primitive conditions that ensure uniqueness
and thus \eqref{key:DT}. We formalize identification as follows.

\begin{proposition}\label{prop:S2}
Under $F = F_{\theta_0}$, it holds $E[g_{\varphi}(Z_1, \theta_0)] = 0$ for each
$\varphi \in \mathcal{C}_{\mathrm{ind}}$; conversely, if $E[g_{\varphi}(Z_1, \theta_0)] = 0$
for each $\varphi \in \mathcal{C}_{\mathrm{ind}}$, then $F = F_{\theta_0}$, provided that
$M(x) > 0$ for each $x \in \mathcal{S}_X$ and that \eqref{key:DT} holds.
\end{proposition}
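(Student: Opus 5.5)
The direct part is a conditioning argument. Write $Z_1=(X_1,U_1,V_1)$; conditionally on $(U_1,V_1)=(u,v)$ and on selection, $X_1$ has law $dF(x)\1\{u\le x\le v\}/(F(v)-F(u-))$. This holds because $X$ is independent of $(U,V)$ and the selection event is $\{U\le X\le V\}$. Under $F=F_{\theta_0}$ this conditional law coincides with the one used in $\psi^\varphi_{\theta_0}$. Hence $E[\varphi(X_1)\mid U_1,V_1]=\psi^\varphi_{\theta_0}(Z_1)$ for every bounded $\varphi$, in particular for $\varphi_x\in\mathcal C_{\mathrm{ind}}$, and taking expectations gives $E[g_{\varphi_x}(Z_1,\theta_0)]=0$.

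For the converse we write both terms as integrals against $F$ and $F_0:=F_{\theta_0}$. Let $\alpha=P(U\le X\le V)=\int M\,dF$. The joint law of the observed $(X_1,U_1,V_1)$ is $\alpha^{-1}\1\{u\le x\le v\}dF(x)dG(u,v)$. Therefore
\[
E[\varphi_x(X_1)]=\alpha^{-1}\int_{-\infty}^x M\,dF .
\]
The marginal of $(U_1,V_1)$ is $\alpha^{-1}(F(v)-F(u-))dG(u,v)$. Using this together with Fubini,
\[
E[\psi^{\varphi_x}_{\theta_0}(Z_1)]=\alpha^{-1}\int_{-\infty}^x E\!\left[\frac{F(V)-F(U-)}{F_0(V)-F_0(U-)}\1\{U\le \tilde x\le V\}\right]dF_0(\tilde x)=\alpha^{-1}\int_{-\infty}^x M_0\,dF_0 .
\]
So the hypothesis $E[g_{\varphi_x}(Z_1,\theta_0)]=0$ for all $x\in\mathcal S_X$ is exactly the premise of \eqref{key:DT}, and \eqref{key:DT} yields $F=F_0$.

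The steps need some care. Before applying Fubini, we must check that the ratio inside $M_0$ is well defined. This needs $F_0(V)-F_0(U-)>0$ on the relevant set; we will argue this from $M>0$ on $\mathcal S_X$, interpreting the ratio as $0$ where $\1\{U\le\tilde x\le V\}$ vanishes for $dF_0$-a.e.\ $\tilde x$. We also need the integrands to be nonnegative, which is what justifies Tonelli. The role of $M>0$ here is to ensure that $\int_{-\infty}^x M\,dF$ determines $F$ on the whole of $\mathcal S_X$, so that the uniqueness statement \eqref{key:DT} is applicable; see \cite{deUnaAlvarezVanKeilegom2021}.

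We expect the main obstacle to be the second display: carrying out the change of measure correctly so that the factor $(F(V)-F(U-))/(F_0(V)-F_0(U-))$ appears, and handling the null sets where the denominator vanishes. The rest is bookkeeping.
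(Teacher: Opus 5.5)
Your proposal is correct and follows the paper's proof essentially step for step. The direct part uses $E[\varphi(X_1)\mid U_1,V_1]=\psi^{\varphi}_{\theta_0}(Z_1)$ under the null. The converse uses the selected marginal $\alpha^{-1}(F(v)-F(u-))\,dG(u,v)$ of $(U_1,V_1)$ to rewrite the two expectations as $\alpha^{-1}\int_{-\infty}^x M\,dF$ and $\alpha^{-1}\int_{-\infty}^x M_0\,dF_0$, and then applies \eqref{key:DT}. Your extra care about vanishing denominators is a refinement the paper omits. Note, though, that $M>0$ on $\mathcal S_X$ does not by itself force $F_0(V)-F_0(U-)>0$; it is your $0/0=0$ convention, together with Tonelli, that actually takes care of this.
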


The score $g_\varphi(z,\theta)$ above is different from that considered in the LTRC setting, even in
common scenarios such as one-sided (left) truncation or random sampling. None of these two possible
scores uniformly dominates the other; see Section~\ref{sub:sim_random} in the Supplementary Material for empirical results.

From the definition of $g_\varphi(z,\theta)$ it follows that
$B(\varphi,\theta)=-E[\varphi(X_1)l_\theta'(Z_1)]$, where the score attached to the conditional
loglikelihood of $X_1$ given $U_1=u$ and $V_1=v$ is
\[
l_\theta(z)=l_\theta(x,u,v)=\frac{\partial}{\partial\theta}\log\left(\frac{f_\theta(x)}{F_\theta(v)-F_\theta(u-)}\right).
\]
Let $\hat{\theta}$ be the solution of $E_n[l_{\hat{\theta}}(Z_1)]=0$. The influence function of the process
$E_n[g_\varphi(Z_1,\hat{\theta})]$ is
\begin{equation}
\Pi^\perp\varphi(z)
=
g_{\varphi}(z,\theta_0)
-
E[\varphi(X_1)l_{\theta_0}'(Z_1)]\,
I_{\theta_0}^{-1}l_{\theta_0}(z).
\label{eq:DT_pi}
\end{equation}

\noindent We consider the following assumptions; see Theorem~2.2.7 in \cite{deUnaAlvarez2021Book}
for detailed conditions in MLE under double truncation to ensure Assumption~B2(ii).

\medskip
\noindent \textbf{Assumption E2.}
For each $z_1=(x_1,u_1,v_1)\in\mathcal{S}_{Z_1}$ and $\theta\in\Theta_0$,
$\theta\mapsto l_\theta(z_1)$ is continuous in $\Theta_0$ and
\[
E\!\left[ K(X_1,X_1)\sup_{\theta\in\Theta_0}\vert l_\theta(Z_1)\vert^2\right]<\infty.
\]

\begin{lemma}\label{lem:E3}
Under Assumption E2 and Assumption B2(ii), Assumption A2 holds.
\end{lemma}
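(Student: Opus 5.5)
The proof will reduce the claim to Lemma~\ref{Lemma:B2}: under Assumption~B2, Assumption~A2 holds. Part (ii) of B2 is assumed directly, so only B2(i) needs checking. Its first part asks that $\|B_n^\partial(\cdot,\theta_n)-B(\cdot,\theta_0)\|=o_P(1)$ for any consistent $\theta_n$. Its second part asks that $\|B_n(\cdot,\hat\theta)-B(\cdot,\theta_0)\|=o_P(1)$, where we take $B_n(\varphi,\theta)=-E_n[\varphi(X_1)l_\theta'(Z_1)]$, the natural plug-in version of $B(\varphi,\theta)=-E[\varphi(X_1)l'_\theta(Z_1)]$. Both parts are uniform-in-$\varphi\in\mathcal B_K$ convergence statements. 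I will obtain them from the ULLN (Lemma~\ref{lem:B1}) combined with continuity of $\theta\mapsto B(\cdot,\theta)$ in operator norm.

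The first step computes the derivative. Differentiating $g_\varphi(z,\theta)=\varphi(x)-\psi^\varphi_\theta(z)$ gives $\partial_\theta g_\varphi(z,\theta)=-\partial_\theta\psi^\varphi_\theta(z)$. Since $\psi^\varphi_\theta(z)$ is the conditional expectation of $\varphi(X)$ under the truncated density $f_\theta(\cdot)\1\{u\le\cdot\le v\}/(F_\theta(v)-F_\theta(u-))$, differentiating under the integral gives
\[
\partial_\theta\psi^\varphi_\theta(z)=\int\varphi(\tilde x)\,l_\theta(\tilde x,u,v)\,\frac{\1\{u\le\tilde x\le v\}}{F_\theta(v)-F_\theta(u-)}\,dF_\theta(\tilde x),
\]
which is the conditional covariance of $\varphi(X)$ with the conditional score. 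Under $\mathcal H_0$ the conditional law of $X_1$ given $(U_1,V_1)$ is exactly this truncated law at $\theta_0$. Iterated expectations then give $E[\partial_\theta g_\varphi(Z_1,\theta_0)]=-E[\varphi(X_1)l'_{\theta_0}(Z_1)]$, which matches $B(\varphi,\theta_0)$ as stated.

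The second step obtains uniformity over $\mathcal B_K$ through the reproducing property. We write $|\varphi(x)|\le\|\varphi\|_K K^{1/2}(x,x)\le K^{1/2}(x,x)$. The summands $\varphi(X_1)l_\theta(Z_1)$ are then dominated by $K^{1/2}(X_1,X_1)\sup_{\theta\in\Theta_0}|l_\theta(Z_1)|$, which is square integrable by Assumption~E2 and hence integrable. Likewise, $\partial_\theta\psi^\varphi_\theta(Z_1)$ is dominated by the conditional expectation of that same envelope, and by Jensen it is integrable as well. Continuity of $\theta\mapsto l_\theta$ from E2 then lets us apply Lemma~\ref{lem:B1} to both $E_n[\varphi(X_1)l'_\theta(Z_1)]$ and $E_n[\partial_\theta\psi^\varphi_\theta(Z_1)]$, giving convergence to their population counterparts uniformly over $\mathcal B_K\times\Theta_0$. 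Dominated convergence gives continuity of the limits in $\theta$, again uniformly over $\mathcal B_K$ thanks to the envelope. Combining uniform convergence with continuity at $\theta_0$ and consistency of $\theta_n$ and $\hat\theta$ yields both parts of B2(i).

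The main obstacle is the $\psi$-term. Its $\theta$-derivative involves an integral whose integrand depends on $\theta$ through $f_\theta$, through the normalizer $F_\theta(v)-F_\theta(u-)$, and through $l_\theta$ evaluated at an integrated point $\tilde x$ rather than at the data point. Two things must be justified here: differentiation under the integral sign, and an integrable envelope for the ULLN. I would handle this by rewriting $\partial_\theta\psi^\varphi_\theta(z)$ as a conditional expectation under $P_{\theta,G}$ given $(U,V)=(u,v)$. Conditional Jensen then transfers the E2 moment bound to this term, after which the argument proceeds exactly as for the $B_n$ term.
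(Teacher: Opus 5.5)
Your treatment of the plug-in term $B_n(\varphi,\theta)=-E_n[\varphi(X_1)l_\theta'(Z_1)]$ is exactly the paper's proof. In fact the paper's proof consists of nothing else. It applies Lemma~\ref{lem:B1} with $s_\varphi(z,\theta)=\varphi(x)\,l_\theta(z)$, whose induced kernel is $K_s(z_1,z_2,\theta)=K(x_1,x_2)\,l_\theta(z_1)'l_\theta(z_2)$. The moment condition of Assumption~B1, $E[\sup_{\theta\in\Theta_0}K_s(Z_1,\theta)]<\infty$, is then literally Assumption~E2. (Lemma~\ref{lem:B1} as stated needs this second moment of your envelope, which E2 supplies; integrability alone would not do.) The paper never treats the $B_n^\partial$ half of B2(i), which you rightly note Lemma~\ref{Lemma:B2} also requires.

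That extra half is where your argument has a genuine gap. For $s_\varphi(z,\theta)=\partial_\theta\psi^\varphi_\theta(z)$, the Riesz representer is $r_s(z,\theta)=E_\theta[K_X\,l_\theta(X,u,v)\mid U=u,V=v]$. Here $E_\theta[\cdot\mid u,v]$ integrates against the $\theta$-truncated law $f_\theta(\tilde x)\1\{u\le\tilde x\le v\}/(F_\theta(v)-F_\theta(u-))$, so conditional Jensen gives $K_s(z,\theta)\le E_\theta[K(X,X)|l_\theta(X,u,v)|^2\mid u,v]$. Assumption~E2, by contrast, is a moment under the sampling law, which under $\mathcal H_0$ is the $\theta_0$-truncated law. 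Jensen therefore transfers E2 to the $\psi$-term only at $\theta=\theta_0$. Three things require uniform control, in $\theta$ near $\theta_0$, of the likelihood ratio between the $\theta$- and $\theta_0$-truncated laws: bounding $E[\sup_{\theta\in\Theta_0}K_s(Z_1,\theta)]$, continuity of $\theta\mapsto K_s(z_1,z_2,\theta)$ (both needed for Lemma~\ref{lem:B1}), and differentiation under the integral sign. Neither E2 nor B2(ii) provides this control.

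The paper's own running example shows the same issue. There the $B_n^\partial$ part is obtained from the condition \eqref{Khalfmoment} on $\sup_\theta|\partial f_\theta/\partial\theta|$, which the paper imposes separately from the score moment \eqref{Kmoment}. To close the gap you need an analogous extra assumption here, for instance $E\bigl[\sup_{\theta\in\Theta_0}E_\theta\{K(X,X)|l_\theta(X,U_1,V_1)|^2\mid U_1,V_1\}\bigr]<\infty$. You also need enough continuity of the truncated density in $\theta$ to run dominated convergence. With these in place, your ULLN-plus-continuity argument goes through.
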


\noindent The norm-reducing property of \eqref{eq:DT_pi} is established next.

\begin{lemma}\label{lem:E4}
The function $\Pi^\perp$ in \eqref{eq:DT_pi} is norm reducing, so
$E[K_g^\perp(Z_1,Z_1)] < \infty$ holds under $E[K(X_1,X_1)] < \infty$.
\end{lemma}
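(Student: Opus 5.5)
We will show that, under $\mathcal H_0$ (so $P=P_{\theta_0,G}$), $\Pi^\perp$ in \eqref{eq:DT_pi} acts on $f(z)=\varphi(x)$ as an orthogonal projection in $L_2(P)$. Concretely, it is the projection onto the orthocomplement of a closed subspace $\mathcal L$ that contains both the functions of $(u,v)$ and the span of the score $l_{\theta_0}$. Orthogonal projections are norm-reducing, so $\|\Pi^\perp f\|_2\le\|f\|_2$. The finiteness claim then follows from Lemma~\ref{Lemma:B3}(ii), with \eqref{eq:kernel-bounded} read as $E[K(X_1,X_1)]<\infty$, since the kernel acts on the $x$-coordinate. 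Indeed, $\varphi\mapsto\varphi(X_1)$ embeds $\mathcal H_K$ into $L_2(P)$ with $\|\varphi(X_1)\|_2\le \sqrt{E[K(X_1,X_1)]}\,\|\varphi\|_K$.

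\textbf{Step 1 (conditional centering is a projection).} Define $\mathcal M=\{m(U_1,V_1)\}\subset L_2(P)$, the closed subspace of square-integrable functions of the truncation pair. Under the null, the conditional law of $X_1$ given $(U_1,V_1)=(u,v)$ is $F_{\theta_0}$ restricted to $[u,v]$ and renormalized. Hence $\psi^\varphi_{\theta_0}(Z_1)=E[\varphi(X_1)\mid U_1,V_1]$, and therefore $g_\varphi(\cdot,\theta_0)=(I-P_{\mathcal M})f$, where $P_{\mathcal M}$ is the orthogonal projection (conditional expectation).

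\textbf{Step 2 (the score lies in $\mathcal M^\perp$, and the correction is a projection onto it).} Since $l_{\theta_0}(z)=\partial_\theta\log$ of the conditional density of $X_1$ given $(U_1,V_1)$, we have $E[l_{\theta_0}(Z_1)\mid U_1,V_1]=0$ (differentiation under the integral, as in standard MLE theory assumed via B2(ii)). So each component of $l_{\theta_0}$ lies in $\mathcal M^\perp$, and $I_{\theta_0}=E[l l']$. Next, $E[\varphi(X_1)l'_{\theta_0}(Z_1)]=E[g_\varphi(Z_1,\theta_0)\,l'_{\theta_0}(Z_1)]$, because $E[\psi^\varphi_{\theta_0}\,l']=E[\psi^\varphi_{\theta_0}\,E(l'\mid U,V)]=0$. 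Hence
\[
\Pi^\perp\varphi = g_\varphi - E[g_\varphi l']I^{-1}l = (I-P_{\mathcal S})(I-P_{\mathcal M})f,
\]
where $P_{\mathcal S}$ is the orthogonal projection onto $\mathcal S=\mathrm{span}(l_{\theta_0})$.

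\textbf{Step 3 (combine).} Because $\mathcal S\subset\mathcal M^\perp$, the projections $P_{\mathcal S}$ and $P_{\mathcal M}$ are mutually orthogonal. Hence $(I-P_{\mathcal S})(I-P_{\mathcal M})=I-P_{\mathcal M\oplus\mathcal S}$, which is an orthogonal projection. This gives $\|\Pi^\perp f\|_2\le\|f\|_2$, and the bound extends to all $f\in L_2(P)$ by the same formula. Continuity of $\Pi^\perp$ and $E[K_g^\perp(Z_1,Z_1)]=E_A\|\Pi^\perp A\|_2^2\le E_A\|A(X_1)\|_2^2=E[K(X_1,X_1)]<\infty$ then follow as in Lemma~\ref{Lemma:B3}(ii).

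\textbf{Main obstacle.} The delicate step is Step 2: justifying $E[l_{\theta_0}\mid U_1,V_1]=0$ and the resulting identity for $B(\varphi,\theta_0)$. This requires interchanging differentiation and integration for the truncated conditional density and ensuring $F_{\theta}(v)-F_\theta(u-)>0$ a.s. I would rely on the regularity already implicit in Assumption B2(ii) and E2, namely a nonsingular $I_{\theta_0}$ and a dominated derivative.
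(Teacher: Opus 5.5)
Your proof is correct and follows essentially the same route as the paper. The paper first removes the score component, as in Lemma~\ref{lem:E2}, to get $\|\Pi^\perp\varphi\|_2\le\|g_\varphi\|_2$, and then uses $E[g_\varphi^2\mid U_1,V_1]=V[\varphi(X_1)\mid U_1,V_1]\le E[\varphi(X_1)^2\mid U_1,V_1]$; these two steps are your projections $I-P_{\mathcal S}$ and $I-P_{\mathcal M}$, which you compose into the single projection $I-P_{\mathcal M\oplus\mathcal S}$. Your Step~2, which derives $E[\varphi(X_1)l_{\theta_0}'(Z_1)]=E[g_\varphi(Z_1,\theta_0)l_{\theta_0}'(Z_1)]$ from $E[l_{\theta_0}(Z_1)\mid U_1,V_1]=0$, usefully makes explicit the cross-term identity that the paper's reduction step needs. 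The paper justifies that step only by $g_\varphi$ being zero-mean and $BI^{-1}B'\ge 0$, which does not by itself suffice.
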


\noindent For doubly truncated data, $d_K(F,F_0)$ equals
\[
\alpha^{-2}
\int\!\!\int K(x_{1},x_{2})
\big(M(x_1)dF(x_{1})-M_0(x_1)dF_0(x_{1})\big)
\big(M(x_2)dF(x_{2})-M_0(x_2)dF_0(x_{2})\big),
\]
where $\alpha =P(U\leq X\leq V)$ and $M_0$ is the function in \eqref{key:DT}. A global consistency
result requires that the double integral above is strictly positive for $F\neq F_0$.
This holds for characteristic kernels provided that \eqref{key:DT} holds.

%%%%%%%%%%%%%%%%%%%%%%%%%%%%%%%%%%%%%%%%%%%%%%%%%%%%%%%%%
%%%%%%%%%%%%%%%%%%%%%%%%%%%%%%%%%%%%%%%%%%%%%%%%%%%%%%%%%
\section{Numerical studies and illustration}\label{sec:num}

This section reports Monte Carlo evidence on the finite-sample performance of the
proposed test in the random double truncation design of Section~\ref{subsec:double_trunc},
and provides a real-data illustration for the same setting. Results for the running
example of random sampling from $F$ (complete data) are reported in the
Supplementary Material (see Appendix~S.\ref{SM:sim_random}).

\subsection{Simulation study: random double truncation}\label{sub:sim_double_trunc}

We consider the random double truncation design in Section~\ref{subsec:double_trunc}.
To generate double truncation we use an interval sampling scheme where $(U,V)$ lies on a
line. Specifically, for a constant $\nu$, the left-truncating variable is simulated as
$U\sim Exp(1)-\nu$, independently of $X$, and we set $V=U+3+\nu$. Hence, the sampling
window has width $3+\nu$, and truncation becomes more severe as $\nu$ decreases. We
consider $\nu=1$ (weak truncation, WT) and $\nu=0.5$ (strong truncation, ST). Truncated
observations are discarded and resampled until the desired sample size $n$ is reached. We use 1000 Monte Carlo trials under
$\mathcal{H}_0$ and 500 trials under $\mathcal{H}_1$, and sample sizes
$n\in\{50,100,200\}$.

We use the statistic $n\hat Q_K$ based on the orthogonal influence function
\eqref{eq:DT_pi}. The test functions range over the indicator class, and critical values
are obtained via the multiplier bootstrap with $B=499$ resamples. In the absence of
truncation this construction reduces to the random-sampling implementation for $n\hat Q_K^{(2)}$ reported in
the Supplementary Material, Section S2.

For implementation, the induced kernel $K_g^\perp$ must be estimated. This is partly
done analytically as in the complete-data case, but under composite nulls with double
truncation some expectations are difficult to evaluate in closed form and are therefore
replaced by sample averages.

Table~\ref{tab:DT} reports rejection proportions at level $\alpha=0.05$, together with
the proportion of truncated observations. The level is well respected in all
configurations. Under alternatives, power increases with sample size and with the
distance to the null. As expected, power decreases as truncation becomes stronger,
reflecting the information loss induced by the selection mechanism.

\begin{table}[tbp]
\begin{center}
{\small \addtolength{\tabcolsep}{-2pt}
\begin{tabular}{cccccccccc}
& WT &  &  &  &  &ST & &  &  \\
$\theta$ &pt & $n=50$ & $n=100$ & $n=200$ &  & pt& $n=50$ & $n=100$ & $n=200$ \\
\hline
& & & &  &  &  &  &  &  \\
$0.5$ & .266&.730 & .972 & 1.00 &  &.434 & .690 & .958 & 1.00 \\
$0.8$ & .226& .104 & .186 & .418 & &.364 & .104 & .176 & .360 \\
$1$ & .209&.052 & .053 & .066 &  &.328& .061 & .071 & .054 \\
$1.2$ & .197&.108 & .184 & .286 & &.302 & .114 & .164 & .238 \\
$1.5$ & .189&.330 & .526 & .864 & &.271 & .276 & .480 & .786 \\
&  &  &  & & & &  &  &  \\ \hline
\end{tabular}
}
\end{center}
\caption{Proportion of truncated data (pt) and rejection proportions of the test statistic $n\hat Q_K$ under weak
(WT) and strong (ST) random double truncation along the Monte Carlo trials at level
$\protect\alpha=0.05$. The null hypothesis corresponds to $\protect\theta=1$.}
\label{tab:DT}
\end{table}

\subsection{Empirical illustration: quasar data}\label{sub:quasar}

\cite{EfronPetrosian1999} investigated the distribution of the adjusted log luminosity
$X$ (after correction for luminosity evolution) using data on $n=210$ stellar objects.
These data are doubly truncated because only luminosities within an observation-specific
bounded interval can be detected. The data are available in the \texttt{DTDA} package in
\texttt{R} (object \texttt{Quasars}). \cite{EfronPetrosian1999} reported a nonlinear shape
of the estimated log survival function $\log(1-F_n(x))$, where $F_n$ denotes the NPMLE of
the cdf, suggesting lack of fit to the exponential model $\log(1-F(x))=\theta x$.

We revisit these data and formally test the composite null hypothesis that $X$ is
exponentially distributed. We apply the goodness-of-fit test adapted to random double
truncation as discussed in Section~\ref{subsec:double_trunc}, using the indicator class
$\varphi_x(\tilde x)=\1\{\tilde x\le x\}$ and $B=499$ multiplier bootstrap resamples.

In Figure~\ref{fig:Quasars_F}, left, we display the NPMLE $F_n$, together with the ordinary
empirical cdf of the observed $X_i$ for comparison. Following \cite{EfronPetrosian1999}, the
data are shifted by subtracting the minimum observed log luminosity so that they lie on the
positive half-line. The ordinary empirical cdf substantially overstates the distribution due
to the selection bias induced by double truncation, which makes low luminosities unlikely to
be observed. The MLE under the exponential model is $\hat\theta=1.7762$, and the fitted
exponential cdf is also shown.

Figure~\ref{fig:Quasars_F}, right, depicts the density of the bootstrapped statistics
$n\hat Q_K^\ast$, and the vertical dashed line indicates the value on the original data
($n\hat Q_K=0.0225$). The bootstrap p-value is $p=0.0661$. Thus, the exponential model is
accepted at the 5\% level but rejected at the 10\% level. This conclusion is consistent with
the curvature in the estimated log-survival function emphasized by \cite{EfronPetrosian1999},
and illustrates that the proposed orthogonal-score MMD statistic can be informative in doubly
truncated designs where nonparametric cdf estimation is intrinsically noisy.

\begin{figure}[htb]
\begin{center}
\includegraphics[width=140mm]{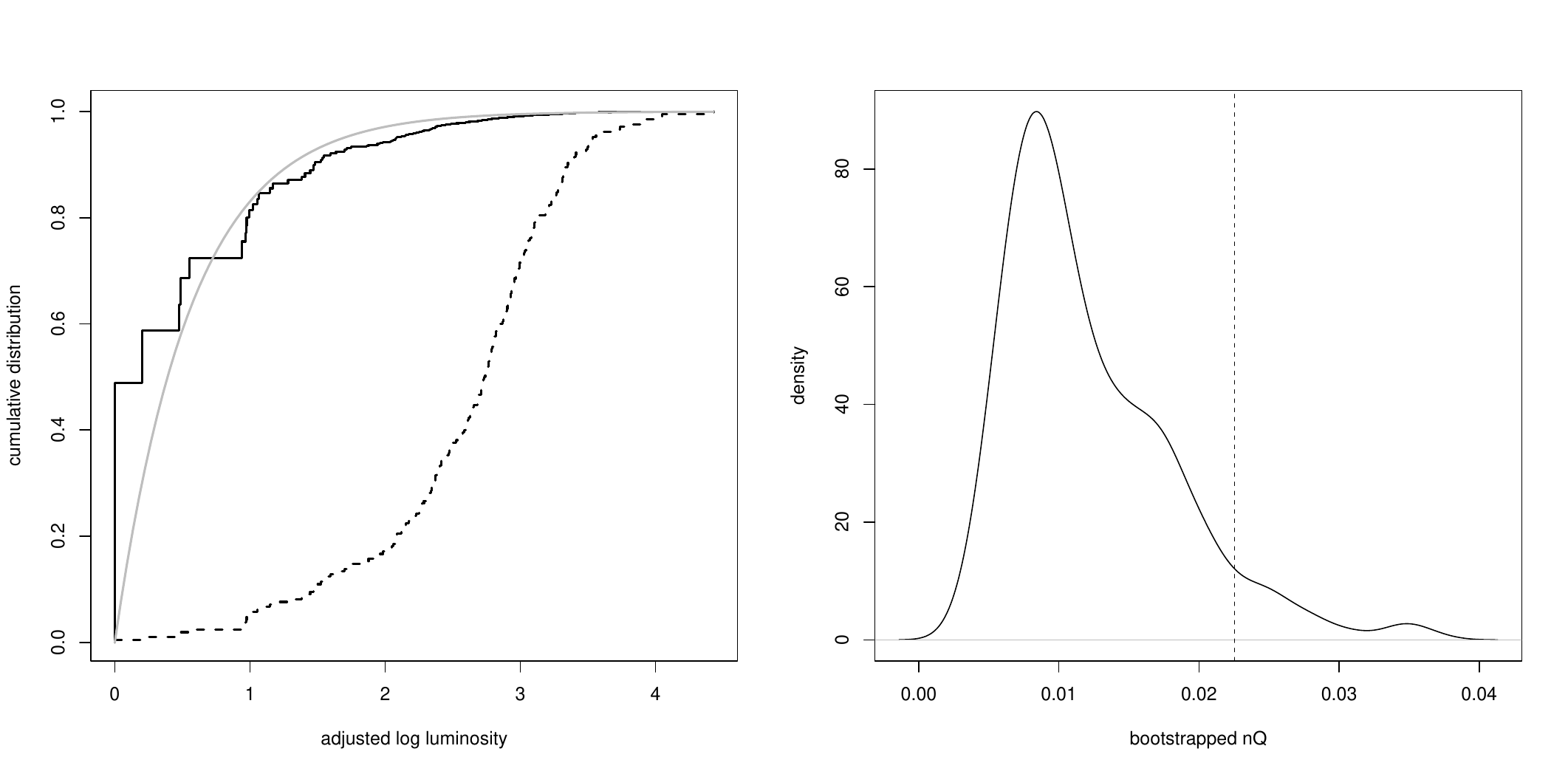}
\end{center}
\caption{Left: NPMLE of the quasar luminosity (solid black), ordinary empirical cdf (dashed), and fitted exponential model (gray). Right: density of $B=499$ multiplier-bootstrap values of $n\hat Q_K^\ast$; dashed line is $n\hat Q_K$ on the data ($p=0.0661$).}
\label{fig:Quasars_F}
\end{figure}

%%%%%%%%%%%%%%%%%%%%%%%%%%%%%%%%%%%%%%%%%%%%%%%%%%%%%%%%%
\section{Discussion}\label{sec:disc}

We proposed a general score-based framework for omnibus goodness-of-fit testing with censoring and truncation, combining (i) regular, IF-based orthogonalization and (ii) RKHS aggregation via an induced-kernel quadratic form. The resulting statistic is a computationally simple $V$-statistic, and the multiplier bootstrap keeps nuisance estimates fixed, which is particularly appealing in incomplete-data designs where re-estimating nuisance components can be unstable or expensive.

The LTRC and random double truncation examples show how the framework adapts by choosing unbiased scores and deriving the corresponding orthogonal influence functions. For characteristic kernels, the induced distance $d_K(F,F_0)$ provides a transparent route to global consistency under identifiability. The simulation evidence confirms accurate size and power patterns that reflect information loss as truncation intensifies, and the quasar illustration demonstrates practical relevance in doubly truncated data.

Several extensions are natural. First, developing principled data-driven kernel choices for improved sensitivity against application-specific local alternatives would be useful. Second, extending the framework to covariate-conditional GOF problems and dependent data (e.g., time series) would broaden applicability while retaining the same IF logic.
%%%%%%%%%%%%%%%%%%%%%%%%%%%%%%%%%%%%%%%%%%%%%%%%%%%%%%%%%
\section*{Acknowledgements}
Juan Carlos Escanciano gratefully acknowledges financial support from grants PID2021-127794NB-I00, CEX2021-001181-M, MCIN/AEI/10.13039/501100011033, Comunidad de Madrid (CAM) Grants EPUC3M11 (V PRICIT) and Mad-Econ-Pol-CM H2019/HUM-5891. Jacobo de U\~na-\'Alvarez is supported by Grant PID2023-148811NB-I00.
%%%%%%%%%%%%%%%%%%%%%%%%%%%%%%%%%%%%%%%%%%%%%%%%%%%%%%%%%%%
\bibliographystyle{abbrvnat}
\bibliography{references}

\ifIncludeSM
  \startSupplementaryMaterial
  \SMfrontmatter

%%%%%%%%%%%%%%%%%%%%%%%%%%%%%%%%%%%%%%%%%%%%%%%%%%%%%%%%%
%%%%%%%%%%%%%%%%%%%%%
\section{Additional results for complete data}
\label{sec:Complete}
In this section, we collect some additional results on the complete data example, including computation of the test statistic and comparison with some existing results.

\subsection{Computation of the test statistic}
Define
\begin{equation}
q_\theta(z, z_1) = l_\theta'(z) I_\theta^{-1} l_\theta(z_1), \quad
\Pi_\theta \varphi(z_1) = \int_{\mathcal{S}_{Z_1}} \varphi(z) q_\theta(z, z_1) dF_\theta(z).
\label{eq:q-operator}
\end{equation}
Then $\Pi^\perp \varphi(z_1) = g_\varphi(z_1, \theta_0) -
\Pi_{\theta_0} g_\varphi(z_1, \theta_0)$ since
$\int_{\mathcal{S}_{Z_1}} q_{\theta_0}(z, z_1) \, dF_{\theta_0}(z) = 0$ by the
zero-mean property of the score.

To compute the induced kernel in this example, note that $g_\varphi(z_1, \theta_0) = \varphi(z_1) - E_{\theta_0}[\varphi(X)]$, so that $K_g=K_{\theta_0}$, where
\begin{align*}
K_{\theta}(z_1, z_2)
&= K(z_1, z_2)
- \int_{\mathcal{S}_{Z_1}} K(z_1, z)\, dF_{\theta}(z)
- \int_{\mathcal{S}_{Z_1}} K(z_2, z)\, dF_{\theta}(z) \\
&\quad + \int_{\mathcal{S}_{Z_{1}}} \int_{\mathcal{S}_{Z_1}} K(z, \tilde{z})\, dF_{\theta}(z)\, dF_{\theta}(\tilde{z}).
\end{align*}
Using Lemma~\ref{Lemma:InducedKernel} and $q_{\theta_0, G} \equiv q_{\theta_0}$
from \eqref{eq:q-operator}, one can obtain $K_g^\perp$. The kernel $K_g^\perp$
coincides with the score-centered kernel of \citet{Lindsay2008}. 

From the induced kernel, the computation of the test statistic follows some simple matrix algebra. The $n \times n$ matrix $\mathbb{\hat{K}}_g^{\perp}$ with $ij$-th element $\hat{K}_g^{\perp}(Z_i,Z_j)$ is
\[
\mathbb{\hat{K}}_g^{\perp} = \hat{\Pi}_{l_{\hat{\theta}}}^{\perp} \mathbb{\hat{K}}_g \hat{\Pi}_{l_{\hat{\theta}}}^{\perp},
\]
where $\hat{\Pi}_{l_{\hat{\theta}}}^{\perp} = I_n - \mathbf{l}_{\hat{\theta}} (\mathbf{l}_{\hat{\theta}}' \mathbf{l}_{\hat{\theta}})^{-1} \mathbf{l}_{\hat{\theta}}$, $I_n$ is the $n\times n$ identity matrix, $\mathbf{l}_{\hat{\theta}}$ is the $n \times p$ matrix of scores with $i$-th row $l_{\hat{\theta}}'(Z_i)$, and $\mathbb{\hat{K}}_g$ is the $n\times n$ matrix with $ij$-th element $\hat{K}_g(z_i,z_j)=K_{\hat{\theta}}(z_i,z_j)$. To illustrate, with the kernel $K=K_{ind}$, we have the closed form
\[
\hat{K}_g(z_1,z_2)= 1 - F_{\hat{\theta}}(z_1 \vee z_2)
-\tfrac12\!\left(1 - F_{\hat{\theta}}(z_1)^2\right)
-\tfrac12\!\left(1 - F_{\hat{\theta}}(z_2)^2\right)+\tfrac13, 
\]
where $a\vee b=\max\{a,b\}$. 

\begin{remark}[Alternative test statistic]\label{remark:alternative}
For random sampling, an alternative statistic can be derived from the score function
\[
g_{\varphi}(z_1,\theta)
=
\varphi(z_1) - \int_{\mathcal{S}_{Z_1}} \varphi(z)\frac{\1\{z_1 \ge z\}}{1 - F_\theta(z-)}\, dF_\theta(z),
\]
as noted by \cite{Akritas1988} in the context of Pearson-type tests. For $K=K_{ind}$, the induced kernel is
\[
K_g(z_1,z_2)
=
E\Big[(\1\{z_1 \le Z_1\} - \Lambda_{F_0}(Z_1 \wedge z_1))
(\1\{z_2 \le Z_1\} - \Lambda_{F_0}(Z_1 \wedge z_2)) \Big],
\]
where $x \wedge y = \min(x,y)$ and $\Lambda_F$ denotes the cumulative hazard function attached to $F$. $\square$
\end{remark}

\subsection{Comparison with existing tests}
\subsubsection{Comparison with classical tests: simple hypothesis}

In the simple null hypothesis case, With $\theta_0$ known, choosing $K=K_{ind}$ yields the classical
distribution-function GOF family. By Lemma~\ref{Lemma:QuadraticForm} (with $\hat K_g^\perp=K_g$) and Fubini,
$n\hat Q_K$ coincides with the Cram\'er--von Mises statistic,
\[
n\hat Q_K=\int (\Delta_n(\varphi_z))^2\,dF_0(z),\qquad
\Delta_n(\varphi_z)=\sqrt n\,(\mathbb P_n\varphi_z-P_0\varphi_z),
\]
where $\varphi_z(Z)=\1\{Z\le z\}$ and $F_0=F_{\theta_0}$.
 
\subsubsection{Comparison with existing tests: composite hypothesis}

Our test for the complete data case improves upon existing tests in \cite{Lindsay2008,Lindsay2014} and \cite{key2021composite}. These papers propose as test statistic
\begin{equation}
\label{eq:comparison}
\tilde{Q}_{K}
=
\frac{1}{n^{2}}\sum_{i=1}^{n}\sum_{j=1}^{n}\hat{K}_{g}(Z_{i},Z_{j}).
\end{equation}

That is, existing procedures do not use orthogonal processes. \cite{Lindsay2008,Lindsay2014} allows for the MLE estimator, but uses a conservative Satterthwaite approximation to compute critical values. \cite{key2021composite} use a consistent parametric bootstrap approximation, but requires the estimator to be a minimum distance estimator (minimizing the MDD distance $\tilde{Q}_{K}$ when viewed as a function of $\theta$). This estimator is generally less efficient than MLE. Moreover, their test requires to re-estimate in each bootstrap sample, which may be computationally expensive. Our orthogonal process-based test solves these limitations: it is not conservative, allows for general estimators, including MLE, and it does not require to re-estimate in each bootstrap sample. 
%%%%%%%%%%%%%%%%%%%%%%%%%%%%%%%%%%%%%%%%%%%%%%%%%%%%%%%%%
\section{Current status data}
\label{sec:CS_supp}

\subsection{Observation scheme and null hypothesis}

Let $X$ denote the event time of interest with cdf $F$. In the current status design,
one observes an inspection time $C$ and the indicator
\[
\delta=\1\{X\le C\}.
\]
Assume $C$ is independent of $X$ with cdf $G$, and that the support of $G$ contains the
relevant region of the support of $X$ (positivity/overlap). The observed data are
$Z=(\delta,C)$, and we test the composite null
\[
\mathcal{H}_0:\quad F\in\{F_\theta:\theta\in\Theta\subset\R^p\}.
\]

\subsection{Score construction and identification}

For a test function $\varphi$ defined on the support of $C$, for $z=(d,c)$ define the score
\begin{equation}
g_\varphi(z,\theta)=g_\varphi(d,c;\theta)
=\varphi(c)\big(d-F_\theta(c)\big),
\label{eq:CS_score}
\end{equation}
so that $E[g_\varphi(Z_1,\theta_0)]=0$ under $\mathcal{H}_0$. Note that the score is free of $G$. Note also that the score (\ref{eq:CS_score}) follows from (\ref{eq:gphi_cond}) by setting $\psi(z)=\varphi(c)d$ and $W_1^o=C_1$.

Let $\mathcal{C}_{\mathrm{ind}}=\{\varphi_x(\cdot):x\in\mathcal{S}_X\}$ where
$\varphi_x(c)=\1\{c\le x\}$. The next statement characterizes identification in this
design.

\begin{proposition}\label{prop:S3}
(i) Under $F=F_{\theta_0}$, $E[g_\varphi(Z_1,\theta_0)]=0$ for all
$\varphi\in\mathcal{C}_{\mathrm{ind}}$. (ii) Conversely, if $E[g_\varphi(Z_1,\theta_0)]=0$
for all $\varphi\in\mathcal{C}_{\mathrm{ind}}$ and $dG(x)>0$ for $x\in\mathcal{S}_X$,
then $F=F_{\theta_0}$ on $\mathcal{S}_X$.
\end{proposition}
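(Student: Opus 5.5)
The plan is to compute $E[g_\varphi(Z_1,\theta_0)]$ explicitly by conditioning on the inspection time, and then read off both parts of Proposition~\ref{prop:S3}. Since $C$ is independent of $X$, we have $E[\delta\mid C=c]=P(X\le c)=F(c)$. Hence, for any $\varphi$ with the required integrability,
\[
E[g_\varphi(Z_1,\theta_0)]=E\big[\varphi(C_1)\{E[\delta_1\mid C_1]-F_{\theta_0}(C_1)\}\big]=\int \varphi(c)\,\{F(c)-F_{\theta_0}(c)\}\,dG(c).
\]

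For part (i), under $F=F_{\theta_0}$ the integrand vanishes identically. This gives $E[g_\varphi(Z_1,\theta_0)]=0$ for every bounded measurable $\varphi$, and in particular for every $\varphi_x\in\mathcal C_{\mathrm{ind}}$. No integrability issues arise, because indicators and cdfs are bounded.

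For part (ii), set $D(c):=F(c)-F_{\theta_0}(c)$, a bounded right-continuous function. Taking $\varphi=\varphi_x$, the hypothesis says that
\[
\int_{(-\infty,x]} D(c)\,dG(c)=0 \quad\text{for all } x\in\mathcal S_X.
\]
The signed measure $\nu(A):=\int_A D\,dG$ is finite, so this says $\nu((-\infty,x])=0$ for all $x\in\mathcal S_X$. Taking differences gives $\nu((a,b])=0$ for $a<b$ in $\mathcal S_X$. A $\pi$-$\lambda$ (Dynkin) argument then shows that $\nu$ vanishes on every Borel subset of the relevant region, so $D=0$ $G$-a.e. there.

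The main obstacle is passing from $G$-a.e.\ to pointwise equality on $\mathcal S_X$. I would interpret the positivity condition ``$dG(x)>0$ for $x\in\mathcal S_X$'' as saying that $G$ charges every neighbourhood of each point of $\mathcal S_X$, i.e.\ $\mathcal S_X\subseteq \mathrm{supp}(G)$. Then for any $x\in\mathcal S_X$ and any $h>0$, the set $(x,x+h]$ (or a neighbourhood of $x$) has positive $G$-mass. So there exist points $c_h\downarrow x$, arbitrarily close, with $D(c_h)=0$. Right-continuity of $F$ and $F_{\theta_0}$ then yields $D(x)=0$.

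Boundary points, such as the right endpoint of $\mathcal S_X$, need separate care. At the right endpoint, both cdfs equal one beyond the support of $X$ when that support is common to $F$ and $F_{\theta_0}$. If that argument fails, the alternative is to use approximating points from the left together with left limits, and then appeal to right-continuity and the fact that jump locations are countable. Either route gives $F=F_{\theta_0}$ on $\mathcal S_X$. If instead $G$ has a density that is positive on $\mathcal S_X$, the argument simplifies: $D=0$ Lebesgue-a.e.\ on $\mathcal S_X$, and right-continuity finishes the proof directly.
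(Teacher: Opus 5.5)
Your proposal is correct and follows essentially the same route as the paper: conditioning on $C$ gives $E[g_{\varphi_x}(Z_1,\theta_0)]=\int_{(-\infty,x]}(F-F_{\theta_0})\,dG$. This vanishes under the null, and under the converse hypothesis it yields $F\,dG=F_{\theta_0}\,dG$ on $\mathcal S_X$, after which positivity of $dG$ is invoked; your Dynkin-plus-right-continuity argument is a rigorous version of the paper's ``fundamental theorem of calculus, then cancel $dG(x)$'' step. The one loose spot is your endpoint remark, since the left-limit route only yields $F(x-)=F_{\theta_0}(x-)$, so at, e.g., a right endpoint of $\mathcal S_X$ where $F$ has an atom and $G$ does not, the value of $F_{\theta_0}$ is not pinned down. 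The paper's argument does not address that edge case either.
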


\subsection{MLE, orthogonal influence function, and sufficient conditions}

Under $\mathcal{H}_0$, the conditional likelihood of $\delta$ given $C=c$ is Bernoulli
with success probability $F_\theta(c)$. The loglikelihood contribution is
\[
\ell_\theta(d,c)=d\log F_\theta(c)+(1-d)\log(1-F_\theta(c)).
\]
Let $\dot F_\theta(c)=\partial F_\theta(c)/\partial\theta$ (a $p$-vector) and define
\begin{equation}
s_\theta(c)=\frac{\dot F_\theta(c)}{F_\theta(c)\big(1-F_\theta(c)\big)}.
\label{eq:CS_stheta}
\end{equation}
Then the score for $\theta$ is
\begin{equation}
l_\theta(d,c)= s_\theta(c)\big(d-F_\theta(c)\big).
\label{eq:CS_score_theta}
\end{equation}
Let $\hat\theta$ solve $E_n[l_{\hat\theta}(Z_1)]=0$.

Since $\partial g_\varphi(\delta,c;\theta)/\partial\theta'=-\varphi(c)\dot F_\theta(c)'$,
the orthogonal influence function for $E_n[g_\varphi(Z_1,\hat\theta)]$ takes the form
\begin{equation}
\Pi^\perp\varphi(z)
=
g_{\varphi}(z,\theta_0)
-
E\!\big[\varphi(C_1)\dot F_{\theta_0}(C_1)'\big]\,
I_{\theta_0}^{-1}l_{\theta_0}(z),
\label{eq:CS_pi}
\end{equation}
where $I_{\theta_0}=E[l_{\theta_0}(Z_1)l_{\theta_0}(Z_1)']$ is the Fisher information matrix. Sufficient conditions for Assumption A2 are given in the following lemma. To verify Assumption B2(ii), see details on MLE with current status data in \cite{KoulYi2006}.
In particular, Corollary 3.1 in \cite{KoulYi2006} gives conditions for the
asymptotic normality of the MLE in the case of the exponential scale family.
Discussion and references on MLE for general interval censored data are
given in \cite{Sun2006}, Sections 2.3 and 2.5.

\begin{lemma}\label{lem:E6}
A sufficient condition for Assumption~A2 in the current status setting is that, for a compact
$\Theta_0\subset\Theta$ containing $\theta_0$, $\theta\mapsto \dot F_{\theta}(C_1)$ is continuous a.s.\ and
\[
E\!\left[K(C_1,C_1)\sup_{\theta\in\Theta_0}\|\dot F_{\theta}(C_1)\|^2\right]<\infty,
\]
together with the standard MLE regularity in Assumption~B2(ii) of the main manuscript.
\end{lemma}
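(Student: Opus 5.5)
The plan is to reduce Lemma~\ref{lem:E6} to Lemma~\ref{Lemma:B2}, i.e.\ to verify Assumption~B2. Part (ii) of B2 is assumed directly, so only B2(i) needs work. In the current status design, $g_\varphi(z,\theta)=\varphi(c)(d-F_\theta(c))$. Its derivative is
\[
\partial g_\varphi(z,\theta)/\partial\theta'=-\varphi(c)\dot F_\theta(c)',
\]
which exists whenever $F_\theta(c)$ is differentiable in $\theta$; this is implicit in the definition of $\dot F_\theta$. Hence
\[
B_n^\partial(\varphi,\theta)=-E_n[\varphi(C_1)\dot F_\theta(C_1)'], \qquad B(\varphi,\theta_0)=-E[\varphi(C_1)\dot F_{\theta_0}(C_1)'].
\]
I take $B_n=B_n^\partial$ evaluated at $\hat\theta$. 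With this choice, the second requirement of B2(i) is the special case $\theta_n=\hat\theta$ of the first, since $\hat\theta$ is consistent by B2(ii).

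The core step is to show $\sup_{\varphi\in\mathcal B_K}|E_n[\varphi(C_1)\dot F_{\theta_n}(C_1)]-E[\varphi(C_1)\dot F_{\theta_0}(C_1)]|=o_P(1)$ for any consistent $\theta_n$. I split this into two terms:
\[
\sup_{\varphi\in\mathcal B_K}\sup_{\theta\in\Theta_0}\big|E_n[\varphi(C_1)\dot F_\theta(C_1)]-E[\varphi(C_1)\dot F_\theta(C_1)]\big| \quad\text{and}\quad \sup_{\varphi\in\mathcal B_K}\big|E[\varphi(C_1)(\dot F_{\theta_n}-\dot F_{\theta_0})(C_1)]\big|.
\]
The first term is handled by the ULLN of Lemma~\ref{lem:B1}, applied to the RKHS-indexed class $\{\varphi(c)\dot F_\theta(c)\}$ over $\mathcal B_K\times\Theta_0$. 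This is the same mechanism used for \eqref{Kmoment} in the running example and for Lemmas~\ref{lem:E1} and \ref{lem:E3}. By the reproducing property and Cauchy--Schwarz, $|\varphi(c)|\le\|\varphi\|_K K(c,c)^{1/2}\le K(c,c)^{1/2}$, so the envelope is $K(c,c)^{1/2}\sup_{\theta}\|\dot F_\theta(c)\|$. This envelope is square integrable by the stated moment condition, and continuity of $\theta\mapsto\dot F_\theta(C_1)$ supplies the remaining hypothesis of Lemma~\ref{lem:B1}.

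For the second term, I bound it by
\[
E\big[K(C_1,C_1)^{1/2}\,\|\dot F_{\theta_n}(C_1)-\dot F_{\theta_0}(C_1)\|\big],
\]
with $\theta_n$ treated as fixed inside the expectation. This bound tends to zero whenever $\theta_n\to\theta_0$, by a.s.\ continuity and dominated convergence with the integrable dominating function $2K^{1/2}\sup_\theta\|\dot F_\theta\|$. The deterministic convergence then transfers to $\theta_n\to_p\theta_0$ by the usual subsequence argument.

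Combining the two terms gives B2(i), and Lemma~\ref{Lemma:B2} then yields Assumption~A2. I expect the main obstacle to be checking that the conditions of Lemma~\ref{lem:B1} (compactness of $\Theta_0$, separability of $\mathcal H_K$, and the envelope condition) are met exactly in the form stated there. If that lemma instead requires a Lipschitz-type condition in $\theta$, I would fall back on a bracketing argument: cover $\Theta_0$ by finitely many small balls, and control the oscillation within each ball by the continuity and dominated-convergence bound above.
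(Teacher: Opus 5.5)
Your proposal is correct and is essentially the paper's argument: reduce to Lemma~\ref{Lemma:B2} and obtain B2(i) from the ULLN in Lemma~\ref{lem:B1}, applied to $s_\varphi(z,\theta)=\varphi(c)\dot F_\theta(c)$. Its induced kernel $K(c_1,c_2)\dot F_\theta(c_1)'\dot F_\theta(c_2)$ has diagonal matching the stated moment condition; the paper writes this kernel with $s_\theta$ in place of $\dot F_\theta$, which looks like a slip, and your version is the one consistent with $\partial g_\varphi/\partial\theta'=-\varphi(c)\dot F_\theta(c)'$. Your separate dominated-convergence step for the $\theta_n$-versus-$\theta_0$ term just spells out the second part of Lemma~\ref{lem:B1}, which the paper cites directly; the only detail left implicit is that $\theta_0\in int(\Theta_0)$, needed so that $\theta_n\in\Theta_0$ with probability approaching one.
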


\begin{lemma}\label{lem:E7}
The mapping $\Pi^\perp$ in \eqref{eq:CS_pi} is norm reducing (under the null), so
$E[K_g^\perp(Z_1,Z_1)]<\infty$ holds under $E[K(C_1,C_1)\delta_1]<\infty$.
\end{lemma}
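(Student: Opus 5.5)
The approach is to show that $\Pi^\perp$ in \eqref{eq:CS_pi} is, under the null, the orthogonal projection in $L_2(P)$ of $g_\varphi(\cdot,\theta_0)$ onto the orthocomplement of the linear span of the components of $l_{\theta_0}$. Norm reduction relative to $g_\varphi$ is then immediate. After that, $g_\varphi$ itself is bounded in norm by $\varphi(C_1)$, and the stated moment condition can be invoked through Lemma~\ref{Lemma:B3}(ii).

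\textbf{Step 1 (projection structure).} Under $\mathcal H_0$, $E[\delta_1\mid C_1]=F_{\theta_0}(C_1)$ and $\mathrm{Var}(\delta_1\mid C_1)=F_{\theta_0}(C_1)(1-F_{\theta_0}(C_1))$. Conditioning on $C_1$ therefore gives
\[
E[g_\varphi(Z_1,\theta_0)\,l_{\theta_0}(Z_1)']=E\big[\varphi(C_1)s_{\theta_0}(C_1)'F_{\theta_0}(C_1)(1-F_{\theta_0}(C_1))\big]=E[\varphi(C_1)\dot F_{\theta_0}(C_1)'],
\]
using \eqref{eq:CS_stheta}. The coefficient in \eqref{eq:CS_pi} is thus exactly $E[g_\varphi l_{\theta_0}']I_{\theta_0}^{-1}$, which means
\[
\Pi^\perp\varphi=g_\varphi-E[g_\varphi l_{\theta_0}']I_{\theta_0}^{-1}l_{\theta_0}=(I-P_l)g_\varphi .
\]
Here $P_l$ is the $L_2(P)$ projection onto $\mathrm{span}(l_{\theta_0})$. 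Since $E[g_\varphi]=E[l_{\theta_0}]=0$, this is also the projection onto the orthocomplement of $\mathrm{span}(1,l_{\theta_0})$ restricted to mean-zero functions. Consequently $\|\Pi^\perp\varphi\|_2\le\|g_\varphi(\cdot,\theta_0)\|_2$.

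\textbf{Step 2 (bounding $g_\varphi$).} Because $|d-F_{\theta_0}(c)|\le 1$, we have $\|g_\varphi\|_2^2\le E[\varphi(C_1)^2]$. Conditioning once more gives the sharper bound
\[
\|g_\varphi\|_2^2=E[\varphi(C_1)^2F_{\theta_0}(C_1)(1-F_{\theta_0}(C_1))].
\]
The map $\varphi\mapsto\Pi^\perp\varphi$ is therefore a composition of the norm-reducing map $\varphi(C_1)\mapsto g_\varphi$ with a projection, so it is norm reducing in the sense of Lemma~\ref{Lemma:B3}(ii), with $\varphi$ viewed as the function $z\mapsto\varphi(c)$.

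\textbf{Step 3 (moment condition).} Take $A\sim GP(K)$. By Fubini,
\[
E[K_g^\perp(Z_1,Z_1)]=E_A\|\Pi^\perp A\|_2^2\le E_A E[A(C_1)^2F_{\theta_0}(C_1)(1-F_{\theta_0}(C_1))]=E[K(C_1,C_1)F_{\theta_0}(C_1)(1-F_{\theta_0}(C_1))].
\]
Since $F_{\theta_0}(1-F_{\theta_0})\le F_{\theta_0}(C_1)=E[\delta_1\mid C_1]$ under the null, this is at most $E[K(C_1,C_1)\delta_1]<\infty$.

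\textbf{Main obstacle.} The delicate point is that $\varphi$ lives on the support of $C$, not on that of $Z$. The displayed bound therefore has to be obtained by hand, as in Step 3, rather than by citing Lemma~\ref{Lemma:B3}(ii) verbatim. The $\delta_1$ factor must also be tracked through the conditional variance to land exactly on the stated condition $E[K(C_1,C_1)\delta_1]<\infty$ rather than on the weaker-looking $E[K(C_1,C_1)]<\infty$. Continuity of $\Pi^\perp$ then follows, as in the remark after Lemma~\ref{Lemma:B3}.
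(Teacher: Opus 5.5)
Your proof is correct and follows the paper's route. First you obtain norm reduction relative to $g_\varphi(\cdot,\theta_0)$ from the projection structure; the paper takes this by analogy with Lemma~\ref{lem:E2}, whereas you verify it explicitly by checking that the coefficient in \eqref{eq:CS_pi} equals $E[g_\varphi l_{\theta_0}']$. Second, conditioning on $C_1$ gives $E[\varphi(C_1)^2(\delta_1-F_{\theta_0}(C_1))^2]=E[\varphi(C_1)^2\delta_1]-E[\varphi(C_1)^2F_{\theta_0}(C_1)^2]\le E[\varphi(C_1)^2\delta_1]$, which is the same identity as your $E[\varphi(C_1)^2F_{\theta_0}(C_1)(1-F_{\theta_0}(C_1))]$.

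Your integrated Gaussian-process bound for $E[K_g^\perp(Z_1,Z_1)]$ is, if anything, more careful than the pointwise comparison $K_g^\perp(z,z)\le CK_g(z,z)$ invoked in the proof of Lemma~\ref{Lemma:B3}(ii). One wording slip: $E[K(C_1,C_1)\delta_1]<\infty$ is the weaker hypothesis, not $E[K(C_1,C_1)]<\infty$.
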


\subsection{Computation notes}

For the representation in Assumption~C1 (main manuscript), note that for each $(d,c)$,
\[
g_\varphi(d,c;\theta)=\langle \varphi,\ a_\theta(d,c)\rangle_K,
\qquad
a_\theta(d,c)=\big(d-F_\theta(c)\big)K_c,
\]
since $\langle\varphi,K_c\rangle_K=\varphi(c)$. The quadratic-form representation in
Lemma~\ref{Lemma:QuadraticForm} (main manuscript) then applies directly once the
orthogonalization term based on $l_{\hat\theta}$ is added.

\subsection{Relationship to existing tests}

For current status data (interval censoring case 1), our test is related to the Cram\'{e}r-von Mises test in \cite{KoulYi2006}. To be specific, in the simple null case the test statistic of \cite{KoulYi2006} equals our $n\hat Q_K$ when constructed from the particular class of functions $\varphi_x(c)=I(c\leq x)\{F_0(c)(1-F_0(c))\}^{-1/2}$. The approach for the asymptotic analysis in \cite{KoulYi2006} is, however, different, relying on weak convergence on compacta using \cite{Stute1998}. Again, our RKHS framework allows for convergence results on arbitrary supports. For interval censoring case 2,~\cite{Ren2003,Omidi2021} proposed NPMLE-based tests with slow convergence. We are not aware of other formal goodness-of-fit tests for interval-censored data.

%%%%%%%%%%%%%%%%%%%%%%%%%%%%%%%%%%%%%%%%%%%%%%%%%%%%%%%%%%%%%%%%%%%%%%%%%%%%%%
\section{Additional numerical results: random sampling (complete data)}
\label{SM:sim_random}

This section reports Monte Carlo evidence for the running example of random sampling
from $F$ with complete observations. These results complement the numerical study for
random double truncation in Section~\ref{sec:num} of the main manuscript.

\subsection{Simulation study: random sampling}
\label{sub:sim_random}

We consider random sampling with complete observations. The composite null hypothesis is
that $X$ is exponentially distributed,
\[
F_\theta(x)=1-\exp(-\theta x), \qquad x>0,
\]
where the rate parameter $\theta$ is unknown. We simulate $X$ from a Gamma distribution
with shape parameter $\theta$ and scale parameter $1$, with
$\theta \in \{0.5,0.8,1,1.2,1.5\}$. The null corresponds to $\theta=1$, while the other
values yield non-exponential alternatives. We use 1000 Monte Carlo trials under
$\mathcal{H}_0$ and 500 trials under $\mathcal{H}_1$, and sample sizes
$n\in\{50,100,200\}$.

The parameter $\theta$ is estimated by maximum likelihood under the null, i.e.,
$\hat\theta$ is the inverse of the sample mean. We report rejection proportions of the
statistics $n\hat Q_K^{(j)}$, $j=1,2$, associated with the score in
Remark~\ref{remark:alternative} and the score in \eqref{example1}, respectively. We use
the RKHS induced by the indicator class $\varphi_x(\tilde x)=\1\{\tilde x\le x\}$, and
compute critical values via a multiplier bootstrap with $B=499$ resamples.

For implementation we follow the generic computation in Section~\ref{Computation} of the
main manuscript: we form the induced-kernel Gram matrix $\widehat{\mathbb K}_g^\perp$ as follows. The $n \times n$ matrix $\mathbb{\hat{K}}_g^{\perp}$ with $ij$-th element $\hat{K}_g^{\perp}(Z_i,Z_j)$ is
\[
\mathbb{\hat{K}}_g^{\perp} = \hat{\Pi}_{l_{\hat{\theta}}}^{\perp} \mathbb{\hat{K}}_g \hat{\Pi}_{l_{\hat{\theta}}}^{\perp},
\]
where $\hat{\Pi}_{l_{\hat{\theta}}}^{\perp} = I_n - \mathbf{l}_{\hat{\theta}} (\mathbf{l}_{\hat{\theta}}' \mathbf{l}_{\hat{\theta}})^{-1} \mathbf{l}_{\hat{\theta}}$, $I_n$ is the $n\times n$ identity matrix, $\mathbf{l}_{\hat{\theta}}$ is the $n \times p$ matrix of scores with $i$-th row $l_{\hat{\theta}}'(Z_i)$, and $\mathbb{\hat{K}}_g$ is a $n\times n$ matrix. For $n\hat Q_K^{(2)}$, the $ij$-th element of $\mathbb{\hat{K}}_g$ is given by
\begin{align*}
\hat{K}_g(z_i,z_j) &= K(z_i,z_j) - \int_{\mathcal{S}_{Z_1}} K(z_i,z) f_{\hat{\theta}}(z) dz - \int_{\mathcal{S}_{Z_1}} K(z_j,z) f_{\hat{\theta}}(z) dz \\
&\quad + \int_{\mathcal{S}_{Z_1}} \int_{\mathcal{S}_{Z_1}} K(z,\tilde{z}) f_{\hat{\theta}}(z) f_{\hat{\theta}}(\tilde{z}) dz d\tilde{z}
\end{align*} so $n\hat Q_K^{(2)}=n^{-1}\sum_{i=1}^n\sum_{j=1}^n \hat K_g^\perp(X_i,X_j)$, with the
multiplier bootstrap based on the same $\hat K_g^\perp$. An analogous expression holds for $n\hat Q_K^{(1)}$, which is based on a different $\hat K_g$.

Table~\ref{tab:randomsampling} reports rejection proportions at significance level
$\alpha=0.05$. The nominal level is well approximated by the multiplier bootstrap. Under
alternatives, power increases with sample size and with the magnitude of the departure
from the null (as measured by $|\theta-1|$). The table also shows that no test uniformly
dominates the other: for $\theta<1$, $n\hat Q_K^{(2)}$ tends to have larger power, whereas
for $\theta>1$, $n\hat Q_K^{(1)}$ is more powerful.

\begin{table}[tbp]
\begin{center}
{\small \addtolength{\tabcolsep}{-2pt}
\begin{tabular}{cccccccc}
&  & $n\hat Q_K^{(1)}$ &  &  &  & $n\hat Q_K^{(2)}$ &  \\
$\theta$ & $n=50$ & $n=100$ & $n=200$ &  & $n=50$ & $n=100$ & $n=200$ \\
\hline
&  &  &  &  &  &  &  \\
$0.5$ & .812 & .982 & 1.00 &  & .836 & .986 & 1.00 \\
$0.8$ & .126 & .232 & .464 &  & .130 & .254 & .490 \\
$1$ & .054 & .058 & .063 &  & .061 & .057 & .060 \\
$1.2$ & .188 & .256 & .444 &  & .168 & .242 & .420 \\
$1.5$ & .552 & .842 & .984 &  & .528 & .818 & .982 \\
&  &  &  &  &  &  &  \\ \hline
\end{tabular}
}
\end{center}
\caption{Rejection proportions of the test statistics $n\hat Q_K^{(1)}$ and
$n\hat Q_K^{(2)}$ along the Monte Carlo trials at level $\protect\alpha=0.05$.
The null hypothesis corresponds to $\protect\theta=1$. Random sampling.}
\label{tab:randomsampling}
\end{table}

%%%%%%%%%%%%%%%%%%%%%%%%%%%%%%%%%%%%%%%%%%%%%%%%%%%%%%%%%%%%%%%%%%%%%%%%%%%%%%
\section{A Uniform Law of Large Numbers}
\label{ULLN}

Let \(\mathcal H_K\) be the RKHS with kernel \(K\), \(\langle\cdot,\cdot\rangle_K\) its inner product, and \(\mathcal B_K=\{\varphi\in\mathcal H_K:\|\varphi\|_K\le 1\}\) its unit ball. Let \(\Theta\subset\mathbb R^p\), and let $\Theta_0$ be a compact subset of $\Theta$. For a measurable map
\[
s:\mathcal S_{Z_1}\times \Theta_0 \times \mathcal H_K \to \mathbb R,\qquad
(z,\theta,\varphi)\mapsto s_\varphi(z,\theta),
\]
we use the following notation: whenever \(\varphi\mapsto s_\varphi(z,\theta)\) is a bounded linear functional on \(\mathcal H_K\), we denote by \(r_s(z,\theta,\cdot)\in\mathcal H_K\) its Riesz representer, so that $s_\varphi(z,\theta)=\langle \varphi,\,r_s(z,\theta,\cdot)\rangle_K$ for all $\varphi\in\mathcal H_K$. Define the induced kernel
\[
K_s(z_1,z_2,\theta):=\langle r_s(z_1,\theta,\cdot),\,r_s(z_2,\theta,\cdot)\rangle_K,
\qquad
K_s(z,\theta):=K_s(z,z,\theta)=\|r_s(z,\theta,\cdot)\|_K^2.
\]

\noindent Define the empirical mean process \(S_n(\varphi,\theta):=E_n[s_\varphi(Z_1,\theta)]\), where \(E_n\) denotes the sample average. We work under the following representer and regularity conditions.

\bigskip

\noindent \textbf{Assumption B1.} For each $z_1,z_2\in \mathcal{S}_{Z_{1}}$ and $\theta \in \Theta _{0}$, with $\Theta_0$ be a compact subset of $\Theta \subset \mathbb R^p$: (i) $\varphi \mapsto s_{\varphi}(z_1,\theta)$ belongs to $\mathcal{H}_{K}^{\ast }$; (ii) $\theta \mapsto K_{s}(z_1,z_2,\theta )$ is continuous in $\Theta _{0}$ a.s., with
    \begin{equation*}
    E\Big[ \sup_{\theta \in \Theta _{0}} |K_{s}(Z_{1},\theta)| \Big] < \infty.
    \end{equation*}

\begin{lemma}[Uniform Law of Large Numbers]
\label{lem:B1}
Under Assumption B1,
\begin{align*}
\sup_{\varphi \in \mathcal{B}_{K},\theta \in \Theta _{0}} \big| S_{n}(\varphi ,\theta )-E[s_{\varphi }(Z_{1},\theta )] \big| &= o_{P}(1), \\
\sup_{\varphi \in \mathcal{B}_{K}} \big| S_{n}(\varphi ,\theta _{n})-E[s_{\varphi }(Z_{1},\theta _{0})] \big| &= o_{P}(1),
\end{align*}
for any consistent estimator $\theta_n$ of $\theta_0$.
\end{lemma}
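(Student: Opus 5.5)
The plan is to use the Riesz representation to turn the supremum over $\mathcal B_K$ into a Hilbert-space norm, and then prove a ULLN for Hilbert-valued random elements. By Assumption B1(i), $s_\varphi(z,\theta)=\langle\varphi,r_s(z,\theta,\cdot)\rangle_K$. Hence
\[
S_n(\varphi,\theta)=\langle \varphi, E_n[r_s(Z_1,\theta,\cdot)]\rangle_K .
\]
Condition B1(ii) gives $E\|r_s(Z_1,\theta,\cdot)\|_K\le (E[K_s(Z_1,\theta)])^{1/2}<\infty$. Therefore $r_s(Z_1,\theta,\cdot)$ is Bochner integrable in the separable space $\mathcal H_K$, and $E[s_\varphi(Z_1,\theta)]=\langle\varphi,\mu(\theta)\rangle_K$ with $\mu(\theta):=E[r_s(Z_1,\theta,\cdot)]$. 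By Cauchy--Schwarz, the first claim reduces to
\[
\sup_{\theta\in\Theta_0}\big\|E_n[r_s(Z_1,\theta,\cdot)]-\mu(\theta)\big\|_K=o_P(1).
\]

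For fixed $\theta$, this is the strong law of large numbers for iid Bochner-integrable random elements of a separable Banach space (Mourier's SLLN). To make it uniform over $\Theta_0$, I would run the classical Tauchen/Newey--McFadden compactness argument with the Hilbert norm replacing the absolute value.

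The key continuity fact is that B1(ii) makes $\theta\mapsto r_s(z,\theta,\cdot)$ continuous in $\mathcal H_K$ for a.e. $z$. This follows from
\[
\|r_s(z,\theta)-r_s(z,\theta')\|_K^2=K_s(z,z,\theta)-2K_s(z,z,\theta,\theta')+K_s(z,z,\theta').
\]
Here lies the main obstacle: B1(ii) only asserts continuity of $K_s(z_1,z_2,\theta)$ with a common $\theta$, so the cross term $\langle r_s(z,\theta),r_s(z,\theta')\rangle_K$ is not directly controlled. I would handle it by reading B1(ii) as continuity of the representer map, which is what the induced-kernel notation intends. Failing that, one can pass through pointwise convergence $s_\varphi(z,\theta')\to s_\varphi(z,\theta)$ for $\varphi$ in a countable dense set, combined with $\|r_s(z,\theta')\|_K\to\|r_s(z,\theta)\|_K$. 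Weak convergence plus convergence of norms gives norm convergence in a Hilbert space. Pointwise convergence for each $\varphi$ follows if $s_\varphi$ is continuous in $\theta$, and I would add this mild requirement if necessary.

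With continuity in hand, define the modulus
\[
\Delta(z,\theta,\eta):=\sup_{|\theta'-\theta|<\eta}\|r_s(z,\theta')-r_s(z,\theta)\|_K ,
\]
which is dominated by $2\sup_{\Theta_0}K_s(z,\cdot)^{1/2}$, an integrable function. Dominated convergence then gives $E\Delta(Z_1,\theta,\eta)\to0$ as $\eta\to0$. Given $\epsilon>0$, cover the compact $\Theta_0$ by finitely many balls $B(\theta_j,\eta_j)$ with $E\Delta(Z_1,\theta_j,\eta_j)<\epsilon$. For $\theta\in B(\theta_j,\eta_j)$, the triangle inequality yields
\[
\|E_nr_s(\theta)-\mu(\theta)\|_K\le \|E_nr_s(\theta_j)-\mu(\theta_j)\|_K+E_n\Delta(Z_1,\theta_j,\eta_j)+E\Delta(Z_1,\theta_j,\eta_j).
\]
The first term is $o_P(1)$ by the Hilbert SLLN. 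The second converges to a limit below $\epsilon$ by the scalar LLN, and the third is below $\epsilon$ by construction. Taking the maximum over the finitely many $j$ proves the first display. Measurability of the suprema can be handled via separability, or by using outer probability.

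For the second display, write
\[
S_n(\varphi,\theta_n)-E s_\varphi(Z_1,\theta_0)=[S_n(\varphi,\theta_n)-E s_\varphi(Z_1,\theta_n)]+\langle\varphi,\mu(\theta_n)-\mu(\theta_0)\rangle_K .
\]
With probability tending to one $\theta_n\in\Theta_0$, since $\theta_0$ is interior, so the first term is bounded by the uniform result. For the second term, $\mu$ is continuous in $\mathcal H_K$: $\|\mu(\theta)-\mu(\theta_0)\|_K\le E\|r_s(Z_1,\theta)-r_s(Z_1,\theta_0)\|_K\to0$ by dominated convergence together with the continuity established above. Combining this with $\theta_n\to_p\theta_0$ completes the proof.
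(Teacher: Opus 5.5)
Your argument is correct under the extra continuity hypothesis you flag, but it takes a different route from the paper, and comparing the two shows exactly where that hypothesis is needed and where it is not.

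The paper never uses continuity of $\theta\mapsto r_s(z,\theta,\cdot)$ in $\mathcal H_K$. Instead it expands the squared norm,
\[
\big\|E_n[r_s(Z_1,\theta,\cdot)]-\mu(\theta)\big\|_K^2=\frac{1}{n^2}\sum_{i,j}K_s(Z_i,Z_j,\theta)-\frac{2}{n}\sum_{i}\int K_s(Z_i,z,\theta)\,dP(z)+\iint K_s(z,\tilde z,\theta)\,dP(z)\,dP(\tilde z).
\]
The paper's display omits the middle term, but that term is handled in the same way. It then applies the Nolan--Pollard uniform law for U/V-processes to $\{K_s(\cdot,\cdot,\theta):\theta\in\Theta_0\}$, with envelope $\sup_\theta K_s(z_1,\theta)^{1/2}\sup_\theta K_s(z_2,\theta)^{1/2}$. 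Every term involves $K_s$ at a single $\theta$. So the cross terms $\langle r_s(z,\theta,\cdot),r_s(z,\theta',\cdot)\rangle_K$ that you identified as the obstacle never arise, and B1 as stated suffices for the first display.

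Your route (Mourier's SLLN plus a finite-cover argument) avoids U-process theory. However, it genuinely needs norm continuity of the representer, which B1 does not imply. Replacing $r_s$ by $\epsilon(\theta)r_s$ with $\epsilon(\theta)\in\{-1,1\}$ discontinuous leaves $K_s$ unchanged.

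For the second display the comparison reverses. The paper's appeal to the continuous mapping theorem tacitly requires your $\mu(\theta)$ to be continuous in $\mathcal H_K$. B1 only makes $\|\mu(\theta)\|_K^2=E[K_s(Z_1,Z_2,\theta)]$ continuous.

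Concretely, take scalar $\theta$ and $s_\varphi(z,\theta)=\epsilon(\theta)\varphi(z)$, with $\epsilon=1$ for $\theta\le\theta_0$ and $\epsilon=-1$ otherwise.
\begin{itemize}
\item Then $K_s=K$, so B1 holds whenever $E[K(Z_1,Z_1)]<\infty$.
\item Yet for the consistent sequence $\theta_n=\theta_0+1/n$, the supremum in the second display equals $\|E_n[K_{Z_1}]+E[K_{Z_1}]\|_K$.
\item This tends to $2\|E[K_{Z_1}]\|_K$, which is nonzero whenever $E[K(Z_1,Z_2)]>0$.
\end{itemize}
So an assumption like yours, or at least continuity of $\mu$ in $\mathcal H_K$, is genuinely needed for the second display. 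Your treatment of that step is the more careful one.

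In the paper's applications (Lemmas~\ref{lem:E1}, \ref{lem:E3} and \ref{lem:E6}), the representer has the form $K_y\,h_\theta(y)\delta$, or an analogue, with $\theta\mapsto h_\theta(y)$ continuous. Your strengthened hypothesis therefore holds there at no cost.
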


\noindent \textsc{Proof of Lemma~\ref{lem:B1}:}
By Assumption B1(i) and the Riesz Representation theorem, there exists $r_{s}(z,\theta ,\cdot )\in \mathcal{H}_{K}$ such that
\[
s_{\varphi }(z,\theta )=\left\langle \varphi ,r_{s}(z,\theta ,\cdot )\right\rangle _{K}, \quad r_{s}(z,\theta ,z_{1})=s_{K_{z_{1}}}(z,\theta ).
\]
Then, using standard Hilbert space arguments,
\begin{align*}
&\sup_{\varphi \in \mathcal{B}_{K},\theta \in \Theta _{0}}\left\vert S_{n}(\varphi ,\theta )-E[s_{\varphi }(Z_{1},\theta )]\right\vert ^{2} \\
&\leq \sup_{\theta \in \Theta _{0}}\left\Vert E_{n}[r_{s}(Z_{1},\theta ,\cdot )]-E[r_{s}(Z_{1},\theta ,\cdot )]\right\Vert _{K}^{2} \\
&\leq \sup_{\theta \in \Theta _{0}} \big| \frac{1}{n^{2}}\sum_{i=1}^{n}\sum_{j=1}^{n}K_{s}(Z_{i},Z_{j},\theta)-E[K_{s}(Z_{i},Z_{j},\theta)]\big|  \\
&= o_{P}(1),
\end{align*}
where the last equality follows from \cite{NolanPollard1987}, the continuity of Assumption B1, together with
\[
E\Big[ \sup_{\theta \in \Theta _{0}} | K_s(Z_{i},Z_{j},\theta) | \Big] \le E[ \sup_{\theta \in \Theta_0} |K_s(Z_{i},\theta)| ]  < \infty.
\]
The second part follows from the ordinary ULLN and the continuous mapping theorem. $\square$

%%%%%%%%%%%%%%%%%%%%%%%%%%%%%%%%%%%%%%%%%%%%%%%%%%%%%%%%%%%%%%%%%%%%%%%%%%%%%%
\section{Proofs of general results}
\label{GeneralProofs}

\noindent \textsc{Proof of Theorem~\ref{Thm:GeneralNull}:}
By Assumptions~A1--A3 and Slutsky's theorem, $\sqrt{n}\,\hat{\Delta}_n^\perp \Rightarrow S_\infty \quad \text{in } \mathcal{H}_K^\ast.$ Then, the result follows from the continuous mapping theorem and $n\hat{Q}_K = \|\sqrt{n}\,\hat{\Delta}_n^\perp\|^2$. $\square$\medskip

\noindent \textsc{Proof of Lemma~\ref{Lemma:B2}:}
Note
\begin{eqnarray*}
E_{n}[\hat{\Pi}^{\perp }\varphi (Z_{1})-\Pi ^{\perp }\varphi (Z_{1})]
&=&E_{n}[g_{\varphi }(Z_{1},\hat{\theta})-g_{\varphi }(Z_{1},\theta _{0})] \\
&&+B(\varphi ,\theta _{0})\left( \hat{I}^{-1}E_{n}[\hat{l}(Z_{1})]-I_{\theta
_{0}}^{-1}E_{n}[l_{\theta _{0}}(Z_{1})]\right) \\
&&+\left( B_{n}(\varphi ,\hat{\theta})-B(\varphi ,\theta _{0})\right) \hat{I}%
^{-1}E_{n}[\hat{l}(Z_{1})] \\
&=&:I_{1n}(\varphi ,\hat{\theta})+I_{2n}(\varphi ,\hat{\theta}%
)+I_{3n}(\varphi ,\hat{\theta}).
\end{eqnarray*}%
By a mean value theorem
\begin{equation*}
I_{1n}(\varphi ,\hat{\theta})=B^{\partial}_{n}(\varphi ,\bar{\theta})(\hat{\theta}%
-\theta _{0})
\end{equation*}%
where $\bar{\theta}$ is an intermediate value such that $\left\vert \bar{%
\theta}-\theta _{0}\right\vert \leq \left\vert \hat{\theta}-\theta
_{0}\right\vert $ a.s. For simplicity of exposition, we will use the same
notation $\bar{\theta}$ for any intermediate value (though, of course, it
will change from equation to equation). Likewise, using B2(ii)$,$ uniformly
in $\varphi \in \mathcal{B}_{K},$
\begin{equation*}
I_{2n}(\varphi ,\hat{\theta})=-B(\varphi ,\theta _{0})(\hat{\theta}-\theta
_{0})+o_{P}(n^{-1/2}).
\end{equation*}%
Also, from Assumption B2, $\hat{I}^{-1}E_{n}[\hat{l}(Z_{1})]=O_{P}(n^{-1/2})$
and $I_{3n}(\varphi ,\hat{\theta})=o_{P}(n^{-1/2}),$ uniformly in $\varphi
\in \mathcal{B}_{K}$. Therefore, conclude that, uniformly in $\varphi \in
\mathcal{B}_{K},$%
\begin{eqnarray*}
E_{n}[\hat{\Pi}^{\perp }\varphi (Z_{1})-\Pi ^{\perp }\varphi (Z_{1})]
&=&\left( B^{\partial}_{n}(\varphi ,\bar{\theta})-B(\varphi ,\theta _{0})\right) (\hat{%
\theta}-\theta _{0})+o_{P}(n^{-1/2}) \\
&=&o_{P}(n^{-1/2}).
\end{eqnarray*}%
$\square$ \medskip

\noindent \textsc{Proof of Lemma~\ref{Lemma:B3}:}
It suffices to apply Theorem 1 in \cite{FernandezRivera2022}, or a slightly simplified
version of it in Theorem 4.1 in \cite{Escanciano2024}, with $%
S_{n}(a)=E_{n}[\Pi ^{\perp }a(Z_{1})],$ $a\in \mathcal{H}_{K}.$
The proof of (ii) follows from the norm-reducing property, since $K_{g}^{\perp}(z,z)\leq C K_{g}(z,z)$.
The continuity follows from Proposition 2.5 in \cite{FerreiraMenegatto2013}. $\square $\medskip

\noindent \textsc{Proof of Lemma~\ref{Lemma:QuadraticForm}:}
By assumption there exists a measurable $\mathcal{H}_K$ valued map $\hat h$ such that $(\hat \Pi^\perp \varphi)(z)=\langle \varphi, \hat h(z)\rangle_{K}$ for all $\varphi \in \mathcal{H}_K$ and $z$. Hence
\[
\hat{\Delta}_n^\perp(\varphi)
= E_n\!\big[(\hat\Pi^\perp\varphi)(Z_1)\big]
= E_n\!\big[\langle \varphi,\hat h(Z_1)\rangle_K\big]
= \Big\langle \varphi,\ E_n[\hat h(Z_1)]\Big\rangle_K .
\]

\noindent Define $\hat v_n := E_n[\hat h(Z_1)]$. By Cauchy-Schwarz,
\[
\hat Q_K
=\sup_{\|\varphi\|_K\le 1}\langle \varphi,\hat v_n\rangle_K^2
=\|\hat v_n\|_K^2 .
\]

\noindent Expanding the squared norm,
\[
\|\hat v_n\|_K^2
=\left\langle \frac{1}{n}\sum_{i=1}^n \hat h(Z_i),\ \frac{1}{n}\sum_{j=1}^n \hat h(Z_j)\right\rangle_K
=\frac{1}{n^2}\sum_{i=1}^n\sum_{j=1}^n \langle \hat h(Z_i),\hat h(Z_j)\rangle_K .
\]
With $\hat K_g^\perp(z,z'):=\langle \hat h(z),\hat h(z')\rangle_K$, this yields
\(
\hat Q_K=\frac{1}{n^2}\sum_{i,j}\hat K_g^\perp(Z_i,Z_j).
\)
The proof for the bootstrap version is analogue.$\square$\medskip

%\noindent \textsc{Proof of Corollary~\ref{cor:P1}:}
%Apply the uniform expansion in (\ref{P1}), Lemma~\ref{Lemma:B2} with $s_\varphi(z_1,\theta) = \Pi^\perp \varphi(Z_1)$ (independent of $\theta$), and the continuous mapping theorem. $\square$\medskip

\noindent \textsc{Proof of Theorem~\ref{thm:P1}:}
Define $h_{\theta}(z)=a_{\theta}(z)+b_{\theta}I_{\theta}^{-1} l_{\theta}(z)$. Then, note that $E\left[h_{\theta_0}(Z_1)\right]=E\left[a_{\theta_0}(Z_1)\right]$ by $E[l_{\theta_0}(Z_1)] = 0$. Conclude by standard RKHS theory, since $E[K_{g}^{\perp }(Z_{1},Z_{2})]= \|E\left[h_{\theta_0}(Z_1)\right]\|^2_K$ and $\|E\left[a_{\theta_0}(Z_1)\right]\|^2_K=d_{K}(F,F_{0})>0$.
$\square$

\section{Proofs of the examples}
\label{ExamplesProofs}

\subsection{Proofs for left-truncation with right-censoring} \mbox{}\par
\medskip

\noindent \textsc{Proof of Proposition \ref{prop:S1}}: Part (i) immediately follows
from the independence between $(U,C)$ and $X$ and condition $P(U\leq C)=1$. Certainly, with $\alpha =P(U\leq X)$, it holds

\begin{equation*}
E[\varphi(Y_1)\delta_1]=\alpha^{-1}E[\varphi(X)I(U\leq X\leq
C)]=\alpha^{-1}E[\varphi(X)M(X)].
\end{equation*}

\noindent Now, by taking a target $X$ independent of $(U_1,Y_1)$, under $%
F=F_{\theta_0}$ one may write

\begin{equation*}
E_{\theta_0}[\psi_{\theta_0}^\varphi(Z_1)]=E\left[E \left(\frac{%
\varphi(X)I(U_1\leq X\leq Y_1)}{1-F(X-)}| U_1,Y_1 \right) \right],
\end{equation*}

\begin{equation*}
=E\left[E \left(\frac{\varphi(X)I(U_1\leq X\leq Y_1)}{1-F(X-)}| X \right) %
\right]=\alpha^{-1}E[\varphi(X)M(X)],
\end{equation*}
\noindent where the last equality follows by noting $P(U_1\leq x\leq
Y_1)=\alpha^{-1}M(x)(1-F(x-))$, $x \in \mathcal{S}_X$. This completes the
proof of (i). Conversely, for (ii) one starts from equality $E[g_{\varphi }(Z_{1},\theta _{0})]=0$ for each $\varphi =\varphi _{x}$ in $\mathcal{C}_{\mathrm{ind}}$. Now,
since $E[\varphi (Y_{1})\delta _{1}]=\alpha ^{-1}E[\varphi (X)M(X)]$ and $%
E[\psi _{\theta _{0}}^{\varphi }(Z_{1})]=\alpha ^{-1}E_{\theta _{0}}[\varphi
(X)M(X)(1-F(X-))/(1-F_{\theta _{0}}(X-))]$, and because $M(x)>0$ for $x\in
\mathcal{S}_{X}$, by using the definition of $\mathcal{C}_{\mathrm{ind}}$ one may conclude $(1-F(x-))^{-1}dF(x)=(1-F_{\theta
_{0}}(x-))^{-1}dF_{\theta _{0}}(x)$ along the support of $X$. Since the
hazard function characterizes the cdf, the result follows. $\square$\medskip

\noindent \textsc{Proof of Lemma \ref{lem:E1}}: The result follows from Lemma \ref{lem:B1} after noting that, for the left-truncated and right-censored scenario, the function $K_s(z_1,z_2,\theta)$ with $z_i=(y_i,u_i,d_i)$, $i=1,2$, reduces to $K_s(z_1,z_2,\theta)=h_\theta(y_1)'d_1K(y_1,y_2)h_\theta(y_2)d_2.$ $\square$\medskip

\noindent \textsc{Proof of Lemma \ref{lem:E2}}: According to (\ref{eq:LTRC_pi}), we have $\Pi^\perp(z)=g_\varphi(z,\theta)+B(\varphi,\theta)I_\theta^{-1}l_\theta(z)$. Now, since the score $g_\varphi$ is zero-mean and $B(\varphi,\theta)I_\theta^{-1}B(\varphi,\theta)'\geq 0$, it holds $E[(\Pi^\perp\varphi(Z_1))^2]\leq E[(g_\varphi(Z_1,\theta))^2]$. Then, it suffices to prove $E[(\varphi
(Y_{1})\delta _{1}-\psi _{\theta }^{\varphi }(Z_{1}))^{2}]\leq E[\varphi
(Y_{1})^{2}\delta _{1}]$. Now, with $a=\varphi (Y_{1})\delta _{1}$ and $b=\psi^\varphi_\theta(Z_1)$, it holds

\begin{equation*}
E[(\varphi(Y_1)\delta_1-\psi^\varphi_\theta(Z_1))^2]=E[\varphi(Y_1)^2\delta_1]
+ E[b^2]-2E[ab].
\end{equation*}

\noindent Note that

\begin{equation*}
E[b^2]=\int \int \frac{\varphi(x)\varphi(\tilde x)}{(1-F_\theta(x-))(1-F_%
\theta(\tilde x-))}E[I(U_1\leq x\wedge \tilde x,x \vee \tilde x\leq
Y_1)]dF_\theta(x)dF_\theta(\tilde x)
\end{equation*}

\begin{equation*}
\leq2\int \int \frac{\varphi(x)\varphi(\tilde x)E[I(U_1\leq x,\tilde x\leq
Y_1)]}{(1-F_\theta(x-))(1-F_\theta(\tilde x-))}I(x\leq \tilde
x)dF_\theta(x)dF_\theta(\tilde x);
\end{equation*}

\noindent the inequality is indeed an equality when $F_\theta$ is
continuous. Also,

\begin{equation*}
-2E[ab]=-2\int \frac{\varphi(x)}{(1-F_\theta(x-))}E[\varphi(Y_1)\delta_1
I(U_1\leq x\leq Y_1)]dF_\theta(x).
\end{equation*}

\noindent Since

%\begin{equation*}

%\end{equation*}

$$E[\varphi(Y_1)\delta_1 I(U_1\leq x\leq Y_1)]=\alpha^{-1}E[\varphi(X)I(X\leq
C)I(U\leq x\leq X)]$$

$$=\alpha^{-1}\int \varphi(\tilde x)E[I(U\leq x,\tilde x\leq C)]I(x\leq \tilde
x)dF_\theta(\tilde x)$$

\noindent and, for $x \leq \tilde x$,

\begin{equation*}
E[I(U_1\leq x,\tilde x\leq Y_1)]=\alpha^{-1}E[I(U\leq x,\tilde x\leq
C)](1-F_\theta(x-)),
\end{equation*}

\noindent we get $E[b^{2}]-2E[ab]\leq 0$, with an equality for continuous $%
F_{\theta }$. $\square$

%\noindent \textsc{Proof of Lemma \ref{lem:E2}}: For a given $G$, let $\alpha _{\tau }$

\subsection{Proofs for doubly truncated data}\mbox{}\par
\medskip

\noindent \textsc{Proof of Proposition \ref{prop:S2}}: Part (i) immediately follows
from the independence between $(U,V)$ and $X$. Certainly, it holds

\begin{equation*}
E[\varphi(X_1)|U_1,V_1]=\displaystyle \int \varphi(\tilde x)\frac{I(U_1\leq
\tilde x\leq V_1)}{F(V_1)-F(U_1-)}dF(\tilde x).
\end{equation*}

\noindent Now, since $\psi_{\theta_0}^\varphi(Z_1)$ is free of $X_1$, one
gets

\begin{equation*}
E_{\theta_0}[\psi_{\theta_0}^\varphi(Z_1)|U_1,V_1]=\psi_{\theta_0}^%
\varphi(Z_1)=\displaystyle \int \varphi(\tilde x)\frac{I(U_1\leq \tilde
x\leq V_1)}{F_{0}(V_1)-F_{0}(U_1-)}dF_{0}(\tilde x).
\end{equation*}

\noindent Thus, under $F=F_{\theta_0}$ it holds $E[\varphi(X_1)|U_1,V_1]=
E_{\theta_0}[\psi_{\theta_0}^\varphi(Z_1)|U_1,V_1]$ and (i) follows. Conversely, assume $E[g_{\varphi }(Z_{1},\theta _{0})]=0$ for each $\varphi=\varphi_x$ in $\mathcal{C}_{\mathrm{ind}}$. Since the
joint density of $(U_1,V_1)$ is given by $\alpha^{-1}(F(v)-F(u-))dG(u,v)$
where $\alpha=P(U\leq X \leq V)$, one gets

%\begin{equation*}
%E[\varphi(X_1)]=\alpha^{-1}\int \varphi(\tilde x)M(\tilde x)dF(\tilde x),
%\end{equation*}
%
%\noindent and, similarly,
%
%\begin{equation*}
%E[\psi _{\theta _{0}}^{\varphi }(Z_{1})]=\alpha ^{-1}\int \varphi (\tilde{x}%
%)M_{\theta _{0}}(\tilde{x})dF_{\theta _{0}}(\tilde{x}),
%\end{equation*}

\[
E[\varphi(X_1)]=\alpha^{-1}\!\int \varphi(\tilde x)M(\tilde x)\,dF(\tilde x),\qquad
E[\psi_{\theta_0}^{\varphi}(Z_1)]=\alpha^{-1}\!\int \varphi(\tilde{x})M_{\theta_0}(\tilde{x})\,dF_{\theta_0}(\tilde{x}),
\]

\noindent where

\begin{equation*}
M_{\theta}(x)=E_\theta\left[\frac{F(V)-F(U-)}{F_\theta(V)-F_\theta(U-)}I(U\leq x\leq
V)\right].
\end{equation*}

\noindent Since the last two equalities hold
along $\mathcal{C}_{\mathrm{ind}}$, (ii) follows from (\ref{key:DT}).$\square$\medskip

\noindent \textsc{Proof of Lemma \ref{lem:E3}}: The result follows from Lemma \ref{lem:B1} after noting that, for the doubly truncated setting, the function $K_s(z_1,z_2,\theta)$ with $z_i=(x_i,u_i,v_i)$, $i=1,2$, is given by $K_s(z_1,z_2,\theta)=K(x_1,x_2)l_\theta(z_1)'l_\theta(y_2).$ $\square$

\bigskip

\noindent \textsc{Proof of Lemma \ref{lem:E4}}: The proof is analogous to that of Lemma \ref{lem:E2}. Due to (\ref{eq:DT_pi}), it suffices to prove $E[(\varphi
(X_{1})-\psi _{\theta }^{\varphi }(Z_{1}))^{2}]\leq E[\varphi (X_{1})^{2}]$. Since $E[\varphi (X_{1})|U_{1},V_{1}]=E[\psi _{\theta }^{\varphi
}(Z_{1})|U_{1},V_{1}]$ and since $\psi _{\theta }^{\varphi }(Z_{1})$ is free
of $X_{1}$, it holds

\begin{equation*}
E[(\varphi(X_1)-\psi_\theta^\varphi
(Z_1))^2|U_1,V_1]=V[\varphi(X_1)|U_1,V_1]\leq E[\varphi(X_1)^2|U_1,V_1],
\end{equation*}

\noindent where $V[\cdot ]$ stands for the variance operator. Take the
expectation with respect to $(U_{1},V_{1})$ at both sides of the inequality
to conclude. $\square$

\subsection{Proofs for current status data}\mbox{}\par
\medskip

\noindent \textsc{Proof of Proposition \ref{prop:S3}}: Part (i) immediately follows
from the independence between $C$ and $X$. Conversely, for (ii) one starts
from the equality $E[g_{\varphi }(Z_{1},\theta _{0})]=0$ for each $\varphi=\varphi_x$ in $\mathcal{C}_{\mathrm{ind}}$, and concludes that $F(x)dG(x)=F_{0}(x)dG(x)$, $x \in \mathcal{S}_X$, by the fundamental theorem of calculus. Using condition $dG(x)>0$ for $x\in \mathcal{S}_X$, the proof is complete. $\square$

\bigskip

\noindent \textsc{Proof of Lemma \ref{lem:E6}}: The result follows from Lemma \ref{lem:B1} after noting that, with current status data, the function $K_s(z_1,z_2,\theta)$ with $z_i=(d_i,c_i)$, $i=1,2$, corresponds to $K_s(z_1,z_2,\theta)=K(c_1,c_2)s_\theta(c_1)'s_\theta(c_2).$ $\square$\medskip

\noindent \textsc{Proof of Lemma \ref{lem:E7}}: The proof is analogous to that of Lemma \ref{lem:E2}, where here equation (\ref{eq:CS_pi}) is the key. Note that

\begin{equation*}
E[(\varphi (C_{1})\delta _{1}-\psi _{\theta }^{\varphi}(Z_{1}))^{2}]
=E[\varphi (C_{1})^{2}(\delta _{1}-F_{\theta }(C_{1}))^{2}],
\end{equation*}

\noindent and this is upper bounded by $E[\varphi (C_{1})^{2}\delta _{1}]$ because

\begin{equation*}
E[\varphi (C_{1})^{2}(-2\delta _{1}F_{\theta }(C_{1})+F_{\theta }(C_{1})^{2})]
=-E[\varphi (C_{1})^{2}F_{\theta }(C_{1})^{2}]\leq 0.
\end{equation*}

\noindent This concludes the proof. $\square$

\fi
\end{document}